\documentclass[11pt]{article}
\pdfoutput=1
\usepackage[T1]{fontenc}
\usepackage{lmodern}
\usepackage{comment}
\usepackage{slantsc}
\usepackage[protrusion=true,expansion=true]{microtype}
\usepackage{breakcites}
\usepackage{amsmath,amssymb,amsfonts,amsthm, mathtools}
\usepackage{subcaption}
\usepackage{graphicx}
\usepackage{fullpage}
\usepackage{setspace}
\usepackage[backref=page]{hyperref}
\usepackage[nameinlink]{cleveref}
\usepackage{color}
\usepackage{wrapfig}
\usepackage{tikz}
\usetikzlibrary{decorations.pathreplacing,patterns}
\usepackage{algorithm}
\usepackage[noend]{algpseudocode}
\usepackage[framemethod=tikz]{mdframed}
\usepackage{xspace}
\usepackage{pgfplots}
\usepackage{framed}
\usepackage{thmtools}
\usepackage{thm-restate}
\usepackage{tabu}
\usepackage{fancyhdr}
\usepackage{placeins}
\usepackage{float}
\pgfplotsset{compat=1.5}

\newtheorem{theorem}{Theorem}[section]
\newtheorem{corollary}[theorem]{Corollary}
\newtheorem{lemma}[theorem]{Lemma}

\newtheorem{definition}[theorem]{Definition}
\newtheorem{remark}[theorem]{Remark}

\newtheorem{invariant}[theorem]{Invariant}
\newtheorem{observation}[theorem]{Observation}

\newtheorem{assumption}[theorem]{Assumption}

\newenvironment{proofof}[1]{\begin{trivlist} \item {\bf Proof
#1:~~}}
  {\qed\end{trivlist}}

\newcommand{\namedref}[2]{\hyperref[#2]{#1~\ref*{#2}}}

\def \EstLevel    {\mdef{\mathsf{EstLevel}}}

\def \EstDot    {\mdef{\mathsf{EstDot}}}
\def \EmbedEst    {\mdef{\mathsf{EmbedEst}}}

\def \DotEst    {\mdef{\mathsf{DotEst}}}
\def \FPSmall    {\mdef{\mathsf{FPSmall}}}

\def \MaintainList    {\mdef{\mathsf{MaintainList}}}

\def \GetIterate    {\mdef{\mathsf{GetIterate}}}

\newcommand\norm[1]{\left\lVert#1\right\rVert}
\newcommand{\PPr}[1]{\ensuremath{\mathbf{Pr}\left[#1\right]}}

\newcommand{\EEx}[2]{\ensuremath{\underset{#1}{\mathbb{E}}\left[#2\right]}}
\renewcommand{\O}[1]{\ensuremath{\mathcal{O}\left(#1\right)}}

\newcommand{\eps}{\varepsilon}

\def \calA    {\mdef{\mathcal{A}}}

\def \calC    {\mdef{\mathcal{C}}}

\def \calL    {\mdef{\mathcal{L}}}

\def \calP    {\mdef{\mathcal{P}}}

\def \calS    {\mdef{\mathcal{S}}}

\def \calX    {\mdef{\mathcal{X}}}

\def \bA    {\mdef{\mathbf{A}}}
\def \bB    {\mdef{\mathbf{B}}}
\def \bD    {\mdef{\mathbf{D}}}

\def \bU    {\mdef{\mathbf{U}}}

\def \ba    {\mdef{\mathbf{a}}}

\def \be    {\mdef{\mathbf{e}}}

\def \bq    {\mdef{\mathbf{q}}}
\def \bs    {\mdef{\mathbf{s}}}

\def \bu    {\mdef{\mathbf{u}}}
\def \bw    {\mdef{\mathbf{w}}}
\def \bv    {\mdef{\mathbf{v}}}
\def \bx    {\mdef{\mathbf{x}}}
\def \by    {\mdef{\mathbf{y}}}
\def \bz    {\mdef{\mathbf{z}}}

\newcommand{\mdef}[1]{{\ensuremath{#1}}\xspace}  
\DeclareMathOperator*{\Span}{Span}

\DeclareMathOperator*{\argmin}{argmin}

\DeclareMathOperator*{\polylog}{polylog}
\DeclareMathOperator*{\poly}{poly}

\DeclareMathOperator*{\EMD}{EMD}

\DeclareMathOperator*{\Cost}{cost}

\DeclareMathOperator*{\Proj}{Proj}

\newcommand{\ignore}[1]{}

\newif\ifnotes\notestrue 
\ifnotes
\newcommand{\samson}[1]{\textcolor{red}{{\bf (Samson:} {#1}{\bf ) }} \marginpar{\tiny\bf
             \begin{minipage}[t]{0.5in}
               \raggedright S:
            \end{minipage}}}
\newcommand{\elena}[1]{\textcolor{magenta}{{\bf (Elena:} {#1}{\bf ) }} \marginpar{\tiny\bf
    \begin{minipage}[t]{0.5in}
        \raggedright E:
    \end{minipage}}}
\newcommand{\david}[1]{\textcolor{purple}{{\bf (David:} {#1}{\bf ) }} \marginpar{\tiny\bf
             \begin{minipage}[t]{0.5in}
               \raggedright D:
            \end{minipage}}}

\else
\newcommand{\samson}[1]{}
\newcommand{\david}[1]{}
\fi

\makeatletter
\renewcommand*{\@fnsymbol}[1]{\textcolor{mahogany}{\ensuremath{\ifcase#1\or *\or \dagger\or \ddagger\or
 \mathsection\or \triangledown\or \mathparagraph\or \|\or **\or \dagger\dagger
   \or \ddagger\ddagger \else\@ctrerr\fi}}}
\makeatother

\providecommand{\email}[1]{\href{mailto:#1}{\nolinkurl{#1}\xspace}}

\definecolor{mahogany}{rgb}{0.75, 0.25, 0.0}
\definecolor{darkblue}{rgb}{0.0, 0.0, 0.55}
\definecolor{darkpastelgreen}{rgb}{0.01, 0.75, 0.24}
\definecolor{darkgreen}{rgb}{0.0, 0.2, 0.13}
\definecolor{darkgoldenrod}{rgb}{0.72, 0.53, 0.04}
\definecolor{darkred}{rgb}{0.55, 0.0, 0.0}
\definecolor{forestgreenweb}{rgb}{0.13, 0.55, 0.13}
\definecolor{greencss}{rgb}{0.0, 0.5, 0.0}
\definecolor{bleudefrance}{rgb}{0.19, 0.55, 0.91}
\definecolor{darkpastelpurple}{rgb}{0.59, 0.44, 0.84}
\definecolor{darkcerulean}{rgb}{0.03, 0.27, 0.49}

\hypersetup{
     colorlinks   = true,
     citecolor    = mahogany,
	 linkcolor	  = forestgreenweb
}

\fancypagestyle{pg}
{
\lhead{}
\rhead{}
\cfoot{--\ \thepage\ --}

}

\AtBeginDocument{%
  \DeclareFontShape{T1}{lmr}{m}{scit}{<->ssub*lmr/m/scsl}{}%
}
\begin{document}

\allowdisplaybreaks

\title{Adversarial Robustness for Small Frequency Moments and a Weak Equivalence Theorem for Turnstile Streams}

\author{Elena Gribelyuk \\ Princeton University \\ \email{eg5539@princeton.edu}  
\and 
Honghao Lin \\ Carnegie Mellon University \\ \email{honghaol@andrew.cmu.edu} 
\and 
David P. Woodruff \\ Carnegie Mellon University \\ \email{dwoodruf@andrew.cmu.edu}
\and
Huacheng Yu \\ Princeton University \\ \email{hy2@cs.princeton.edu}
\and
Samson Zhou \\ Texas A\&M University \\ \email{samsonzhou@gmail.com}}
\date{\today}

\maketitle{}

\begin{abstract}
We study adversarially robust algorithms for insertion-deletion (turnstile) streams, where future updates may depend on past algorithm outputs. While recent work achieved a robust $(1+\varepsilon)$-approximation for the second moment $F_2$ in polylogarithmic space, achieving high accuracy for other frequency moments remained a major open question; for $p \in [0, 2)$, including the fundamental distinct elements problem ($F_0$), only constant-factor approximations were known in sublinear space. We close this gap, showing that $(1+\varepsilon)$-approximate robustness can be achieved in polylogarithmic space for all $p \in [0, 2]$. Our approach generalizes the estimator-corrector-learner framework to non-Hilbert spaces by dynamically maintaining implicit isometric embeddings into $L_2$ and performing regularized kernel ridge regression over adaptively discovered hard queries, yielding the first insertion-deletion algorithms that approximate: (1) the $p$-th frequency moment $F_p$ up to a $(1+\varepsilon)$-factor in $\text{poly}(1/\varepsilon, \log n)$ space for all $p \in [0, 2]$, including the support size $F_0$, (2) metric and information-theoretic quantities, including the Earth Mover Distance (EMD) and $k$-median clustering cost over $[\Delta]^d$ up to an $\mathcal{O}(d \log \Delta)$-factor, and the Shannon entropy up to an $\varepsilon$-additive error, and (3) non-normed symmetric losses defined by Bernstein functions up to a $(1+\varepsilon)$-factor. For the $F_p$ moments, our algorithm is optimal up to $\text{poly}(1/\varepsilon, \log n)$ factors. Furthermore, we establish a weak equivalence between classical oblivious sketching and adversarial robustness. We prove that for any sub-multiplicative norm, the existence of an efficient classical linear sketch is equivalent to the existence of an efficient adversarially robust turnstile algorithm, up to polynomial factors, formalizing $L_1$ embeddability as the fundamental mechanism governing both models.
\end{abstract}

\pagestyle{pg}
\setcounter{page}{1}

\section{Introduction}
The streaming model of computation has emerged as a central framework for studying algorithms that process massive amounts of sequential data. 
In this setting, an underlying dataset is implicitly defined by a stream of updates and an algorithm must maintain a compact summary to approximately answer queries or report relevant statistics about the data. 
Often, the algorithm is restricted to a single pass over the stream and must use space sublinear in both the stream length $m$ and the universe size $n$. 
This framework has proven remarkably effective for reasoning about large-scale data processing, and has led to a rich body of work spanning algorithm design, lower bounds, and applications.

 In the classical streaming setting, a standard assumption in the literature is that the input stream is fixed in advance and independent of the algorithm’s internal randomness.
However, this assumption is often violated in modern applications, where the input stream may be generated adaptively based on previous outputs of the algorithm.
For instance, queries to a database may depend on previous answers, iterative scientific procedures may update their state using previously computed estimates, and feedback loops may arise in online recommendation platforms and financial systems. 
In such settings, the input stream may depend on the behavior of the algorithm itself (and on the internal randomness), thus raising the question of whether classical streaming guarantees continue to hold.

\paragraph{Adversarially robust streaming.}
To capture this phenomenon, \cite{Ben-EliezerJWY22} introduced the adversarially robust streaming model, where future updates may be chosen adaptively based on previous outputs of the algorithm. 
Specifically, at each time $t \in [m]$, the algorithm receives an update $(a_t, \Delta_t)$ to the underlying frequency vector $\bx \in \mathbb{Z}^n$, where $a_t \in [n]$ is an index and $\Delta_t \in \mathbb{Z}$ denotes an increment or decrement to coordinate $a_t$ of $\bx$ by $\Delta_t$. Concretely, the frequency vector $\bx$ is defined implicitly by $x_i^{(t)} = \sum_{t' \leq t: a_{t'} = i} \Delta_{t'}$. The algorithm must then output an estimate of a function $f(\bx^{(t)})$ of the current prefix. 
Crucially, the algorithm must remain accurate at every step, even when each update $(a_t, \Delta_t)$ is chosen adaptively based on the entire transcript of previous updates and responses of the algorithm. 

\begin{definition}
[\cite{Ben-EliezerJWY22}]
\label{def:model}
Let $f: \mathbb{Z}^n \to\mathbb{R}$ be a fixed function. 
A streaming algorithm $\calA$ is adversarially robust if, for any accuracy parameter $\eps\in(0,1)$, failure parameter $\delta\in(0,1)$, and $m=\poly(n)$, at each time step $t \in [m]$ the algorithm outputs an estimate $Z_t$ such that
\[\PPr{|Z_t - f(\bx^{(t)})|\le\eps\cdot f(\bx^{(t)})}\ge1-\delta.\]
\end{definition}

Informally, the model captures an interaction between a randomized algorithm $\calA$ and an adaptive adversary that constructs a sequence of updates $\{u_1,\ldots,u_m\}$, where each update $u_t$ may depend on all previously observed updates and outputs. 
The interaction proceeds as follows:
\begin{enumerate}
\item 
For each $t \in [m]$, the adversary selects an update $u_t$, where the choice of $u_t$ may be an arbitrary (possibly randomized) function of $u_1,\ldots,u_{t-1}$ and $Z_1,\ldots,Z_{t-1}$.
\item 
Upon receiving $u_t$, the algorithm $\calA$ updates its internal state to incorporate $u_t$.
\item 
The algorithm then outputs an estimate $Z_t$ of $f(\bx^{(t)})$ prior to the next update.
\end{enumerate}
Designing algorithms that remain accurate under such adaptivity, while still using limited space and a single pass, has emerged as a central challenge in recent work~\cite{MironovNS11,MitrovicBNTC17,AvdiukhinMYZ19,Ben-EliezerY20,CherapanamjeriN20,AlonBDMNY21,BravermanHMSSZ21,AjtaiBJSSWZ22,Ben-EliezerJWY22,Ben-EliezerEO22,BeimelKMNSS22,ChakrabartiGS22,MenuhinN22,NaorO22,AssadiCGS23,CherapanamjeriSWZZZ23,CohenNSS23,DinurSWZ23,PengR23,WoodruffZZ23b,AhmadianCohen2024,FengJW24,GribelyukLWYZ24,WoodruffZ24,CohenSS25,GribelyukLWYZ25,AndoniHKO26,Ben-EliezerOS26,CohenGNS26,CohenNSSS26,GalKSY26,GribelyukLWYZ26,LinSWXZ26}. 

\paragraph{Norm and distinct element estimation.}
A canonical class of problems in the streaming model is the estimation of the frequency moments of the underlying data. 
These moments capture a variety of fundamental statistics of the data stream and are formally defined as follows. 
For a frequency vector $\bx \in \mathbb{Z}^n$, the $p$-th frequency moment is given by $F_p(\bx)=\sum_{i\in[n]}|x_i|^p$ for $p>0$, while $F_0(\bx)=|\{i\in[n] : x_i \neq 0\}|$ denotes the number of distinct elements in the stream. 
These quantities capture a range of important properties of the data; for example, $F_1$ corresponds to the total mass of the stream, $F_2$ is closely related to the collision probability, and $F_0$ measures the support size of the underlying vector. 
More broadly, the family of $F_p$ moments for $p>0$ provides a rich spectrum of statistics, and via the identity $\|\bx\|_p = (F_p(\bx))^{1/p}$, directly corresponds to the problem of estimating $L_p$ norms.

These problems have been studied extensively in the classical (non-adaptive) streaming model since the seminal work of \cite{AlonMS99}, with tight space bounds known for a wide range of parameters~\cite{AlonMS99, Indyk06,Li08,KaneNW10a,KaneNW10b,KaneNPW11}. 
For the distinct elements problem corresponding to $p =0$, it is known that $\tilde{\Theta}\left(\frac{1}{\eps^2}\log n\right)$ bits of space is both necessary~\cite{AlonMS99,KaneNW10a} and sufficient~\cite{KaneNW10b}. 
Similarly for $p\in(0,2]$, it is known that for all $\eps\in(n^{-1/2+c},1)$ for any constant $c>0$, $\Theta\left(\frac{1}{\eps^2}\log n\right)$ bits of space is both necessary~\cite{AlonMS99,KaneNW10a} and sufficient~\cite{AlonMS99,KaneNW10a,KaneNPW11}. 
Finally, for $p>2$, it is known that $\tilde{\Theta}\left(\frac{1}{\eps^2}\cdot n^{1-2/p}\right)$ bits of space is both necessary~\cite{Ganguly12,WoodruffZ21b} and sufficient~\cite{Ganguly11,GangulyW18}. 
Similar bounds are known for $p\in[0,2]$ and integer $p>2$ in the adversarially robust model when the adversary is restricted to $\Delta_t\ge 0$ for all $t\ge 0$, i.e., the insertion-only setting~\cite{WoodruffZ21}. 

In contrast, significantly less is understood in the adversarially robust setting in the presence of deletions. 
A long line of work has shown that linear sketching, which underlies essentially all known turnstile streaming algorithms in the non-adaptive setting~\cite{LiNW14,AiHLW16,HosseiniLY19,KallaugherP20,JiangLY26}, is fundamentally limited in this setting, and cannot achieve even a constant-factor approximation to these quantities unless the sketching dimension is $r=\Omega(n)$, i.e., the sketch requires near-linear space~\cite{HardtW13,GribelyukLWYZ24,GribelyukLWYZ25}. 
This has led to a significant gap in our understanding of the power and limitations of turnstile streaming algorithms under adaptive updates.

Recent work has begun to close this gap by demonstrating that 
it is possible to design sublinear-space adversarially robust algorithms in the turnstile model. 
In particular, it is now known that a $(1+\eps)$-approximation for the $L_2$ norm can be achieved using $\poly\left(\frac{1}{\eps},\log n\right)$ space~\cite{GribelyukLWYZ26}. 
For $p\neq 2$ and for the distinct elements problem, however, only constant-factor approximations were previously known in sublinear space, leaving a significant gap in our understanding of the achievable approximation guarantees under adversarial updates.

\subsection{Our Contributions}
In this work, we significantly advance the understanding of adversarial robustness in the insertion-deletion streaming model. 
Our main results establish that, despite strong impossibility results for linear sketches~\cite{HardtW13,GribelyukLWYZ24,GribelyukLWYZ25}, accurate estimation of a broad class of frequency moments is still achievable in sublinear space under adaptive updates. 
We further develop a conceptual connection between adversarial robustness and classical sketching, showing that these two models are, in a weak sense, equivalent up to polynomial losses. 
We also show that our robust $F_p$ estimation primitives can be combined with standard metric and norm embeddings to obtain the first adversarially robust sublinear-space algorithms for a variety of other problems. 
These include geometric and clustering objectives such as Earth Mover Distance and $k$-median cost, as well as statistical measures like Shannon entropy. 

\paragraph{$F_p$ moment estimation for $p \in [0,2]$.}
Our first main result resolves the approximability of frequency moments and distinct elements in the adversarially robust turnstile model for the entire regime $p \in [0,2]$. 
Prior to our work, $(1+\eps)$-approximation was only known for $p=2$, while for all other values of $p \in [0,2)$ (including the distinct elements problem $F_0$), only constant-factor approximations were known in sublinear space~\cite{GribelyukLWYZ26}.

\begin{restatable}[Robust $F_p$ moment estimation]{theorem}{thmlpsmall}
\label{thm:lp:small}
Given a constant $p\in[0,2]$ and any $\eps\in(0,1)$, there exists an adversarially robust insertion-deletion streaming algorithm on a stream of length $m$ that with high probability, outputs a $(1+\eps)$-approximation to the $F_p$ moment at all times, for the underlying frequency vector of universe size $n$. 
For $m=\poly(n)$, the algorithm uses $\poly\left(\frac{1}{\eps},\log n\right)$ bits of space. 
\end{restatable}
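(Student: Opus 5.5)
We will follow the estimator--corrector--learner template that gave the robust $(1+\eps)$-approximation for $F_2$ in \cite{GribelyukLWYZ26}. The new ingredient is to replace the Hilbert-space structure of $L_2$ with an implicit isometric embedding of $\ell_p^n$, for $p\in(0,2]$, into a Hilbert space. The relevant fact is that $\|\bx-\by\|_p^{p/2}$ is a metric of negative type, so $(\ell_p,\|\cdot\|_p^{p/2})$ embeds isometrically into $L_2$. Equivalently, the kernel $K(\bx,\by)=\tfrac12\bigl(F_p(\bx)+F_p(\by)-F_p(\bx-\by)\bigr)$ is positive semidefinite, because $F_p$ is conditionally negative definite. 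For $F_0$ we use the analogous fact that Hamming distance on supports is $L_1$- and hence $L_2$-negative type. We can also treat $F_0$ as the limit of $F_p$ with $p=\Theta(\eps/\log n)$, since $|x_i|^p\in[1,1+\eps]$ for nonzero integer entries of size at most $\poly(n)$.

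The algorithm keeps a robust constant-factor estimator from \cite{GribelyukLWYZ26}, which is already known for all $p\in[0,2]$. It uses this estimator to split the stream into $O(\eps^{-1}\log n)$ epochs in which $F_p$ changes by at most a $(1+\eps)$ factor. Inside an epoch it answers with a frozen value, and queries a fresh $(1+\eps)$-accurate oblivious sketch (the $p$-stable sketch of \cite{Indyk06,KaneNW10a}) only when the corrector signals drift. To keep the adversary from learning the sketch's randomness through these queries, we maintain a list $\calL$ of ``hard'' past vectors $\bx^{(t_1)},\dots,\bx^{(t_k)}$. For each we store the kernel values $K(\bx^{(t_i)},\bx^{(t_j)})$, estimated once via sketches of differences; we can store the sketches $S\bx^{(t_i)}$, and linearity gives $S(\bx^{(t_i)}-\bx^{(t_j)})$. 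We then predict $F_p(\bx^{(t)})$ by regularized kernel ridge regression over $\calL$, that is, by the norm of the projection of the embedded point $\phi(\bx^{(t)})$ onto $\Span\{\phi(\bx^{(t_i)})\}$. The needed inner products are $K(\bx^{(t)},\bx^{(t_i)})$, obtained from sketches of $\bx^{(t)}-\bx^{(t_i)}$.

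The key steps are as follows.
\begin{enumerate}
\item Establish the embedding and show that the additive errors of the difference estimates, relative to $F_p(\bx^{(t)})$, translate into bounded error in the projection. This needs regularization to control the condition number of the Gram matrix.
\item Show that whenever the learner's prediction is off by more than $\eps F_p$, the residual $\phi(\bx^{(t)})-\Proj\phi(\bx^{(t)})$ has norm at least $\poly(\eps)\sqrt{F_p}$. The point is then added to $\calL$.
\item Bound $|\calL|$ by $\poly(1/\eps,\log n)$ through a potential argument. This is the main obstacle.
\end{enumerate}

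For step 3, in the $F_2$ case one argues that within an epoch all points lie in a ball of radius $O(\sqrt{F_2})$. Each newly added point increases a log-volume or effective-dimension potential, for example $\log\det$ of the regularized Gram matrix. The potential is bounded because the embedded points have bounded norm and the regularization scale is $\poly(\eps)F_p$. We plan the same elliptical-potential argument here. It needs only inner products and norms, so it transfers verbatim to the feature space of $K$. It yields $|\calL|\le\poly(1/\eps)\log n$ per epoch, for at most $O(\eps^{-1}\log n)$ epochs.

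Robustness then follows because the fresh sketch's output is revealed only at the $\poly(1/\eps,\log n)$ hard times. At all other times the output is a deterministic function of $\calL$ and the stream. So we can use sketch switching or differential-privacy aggregation over $\poly(1/\eps,\log n)$ independent $p$-stable sketches, each of $O(\eps^{-2}\log n)$ bits with $\poly(n)$-precision entries and derandomized. Applying a union bound over $m=\poly(n)$ times gives total space $\poly(1/\eps,\log n)$.
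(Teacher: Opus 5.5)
Your embedding and kernel observations, including the treatment of $F_0$, match the paper's ingredients. The learner you build on them has a genuine gap, though. It predicts $F_p(\bx^{(t)})$ by projecting the \emph{current, moving} point $\phi(\bx^{(t)})$ onto $\Span\{\phi(\bx^{(t_i)})\}$, and neither the robustness claim nor the list bound survives this.

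\textbf{Robustness.} The kernel values you need are $K(\bx^{(t)},\bx^{(t_i)})=\tfrac12\bigl(F_p(\bx^{(t)})+F_p(\bx^{(t_i)})-F_p(\bx^{(t)}-\bx^{(t_i)})\bigr)$. These contain the target $F_p(\bx^{(t)})$ itself. They also require evaluating the sketch on the adaptively chosen vectors $\bx^{(t)}$ and $\bx^{(t)}-\bx^{(t_i)}$ at \emph{every} step. So the output at non-hard times is not a deterministic function of $\calL$ and the stream. It is post-processing of a fixed sketch of the adaptive input, which is exactly the leakage your transcript or sketch-switching argument would need to exclude.

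\textbf{List size.} Step 3 also fails. The elliptical (log-det) potential bounds the number of large-residual additions only in terms of the dimension, and the feature space has dimension at least $n$. Concretely, let $\bw$ be supported away from coordinates $1,\dots,n/2$. Let the stream walk through $\bw+c\be_1,\bw+c\be_2,\ldots$, deleting one spike and inserting the next, with $c,\bw$ chosen so that $|c|^p=2\eps F_p(\bw)$. Then $F_p$ stays within a $(1+2\eps)$ factor. Yet $\phi(\bw+c\be_i)=\phi(\bw)+\mathbf{d}_i$, where the $\mathbf{d}_i$ are pairwise orthogonal, orthogonal to $\phi(\bw)$, and satisfy $\|\mathbf{d}_i\|_2^2=|c|^p$. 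Hence each new point's (regularized) projection onto the span of any earlier points of this family has squared norm at most $F_p(\bw)$. Every new point is therefore flagged, and $|\calL|$ reaches $\Omega(n)$ inside a single ``epoch''.

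\textbf{Epochs.} The epoch decomposition is itself unavailable in turnstile streams. $F_p$ can cross a $(1+\eps)$ threshold $\Theta(m)$ times, and a constant-factor estimator cannot even detect such crossings. The $O(\eps^{-1}\log n)$ count in the paper is over magnitude scales, run as parallel copies, not over time.

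The idea you are missing is what makes the paper's learner dimension-free: it never learns the moving vector. Inside a block at level $i$ it writes $\bx=\bz-\bq$, with $\bz$ the \emph{fixed} prefix. It learns $f(\bz)$ by an iterate $\bv=\sum_j\alpha_j f(\bq_j)$ in the span of past hard in-block suffixes, with coefficients summing to one and a ridge term $\nu^2\|\alpha\|_2^2$ keeping them bounded.

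The estimator $\|f(\bz)-\bv\|_2^2+\|\bv-f(\bq)\|_2^2$ then splits into two parts:
\begin{itemize}
\item a part that touches the prefix sketches only through the at most $L$ distinct iterates;
\item a part involving only the suffix. Via $\|f(\bq_j)-f(\bq)\|_2^2=\|f(\bq_j-\bq_L)-f(\bq_{L+1})\|_2^2$, it collapses to a single recursive call on independent sketches at the next level. This is necessary because $\bv$ has no preimage.
\end{itemize}

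The corrector, a separate sketch of $\bz-\bq$, flags the estimator only when $|\langle f(\bz)-\bv,\,\bv-f(\bq)\rangle|$ is large, i.e.\ when the query is correlated with the residual of a fixed vector. This forces $\|f(\bz)-\bv\|_2^2$ to shrink by a $(1-\Omega(\eta^2))$ factor. It gives $O(\eta^{-2}\log n)$ hard queries per block with $\eta=\Theta(\eps/H)$, independent of dimension. Robustness then follows from counting corrector transcripts across the $H=O(\log n)$ levels rather than from epochs. The multiplicative guarantee comes from running additive-error copies at each of the $O(\eps^{-1}\log n)$ scales.
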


\Cref{thm:lp:small} provides a unified and nearly optimal guarantee for all $p \in [0,2]$, matching the space complexity of classical (non-adaptive) streaming up to polynomial factors~\cite{AlonMS99,KaneNW10a,KaneNW10b,KaneNPW11}. 
In particular, \Cref{thm:lp:small} gives the first $(1+\eps)$-approximation for the distinct elements problem in the adversarially robust turnstile model using polylogarithmic space, thereby closing a central open question in the area.

\paragraph{Equivalence between robustness and sketching.}
Our second main result provides a structural explanation for the power and limitations of adversarially robust streaming algorithms. 
As previously discussed, linear sketching is known to be inherently non-robust to adaptive inputs, yet it underlies essentially all optimal oblivious streaming algorithms. 
We show that this tension can be formalized via a general equivalence theorem.

A key conceptual ingredient underlying this result is again the central role of $L_1$. 
Building on \Cref{thm:lp:small}, which gives a robust and accurate estimator for the $L_1$ norm, we leverage a foundational connection between sketching and metric embeddings. 
Specifically, a seminal result of \cite{AndoniKR18} establishes that for finite-dimensional normed spaces, admitting an efficient classical linear sketch is mathematically equivalent to admitting a low-distortion embedding into $L_{1-\eps}$ (and into $L_1$ for norms closed under sum-product). 
By this characterization, \emph{any} normed space that admits an efficient oblivious sketch must embed into such spaces. 
This allows us to simulate arbitrary linear sketches in the adversarially robust setting by composing these guaranteed embeddings with our robust $L_p$ estimation primitives. 
Conversely, any adversarially robust algorithm can be converted into a classical sketch by compressing its interaction transcript. 
Together, these reductions yield a general equivalence between the two models.

\begin{restatable}[Weak equivalence]{theorem}{thmequivalence}
\label{thm:equivalence}
Let $\calX = (\mathbb{R}^n, \|\cdot\|_{\calX})$ be a finite-dimensional normed space. 
Then:
\begin{enumerate}
\item 
\textbf{Forward direction (oblivious to robust):} 
If $\calX$ admits a linear sketch of size $s$ bits solving the $D$-gap norm problem, then $\calX$ admits an adversarially robust turnstile streaming algorithm in the advice model that achieves an approximation factor of $\O{sD}$ while using $\poly(s, \log n)$ bits of space. 
\item 
\textbf{Reverse direction (robust to oblivious):} 
Conversely, suppose $\calX$ admits an adversarially robust turnstile streaming algorithm that uses $s$ bits of space and achieves a multiplicative $D$-approximation for sufficiently large data streams.  
Then there exists a linear sketch for $\calX$ in the classical oblivious setting that uses $\O{s\log n}$ bits of space. 
\end{enumerate}
\end{restatable}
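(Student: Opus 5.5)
For the forward direction, I will go through the sketching-to-embedding characterization of \cite{AndoniKR18}. Suppose $\calX$ has a linear sketch of $s$ bits that solves the $D$-gap norm problem. By \cite{AndoniKR18}, $\calX$ then embeds linearly into $L_{1-\eps}$ with distortion $\O{sD}$. The target can be taken to be $\ell_{1-\eps}^{N}$ for some $N$ that depends only on $\calX$, and I use $\eps = \Theta(1/\log n)$ or a fixed constant. I call this map $T:\mathbb{R}^n\to\ell_{1-\eps}^N$.

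Since $T$ depends only on $\calX$ and not on the stream, it can be given as advice. This is exactly why the statement is in the advice model: neither $T$ nor its coordinates need to be computable in small space, only queryable.

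The key point is that $T$ is linear. A turnstile update $(a_t,\Delta_t)$ to $\bx$ therefore becomes the update $\Delta_t\cdot T\be_{a_t}$ to $\by=T\bx$, which is a batch of turnstile updates to $N$ coordinates. I feed these into the robust $F_p$ algorithm of \Cref{thm:lp:small} with $p=1-\eps\in[0,2]$, using a constant accuracy parameter. Its output, raised to the power $1/p$, approximates $\|T\bx\|_{1-\eps}$ within a constant factor at all times, with high probability and against adaptive updates. Composing this with the distortion of $T$ gives an $\O{sD}$-approximation to $\|\bx\|_\calX$.

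The obstacle in this direction is space and granularity, since \Cref{thm:lp:small} is stated for integer frequency vectors and universe size $n$. I will deal with it in three steps.
\begin{itemize}
\item I round the entries of $T$ to multiples of $1/\poly(n)$ and rescale, so that $\by$ is integral with $\poly(n)$-bounded entries.
\item Rounding costs only an additive $1/\poly(n)$ error, and integrality of $\bx$ keeps $\|\bx\|_\calX$ polynomially bounded away from zero when $\bx\neq 0$, so this error is absorbed.
\item I argue that $N$ may be taken to be at most $\exp(\poly(s,\log n))$, so that $\log N=\poly(s,\log n)$ and the total space is $\poly(s,\log n)$. If the AKR embedding does not bound $N$ directly, I would bound it by a Carathéodory/sparsification step on a net of the unit ball.
\end{itemize}
Robustness carries over because the adversary's view of the outputs is a deterministic function of the outputs of the robust $F_p$ algorithm, and the adversary's updates to $\bx$ induce adaptively chosen updates to $\by$. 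Each adversary against our algorithm therefore yields an adversary against the robust $F_p$ algorithm.

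For the reverse direction, I use the known fact that turnstile streaming algorithms on sufficiently long streams can be converted into linear sketches, following Li--Nguyen--Woodruff and Ai et al. An adversarially robust algorithm is in particular a correct oblivious turnstile algorithm using $s$ bits, and the hypothesis about sufficiently large data streams is exactly the condition under which this characterization applies. The characterization gives a path-independent automaton. From it one obtains a linear sketch $\bA\bx \bmod$ a lattice, whose state can be written with $\O{s\log n}$ bits, since each of the $\O{s}$ rows carries $\O{\log n}$-bit entries for $\poly(n)$-bounded inputs.

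The steps in the reverse direction are as follows.
\begin{itemize}
\item First, I check that robustness is not needed; oblivious correctness suffices.
\item Next, I invoke the reduction to path-independent algorithms, paying a $\log n$ factor for fixing randomness against $\poly(n)$-many inputs or for encoding integers.
\item Finally, I verify that the $D$-approximation guarantee is preserved, because the sketch decodes to the same output distribution.
\end{itemize}

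The main difficulty in the whole proof is the forward direction's control of the embedding dimension and of the precision of $T$. The rest is composing linearity with the robust $F_p$ primitive.
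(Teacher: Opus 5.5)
Your route matches the paper's. \Cref{thm:akr} gives a linear embedding $T$ of $\calX$ into $L_p$ with $p<1$ and distortion $\O{sD}$. $T$ is discretized and supplied as advice, and the robust $F_p$ algorithm of \Cref{thm:lp:small} is run on the virtual stream $T\bx$. Your reverse direction is also the paper's: the LNW/AHLW conversion of a (robust, hence oblivious) turnstile algorithm into a linear sketch.

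The gap is in the step you yourself call the main difficulty, bounding the target dimension $N$. Your target $\log N=\poly(s,\log n)$ is fine, but the fallback you propose cannot reach it.
\begin{itemize}
\item A constant-scale net of the unit sphere of an $n$-dimensional space has $\exp(\Theta(n))$ points. Exact conic Carath\'eodory in $\mathbb{R}^{|\mathrm{net}|}$ only yields a discrete measure with up to $|\mathrm{net}|$ atoms. So $N=\exp(\Theta(n))$, hence $\log N=\Theta(n)$, and the robust $F_p$ algorithm over universe size $N$ would use $\poly(n)$ bits, which is nothing sublinear.
\item Random sampling with a union bound over the net does not fix this by itself. Without a change of density, the ratios $|(T\bx)_\omega|^p/\|T\bx\|_p^p$ are unbounded and the sample need not concentrate.
\end{itemize}
The missing ingredient is the subspace discretization theorem for $L_p$ with $0<p<1$ (Schechtman, Talagrand, Schechtman--Zvavitch, i.e., Lewis-weight sampling). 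It embeds any $n$-dimensional subspace of $L_p(\mu)$ into $\ell_p^M$ with constant distortion and $M=\O{n\log n}$. This is what \Cref{lem:embedding:properties} uses, and it gives $\log M=\O{\log n}$ with no dependence on $s$.

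Your precision step also rests on a claim that is false for general norms. Integrality of $\bx$ does not keep $\|\bx\|_\calX$ polynomially far from zero relative to the scale of $T$. For example, on $\mathbb{R}^2$ take $\|\bx\|_\calX=|x_1-x_2|+\delta|x_1+x_2|$ with $\delta\ll 1/\poly(n)$. Then $\|\be_1\|_\calX\approx 1$ while $\|(k,k)\|_\calX=2\delta k$. So $T\be_1$ and $T\be_2$ must nearly cancel, and rounding the entries of $T$ to multiples of $1/\poly(n)$ can swamp the true value.

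What you actually need is two things:
\begin{itemize}
\item The entries of $T$ must be $\poly(n)$-bounded. This follows from $|T_{ij}|\le\|T\be_j\|_p\le\O{sD}\cdot\|\be_j\|_\calX$ once $\max_j\|\be_j\|_\calX$ is controlled.
\item $\|\bx\|_\calX$ must dominate the coordinates of $\bx$ up to a $\poly(n)$ factor.
\end{itemize}
The paper obtains both from Auerbach's lemma together with its explicit basis-condition-number hypothesis (\Cref{assumption:basis:condition}). Your argument needs the same hypothesis stated as an assumption; it cannot be derived from integrality.
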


\Cref{thm:equivalence} establishes a weak equivalence between the adversarially robust and oblivious models: any progress in one model translates moderately to the other. 
In particular, \Cref{thm:equivalence} shows that the existence of efficient adversarially robust algorithms implies new linear sketches, while classical sketching lower bounds extend to the robust setting. 
Thus, this connection provides a unifying framework for understanding which problems admit efficient robust algorithms, and explains the inherent difficulty of extending linear sketches to the adversarial setting. 

A central theme underlying both our algorithms and our structural results is the role of $L_1$ as a unifying geometry. 
At a technical level, our $(1+\eps)$-approximation for $F_p$ in \Cref{thm:lp:small} is obtained by reducing to robust $F_1$ estimation via suitable transformations. 
More broadly, rather than $L_1$ merely serving as a convenient target space for a wide range of embeddings arising in theoretical computer science, the equivalence of~\cite{AndoniKR18} demonstrates that $L_1$ (and $L_{1-\eps}$) embeddability is the fundamental and unavoidable mechanism for sketching norms. 
This perspective is powerfully reinforced by \Cref{thm:equivalence}, whose proof proceeds by extracting such embeddings from classical sketches and simulating them in the adversarially robust setting. 
Taken together, these results establish a compelling triad---adversarial robustness, oblivious sketching, and $L_1$ embeddability are intrinsically tied together---and suggest that robust $L_1$ estimation is a fundamental primitive from which more general adversarially robust streaming algorithms can be derived.

\paragraph{Additional applications.}
Building on this viewpoint, we obtain a number of additional applications by combining known embeddings into $L_p$ with our robust estimation framework. 
We highlight a few representative examples below.

Our first application is to the Earth Mover Distance (EMD), a central metric for comparing distributions over discrete domains. 
It is well-known that EMD over the grid $[\Delta]^d$ admits an embedding into $L_1$ with distortion $\O{d \log \Delta}$. 
By combining this embedding with our robust $F_1$ estimation primitives, we obtain an adversarially robust approximation algorithm for EMD in the turnstile model.

\begin{restatable}[Robust Earth Mover Distance]{theorem}{thmemd}
\label{thm:emd}
There exists an adversarially robust turnstile streaming algorithm of length $m$ over universe $[\Delta]^d$ that computes a $\O{d\log\Delta}$-approximation to the Earth Mover Distance $\|\bx\|_{\EMD}$ over the grid $[\Delta]^d$, using $\poly(d,\log(m\Delta))$ bits of space. 
\end{restatable}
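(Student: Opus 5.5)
The approach composes a \emph{linear} embedding of EMD into $L_1$ with the robust $F_1$ estimator of \Cref{thm:lp:small}.

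\textbf{Step 1: the embedding.} I would use the randomly shifted quadtree (Charikar / Indyk--Thaper) embedding. Impose a hierarchy of grids on $[\Delta]^d$ with side lengths $2^\ell$ for $\ell=0,1,\ldots,\log\Delta$, all shifted by a common random vector $\bv$. Map $\bx$ to the vector $\phi(\bx)$ whose coordinate for a cell $C$ at level $\ell$ is $d\cdot 2^\ell\cdot \sum_{y\in C}x_y$. The map $\phi$ is linear in $\bx$, so each stream update $(y,\Delta_t)$ becomes one update of weight $d2^\ell\Delta_t$ at each of the $\log\Delta+1$ levels. Standard analysis gives two bounds:
\begin{itemize}
\item deterministically, $\|\bx\|_{\EMD}\le\|\phi(\bx)\|_1$;
\item in expectation over $\bv$, $\mathbb{E}_{\bv}\|\phi(\bx)\|_1\le \O{d\log\Delta}\|\bx\|_{\EMD}$.
\end{itemize}
The target dimension is $N=\sum_\ell(\Delta/2^\ell+1)^d=\poly(\Delta^d)$, so $\log N=\O{d\log\Delta}$. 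I would then feed the transformed stream to the robust $F_1$ algorithm of \Cref{thm:lp:small} with constant $\eps$. Coordinates are integers bounded by $\poly(m\Delta d)$, so the space is $\poly(\log N,\log(m\Delta))=\poly(d,\log(m\Delta))$.

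\textbf{Step 2: the main obstacle.} The difficulty is that the distortion bound for $\phi$ holds only in expectation over the random shift $\bv$. The adversary sees outputs correlated with $\bv$ and could steer $\bx$ into configurations where the shift is bad. I would remove $\bv$ from the adversary's reach in one of two ways.

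\begin{itemize}
\item \emph{Option 1: make the upper bound deterministic.} Average over a fixed, data-independent family of shifts. One choice is the $d$ shifts of $j\cdot\Delta/(d+1)$ in every coordinate, using a pigeonhole argument as in the deterministic "$\O{d}$ shifts suffice" constructions. Alternatively, sum over all $\poly$ shifts in a small shift set per level, using a concatenated embedding $\Phi(\bx)=\frac1k\bigoplus_{j}\phi_{\bv_j}(\bx)$. Then $\|\Phi(\bx)\|_1$ is an $\O{d\log\Delta}$-distortion estimate for \emph{every} $\bx$ with no hidden randomness, and the only randomness left is internal to the robust $F_1$ algorithm, which already tolerates adaptivity.
\item \emph{Option 2: if no such fixed family is available.} Replace the shifted quadtree by the deterministic embedding from the Kantorovich--Rubinstein dual via a fixed hierarchy of grids at all $2^d$-free offsets. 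Keep the distortion at $\O{d\log\Delta}$ by charging each transport unit of length $r$ to the levels with $2^\ell\gtrsim r$ where it is split.
\end{itemize}

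I would first try Option 1: take the shifts $\bv_j=j\cdot 2^\ell/(d+1)\cdot\mathbf{1}$ at level $\ell$. For two points at distance $r\le 2^\ell/(2d)$, at most one of the $d+1$ shifts separates them in each coordinate-crossing direction. This gives the averaged separation probability $\O{dr/2^\ell}$ deterministically, and summing over levels yields the $\O{d\log\Delta}$ bound.

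\textbf{Step 3: assembly.} With a deterministic linear $\Phi$ in hand, the robust algorithm for $\bx$ simply simulates the robust $F_1$ estimator on the stream $\Phi(\bx^{(t)})$ and outputs its $(1\pm\frac12)$-estimate. Since $\Phi$ is fixed, adaptivity with respect to $\bx$ is adaptivity with respect to $\Phi(\bx)$. \Cref{thm:lp:small} therefore guarantees correctness at all times with high probability, which gives the claimed $\O{d\log\Delta}$-approximation.
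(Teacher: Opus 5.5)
Your Steps 1 and 3 follow the paper's route: a fixed linear EMD-to-$L_1$ map whose image is fed as a virtual turnstile stream to the robust $F_1$ estimator of \Cref{thm:lp:small}. The gap is in Step 2, the derandomization, which is the actual content of the proof. The family you commit to does not work.

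Averaging over $d+1$ shifts cannot give an averaged separation probability of $\O{dr/2^\ell}$. At each level, the averaged separation is a multiple of $\frac{1}{d+1}$. Your pigeonhole fact (at most one shift cuts a short coordinate interval) therefore only bounds it by $\frac{1}{d+1}$ per coordinate, which is far above $dr/2^\ell$ once $2^\ell\gg d^2r$.

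Concretely, your family contains the unshifted grid $j=0$. Take $x=(\Delta/2-1,0,\ldots,0)$, $y=(\Delta/2,0,\ldots,0)$ and $\bx=\be_x-\be_y$, so $\|\bx\|_{\EMD}=1$. The unshifted grid separates $x$ and $y$ at every level $\ell<\log\Delta$, so
\[
\|\Phi(\bx)\|_1\ \ge\ \frac{1}{d+1}\sum_{\ell<\log\Delta}2d\cdot 2^\ell\ \ge\ \Delta-1 .
\]
More generally, for any family of $k$ shifts, the heaviest shift splits some unit-distance pair at level $\log\Delta-1$. That forces $\|\Phi(\bx)\|_1\ge d\Delta/k$ while $\|\bx\|_{\EMD}=1$. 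So no family of size $\O{d}$ can give the $\O{d\log\Delta}$ upper bound. A single fixed hierarchy fails too, which rules out Option 2.

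The ``$d+1$ shifts suffice'' lemmas only guarantee that \emph{some} shift is good for a given point. That is useless for a linear $L_1$ map, whose norm adds up the contributions of every shift. Separately, your level-dependent shifts $j2^\ell/(d+1)\cdot\mathbf{1}$ produce non-nested grids for $j\neq 0$. The tree-metric proof of $\|\bx\|_{\EMD}\le\|\phi(\bx)\|_1$ then does not apply as stated.

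The missing idea is that you do not need a small family at all. The paper averages over all $n=\Delta^d$ shifts,
\[
\bA\bx=\frac1n\bigoplus_{\bs\in[\Delta]^d}\bA_{\bs}\bx,
\]
where each $\bA_{\bs}$ is one nested shifted quadtree. Then $\|\bA\bx\|_1=\mathbb{E}_{\bs}\|\bA_{\bs}\bx\|_1$ holds exactly for a uniform shift. Hence the deterministic lower bound of each tree and the expected $\O{d\log\Delta}$ upper bound become a ``for all $\bx$'' guarantee with no hidden randomness.

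The dimension grows to $M=\O{n^2}$. This is affordable because the robust estimator's space is polylogarithmic in the dimension, so the cost is only $\log M=\O{d\log\Delta}$, and your Step 3 then goes through unchanged. Your passing remark about summing over a larger shift set points in this direction. However, by the $d\Delta/k$ bound above, the set must grow with $\Delta$, and what makes that acceptable is exactly this polylog-in-$M$ space dependence.
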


Our second application is to the $k$-median clustering objective, which can also be reduced to $L_1$ via standard embeddings. 
Using our framework, we obtain adversarially robust algorithms for maintaining approximate clustering solutions on insertion-deletion streams. 

\begin{restatable}[Robust $k$-Median Clustering]{theorem}{thmkmedian}
\label{thm:kmedian}
There exists an adversarially robust turnstile streaming algorithm that outputs a set of $k$ centers that provide a $\O{d\log\Delta}$-approximation to the $k$-median clustering cost over $[\Delta]^d$, using $\poly(k,d,\log\Delta)$ bits of space.
\end{restatable}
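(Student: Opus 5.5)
We reduce robust $k$-median to robust $F_1$ estimation (\Cref{thm:lp:small} with $p=1$) via the standard randomly shifted quadtree embedding of $[\Delta]^d$. This tree metric dominates the Euclidean (or $\ell_1$) metric and distorts distances by $\O{d\log\Delta}$ in expectation. Because the quadtree is data-oblivious, its randomness can be fixed and hidden from the adversary: the algorithm's outputs will depend on it only through robust estimates. In tree metrics, the $k$-median cost of a fixed set of centers $C$ decomposes level by level. At level $\ell$, with cell side $\Delta/2^\ell$, a point pays roughly $2^{-\ell}\Delta\sqrt d$ exactly when its level-$\ell$ cell contains no center of $C$. Hence
\[
\Cost_T(X,C)=\sum_{\ell}\frac{\Delta\sqrt d}{2^\ell}\sum_{\text{cells } c\text{ at level }\ell,\ c\cap C=\emptyset} x^{(\ell)}_c ,
\]
where $x^{(\ell)}_c\ge 0$ is the number of points in cell $c$. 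For each level, the cell-count vector is updated linearly by the turnstile stream, so $\Cost_T(X,C)$ is the $L_1$ norm of a linear image of the frequency vector. This linear image is the concatenation of the weighted level vectors with the coordinates of center-containing cells zeroed out.

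The difficulty is that $C$ itself must be output and may change with the stream, so we cannot fix a single linear map in advance. The plan is to use the standard tree-metric characterization of optimal $k$-median: the best $k$ centers in an HST are obtained greedily, or by a simple DP, from the \emph{heavy cells}, namely those whose mass $x^{(\ell)}_c$ dominates the cost of leaving them uncovered. To identify these cells robustly we run, at each of the $\O{\log\Delta}$ levels, a robust $L_1$ heavy-hitters procedure at threshold $\poly(1/k,1/(d\log\Delta))$. We build it from the robust $F_1$ estimator of \Cref{thm:lp:small} together with the estimator-corrector-learner framework, or by hashing cells into $\poly(k,d\log\Delta)$ buckets and estimating each bucket's $L_1$ mass robustly. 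Any cell not found is light enough that it contributes at most an $\O{1}$ fraction of the optimum. We then choose $C$ among the recovered heavy cells by the exact tree DP on the sparse recovered instance. Finally, we invoke robust $F_1$ estimation on the corresponding masked vector to certify the cost. For this we fix in advance the poly$(k,d\log\Delta)$-size family of candidate maskings, or we note that all mask choices are determined by the robust outputs, so the robustness of \Cref{thm:lp:small} is preserved.

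Composing the errors gives the bound. The greedy solution is within an $\O{1}$ factor of the tree optimum, which is at most $\O{d\log\Delta}$ times the expected Euclidean optimum. Since the tree dominates the original metric, the cost of $C$ in $[\Delta]^d$ is at most its tree cost. Markov's inequality, plus $\O{\log(1/\delta)}$ independent quadtrees whose minimum estimated cost we take, converts the expectation into a high-probability bound. The space is $\O{\log\Delta}$ levels times $\poly(k,d,\log\Delta)$ robust $F_1$ instances, each of size $\poly(\log(\text{universe}))=\poly(d\log\Delta)$, for a total of $\poly(k,d,\log\Delta)$. The main obstacle is the adaptive selection of centers: the mask is chosen from data-dependent outputs, so a union bound over masks is not immediate. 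The first fix to try is to bound the number of distinct center sets the algorithm can ever output, via the sparse-recovery outputs being determined by robust estimates, and pay a $\poly$ factor for a union bound.
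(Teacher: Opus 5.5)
There is a genuine gap, and it is in your first step: the claim that the quadtree's random shift ``can be fixed and hidden from the adversary'' because the outputs depend on it only through robust estimates. Robust $F_1$ estimation only guarantees that you correctly estimate the \emph{tree} cost of the adversary's current input. The inequality $\min_C \Cost_T(X,C)\le\O{d\log\Delta}\cdot\mathrm{OPT}(X)$ holds only in expectation over a shift chosen independently of $X$.

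What you release is itself a function of the shift. The centers are read off from heavy quadtree cells, and the certified costs are tree costs; a correct estimate of a shift-dependent quantity still leaks the shift. By inserting and deleting small probe configurations and watching which centers come back, the adversary can locate coarse cell boundaries. It can then build an input that the tree distorts badly. For instance, take $k=2$. Put two groups $A,B$ of large constant mass at distance $1$ on opposite sides of a top-level boundary, and a unit-mass group $C'$ inside $A$'s top cell at distance $\Theta(\Delta)$ from $A$. The tree strongly prefers centers at $A$ and $B$, which leave $C'$ paying $\Theta(\Delta)$, whereas centers at $A$ and $C'$ cost $\O{1}$.

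Markov's inequality plus a minimum over $\O{\log\frac{1}{\delta}}$ independent trees does not repair this, because it is still a per-input guarantee. Making it hold for every input the adversary can steer to would need a union bound with $\log\frac{1}{\delta}$ polynomial in the stream length. Your proposed union bound over output center sets addresses the accuracy of the masked-norm estimates, not this failure: the bad event is the shift being bad for the adversary's input.

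The missing idea is the paper's derandomization of the embedding. Instead of one random shift, it concatenates $\frac{1}{n}\bA_{\bs}\bx$ over \emph{all} $n=\Delta^d$ shifts. Then $\|\bA\bx\|_1=\EEx{\bs}{\|\bA_{\bs}\bx\|_1}$ holds exactly, so $\|\bx\|_{\EMD}\le\|\bA\bx\|_1\le\O{d\log\Delta}\|\bx\|_{\EMD}$ for every $\bx$ simultaneously. The map is deterministic and linear, and its dimension grows only to $\O{n^2}$, in which the robust $L_1$ estimator pays polylogarithmically. The only randomness left is inside the robust estimator, which \Cref{thm:lp:small} already protects.

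The paper also avoids your heavy-hitter/DP stage and the adaptively chosen masks. It writes $\Cost_k(\mu)=\min_{\calC}\EMD(\mu,\nu_{\calC})$ and enumerates the fixed, data-independent family of candidate center sets with assigned masses, of size $(\Delta m)^{\O{kd}}$, estimating $\|\bA(\mu-\nu_{\calC})\|_1$ for each. Guaranteeing all candidates at once then costs only the logarithm of this count. This also sidesteps a second weakness of your plan, since the heavy-hitter step as you describe it is not available:
\begin{itemize}
\item no robust turnstile heavy-hitter primitive is established;
\item a hashed-bucket construction exposes its hash function to the same learning attack;
\item a threshold relative to the total mass of a level, rather than to the optimal (tail) cost, does not make the missed cells negligible.
\end{itemize}
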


Finally, we consider entropy estimation, which can be reduced to estimating frequency moments. 
Using our robust $F_p$ estimators as a black-box primitive, we obtain accurate entropy estimates even in the presence of adversarial updates.

\begin{restatable}[Robust entropy estimation]{theorem}{thmentropy}
Given any $\eps\in(0,1)$, there exists an adversarially robust insertion-deletion streaming algorithm on a stream of length $m$ that with high probability, outputs a $\eps$-additive approximation to the Shannon entropy, for the underlying frequency vector of universe size $n$. 
For $m=\poly(n)$, the algorithm uses $\poly\left(\frac{1}{\eps},\log n\right)$ bits of space. 
\end{restatable}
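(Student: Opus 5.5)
The plan is to reduce additive entropy estimation to $(1+\eps')$-approximate estimation of a few frequency moments $F_p$ with $p$ close to $1$, and then invoke \Cref{thm:lp:small} as a black box for each such $p$. Write $\bx$ for the current frequency vector, $F_1=\sum_i|x_i|$, and $H(\bx)=-\sum_i \frac{|x_i|}{F_1}\log\frac{|x_i|}{F_1}$. The classical route (Harvey–Nelson–Onak) goes through the Tsallis/Rényi entropies
\[T_\alpha(\bx)=\frac{1-F_\alpha(\bx)/F_1(\bx)^\alpha}{\alpha-1},\]
which tend to $H(\bx)$ as $\alpha\to 1$. Since $|x_i|\ge 1$ for nonzero coordinates and $m=\poly(n)$, all relevant quantities are polynomially bounded, so $\log F_1=\O{\log n}$.

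The key steps are as follows.
\begin{enumerate}
\item Fix $k=\O{\log(1/\eps)+\log\log n}$ points $\alpha_1,\dots,\alpha_k\in(1,1+\eta]$, placed at Chebyshev nodes of a small interval around $1$, with $\eta=\poly(\eps/\log n)$. Following HNO, the function $g(\alpha)=F_\alpha/F_1^\alpha$ is analytic in $\alpha$, and $H=-g'(1)$. The task is to approximate this derivative from the values $g(\alpha_j)$ by polynomial interpolation/extrapolation.
\item The derivative bounds of HNO show that each $g(\alpha_j)$ must be known to multiplicative accuracy $1+\eps'$ with $\eps'=\eps\cdot\poly(\eta)/\poly(k)$, which is still $1/\poly(1/\eps,\log n)$. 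So I run $k+1$ independent copies of the algorithm from \Cref{thm:lp:small}: one for $F_1$, and one for each $F_{\alpha_j}$, all with accuracy $\eps'$. \Cref{thm:lp:small} is stated for constant $p\in[0,2]$. The $\alpha_j$ depend on $\eps$, but they lie in $[1,2]$, and I expect the construction's dependence on $p$ to stay bounded uniformly on this interval. I would check that the polynomial space bound is uniform there.
\item Robustness. Every output of the combined algorithm is a deterministic function of the outputs of the subroutines. The transcript seen by the adversary is therefore post-processing of the transcripts of robust subroutines. A union bound over the $k+1$ copies and over all $m$ time steps (with failure probability $1/\poly(n)$ each) then gives correctness at all times simultaneously. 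Here I rely on the statement of \Cref{thm:lp:small} holding against adversaries that may see extra information, namely the other copies' outputs. This holds because those copies use independent randomness, which the adversary can simulate as part of its strategy.
\end{enumerate}

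I expect the main obstacle to be the error propagation in the extrapolation. The interpolation coefficients at Chebyshev nodes blow up like $2^{\O{k}}$ times inverse powers of $\eta$, and the error term $(1-g(\alpha))/(\alpha-1)$ divides the multiplicative error by $\alpha-1\approx\eta$. The concrete worry is whether $\eps'$ stays inverse-polynomial in $1/\eps$ and $\log n$. Taking $k$ logarithmic keeps $2^{\O{k}}$ polynomial, and HNO's analysis shows this is enough. A secondary subtlety is additive versus multiplicative error when $H$ is tiny, for example when one coordinate dominates. This is harmless here because only an $\eps$-additive guarantee is required, and $g(\alpha_j)\in[F_1^{1-\alpha_j}\cdot\mathrm{const},1]$ stays bounded away from $0$ for $\alpha_j\le 1+\eta$. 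The total space is $(k+1)\cdot\poly(1/\eps',\log n)=\poly(1/\eps,\log n)$.
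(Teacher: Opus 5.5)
Your proposal is correct and follows essentially the same route as the paper. Both reduce additive entropy estimation, via the Harvey--Nelson--Onak Chebyshev extrapolation, to $(1+\eps')$-approximations of $F_{y}$ for $\O{\log\frac{1}{\eps}+\log\log n}$ exponents $y$ near $1$, with $1/\eps'=\poly(1/\eps,\log n)$; both run the robust estimator of \Cref{thm:lp:small} for each exponent and use that deterministic post-processing of independently randomized robust subroutines remains robust. The paper phrases the reduction through $h=2^{H}$ and \Cref{obs:entropy:addmult} instead of extrapolating the Tsallis entropy directly, which makes no real difference; your normalization by a robust $F_1$ estimate and your check that \Cref{thm:lp:small} holds uniformly over the non-constant exponents address points the paper leaves implicit.
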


Beyond these examples, our framework applies more broadly to any problem that admits a low-distortion embedding into $L_1$ or can be reduced to frequency moment estimation. 
This includes a variety of geometric, statistical, and combinatorial estimation tasks, highlighting the generality of our approach and the central role of $L_p$ embeddings in the adversarially-robust streaming model.

\subsection{Technical Overview}

\subsubsection{\texorpdfstring{$F_p$}{Fp} Estimation Algorithm for \texorpdfstring{$p<2$}{p<2}}
A natural starting point for \Cref{thm:lp:small} might be to generalize the existing adversarially robust $(1+\eps)$-approximate $L_2$ estimation algorithm that handles insertions and deletions. 

\paragraph{Estimator-corrector-learner framework.}
At a high level, the $F_2$ algorithm of \cite{GribelyukLWYZ26} departs from classical linear sketching by introducing fresh randomness during the execution of the algorithm and organizing the computation around a estimator-corrector-learner paradigm. 
The stream is conceptually partitioned into blocks, and within each block we decompose the underlying frequency vector as $\bx=\bz-\bq$, where $\bz$ corresponds to the portion of the stream before the block. 
For convenience, we write $-\bq$ as the updates of the block, so that $\|\bx\|$ can be interpreted as a metric measuring the distance between $\bz$ and $\bq$. 

The algorithm maintains multiple sketches that correspond to three ingredients, which play different roles in this decomposition:
\begin{enumerate}
\item
The \emph{estimator} produces a candidate approximation to $\|\bz-\bq\|_2^2$ via $\|\bz-\bz'\|_2^2+\|\bz'-\bq\|_2^2$, where $\bz'$ is a current guess for the vector $\bz$, which arrived before the current block and is thus unknown from the perspective of the sketches initialized at the beginning of the block. 
\item
The algorithm also maintains a \emph{corrector}, which uses a separate sketch on $\bz-\bq$ initialized at the beginning of the data stream, to verify whether the estimator is accurate.  
If the estimate is correct, the algorithm proceeds using the output of the estimator; otherwise, the corrector overrides the estimator and outputs an estimate for $\|\bz - \bq\|_2^2$, and activates the learner. 
\item
Upon detecting an incorrect estimate, the \emph{learner} updates the internal iterate $\bz'$, refining its approximation to $\bz$ using information revealed by the current query $\bq$, for which the estimator was not accurate.
\end{enumerate}
The key idea is that each time that the estimator is not accurate, this exposes structural information about $\bz$ to the algorithm, allowing us to update the guess $\bz'$ to move closer to $\bz$ in a controlled manner. 
After the update, subsequent estimators operate with the refined iterate, and the process continues.

A crucial property of the framework is that the number of learner updates is small. 
Each query $\bq$ can only correspond to an incorrect output by the estimator if $\bq$ is aligned with some direction of $\bz$ not captured by the guess $\bz'$. 
For example, when the estimator errs via $\|\bz-\bz'\|_2^2+\|\bz'-\bq\|_2^2>(1+\eps)\cdot\|\bz-\bq\|_2^2$, this implies
\begin{align*}
    \langle\bz-\bz',\bz'-\bq\rangle\ge\eps\cdot\|\bz-\bz'\|_2\cdot\|\bz'-\bq\|_2.
\end{align*}
Thus, it can be shown that updating $\bz'$ toward $\bq$ by setting $\bz''=\bz'+\alpha(\bz'-\bq)$ for some carefully chosen $\alpha\approx\frac{\eps\cdot\|\bz-\bz'\|_2}{\|\bz'-\bq\|_2}$ makes quantifiable progress toward $\bz$:
\begin{align}\label{eq:progress:L2}
    \|\bz-\bz''\|_2^2\le(1-\eps^2)\cdot\|\bz-\bz'\|_2^2.
\end{align}
Since the frequency vector has entries bounded by $\poly(n)$, it follows that the iterate $\bz'$ converges to $\bz$ after only $\O{\frac{1}{\eps^2} \log n}$ updates to $\bz'$.
Therefore, the corrector sketch only needs to be accurate for a small number of adaptive queries, which can be ensured with only polylogarithmic overhead in the total space complexity. Then, the final insight is that robustly estimating the second term $\|\bz' - \bq\|_2^2$ is a problem of precisely the same form as the original problem, and thus it can be recursively computed by implementing the estimator-corrector-learner framework with estimator $\|(\bz' - \bq_1) - \bu\|_2^2 + \|\bu - \bq_2\|_2^2$, where $\bq = \bq_1 + \bq_2$ is decomposed so that segments $\bq_1$ and $\bq_2$ can be sketched using independent sketches.  

\paragraph{An alternative update rule.}
We remark that while \cite{GribelyukLWYZ26} analyzed the algorithm for the update rule $\bz'' \gets \bz' + \alpha(\bz' - \bq)$, one can show the same convergence as in \Cref{eq:progress:L2} for a modified update rule, given as follows: let $\calL = \{\bq_i\}_{i = 1}^L$ be a list of queries on which the estimator was previously inaccurate. If the estimator is not accurate on the current suffix $\bq$, we update 
\[\bz'' := \Proj_{\Span(\calL \cup \{\bq\})} (\bz),\] 
noting that the projection can be computed by solving a regression problem in the sketch space. In fact, this update rule has the nice (but not strictly necessary) property that once the iterate is updated to $\bz''$, the new estimator $\|\bz - \bz''\|_2^2 + \|\bz'' - \bq\|_2^2$ is precisely equal to $\|\bz - \bq\|_2^2$ by Pythagorean theorem.

\paragraph{Possible approaches for $F_p$ estimation.}
In order to generalize the framework of \cite{GribelyukLWYZ26} for other $L_p$ norms, we must determine an appropriate decomposition of $\|\bx\|_p^p = \|\bz - \bq\|_p^p \approx g(\bz-\bz') + g(\bz' - \bq)$ for some sketchable function $g: \mathbb{R}^n \rightarrow \mathbb{R}$ with the following properties: 

\begin{itemize}
    \item The prefix $\bz- \bz'$ and suffix $\bz' - \bq$ can be sketched independently. Moreover, for $\bq = \bq_1 + \bq_2$, the function $g$ should have the property that a sketch for $g(\bz)$ can subsequently be used to track $g(\bz- \bq_1)$. Note that this is needed in order to implement the recursion, as described above. 
    \item This decomposition should allow us to argue that if $g(\bz-\bz') + g(\bz' -\bq) \not \in (1\pm\eps) \cdot \|\bz - \bq\|_p^p $, the algorithm can recover more information about $\bz$ from the current query $\bq$.
\end{itemize} 

Recall that in the $L_2$ algorithm of \cite{GribelyukLWYZ26}, the crucial insight was that due to the Pythagorean properties of Euclidean space and the squared $L_2$ norm, an inaccurate estimator directly implies that the dot product between $\bz-\bz'$ and $\bz'-\bq$ must be significant, thus enabling the algorithm to make progress in learning $\bz$. Unfortunately, it is not clear how to establish a similar claim for $L_p$ norms. 
For instance, one might try to write down the binomial expansion of $\|\bz - \bq\|_p^p$ for integer $p$, but then observe that incorrect outputs by the estimator might be caused by numerous correlations between coordinate-wise powers of $\bz-\bz'$, $\bz'-\bq$, and $\bz-\bq$. 
Thus, even by considering a more complex potential function of $\bz-\bz'$, it is not at all clear how to show convergence. 

\paragraph{A hard instance: the Hadamard code.}
Our first observation is that the iterate $\bz'$ maintained by the learner of \cite{GribelyukLWYZ26} uses a very specific update rule $\bz'\gets\bz'+\alpha(\bz'-\bq)$ and so $\bz'$ can be written as a linear combination of a list $\bu_1,\ldots,\bu_L$ of previous queries on which the estimator failed. 
Hence, $\bz'$ is contained within the subspace $\bU$ spanned by the vectors $\bu_1,\ldots,\bu_L$. 
Thus, the distance $\|\bz'-\bz\|_2^2$ between the iterate $\bz'$ and the target vector $\bz$ is at least the distance between $\bz$ and the closest vector $\bv$ to $\bz$ contained within the subspace $\bU$, i.e., $\bv=\argmin_{\by\in\mathbb{R}^L}\|\bU\by-\bz\|_2^2$. 
Therefore, we could similarly upper bound the number of iterations required for convergence if at each step, we simply take the closest vector $\bv$ to $\bz$ contained within the subspace $\bU$. 

We could try adapting this approach to $L_p$ norms by maintaining the list of queries $\bu_1,\ldots,\bu_L$ on which the estimator failed and iteratively updating $\bz'$ to be the closest vector to the subspace $\bU$ in $L_p$ distance. 
However, it turns out that this approach fails catastrophically. 
Suppose $\bz=\mathbf{1}^n$ and each query $\bq\in\{-1,+1\}^n$ is a codeword in the Hadamard code not equal to $\bz$, so that $\sum_{i=1}^n q_i=0$. 
Hence, once such a query is added, the maintained subspace $\bU=\Span(\bu_1,\ldots,\bu_L)$ is contained in $\left\{\by \in\mathbb{R}^n:\sum_{i=1}^n y_i=0\right\}$.
The update step sets $\bz'$ to be the minimizer of $\min_{\by \in \bU} \|\mathbf{1}^n - \by\|_p^p$. 
For $p>1$, the function $t \mapsto |1-t|^p$ is strictly convex and so by Jensen's inequality, for any $\by \in \bU$,
\[\frac{1}{n}\sum_{i=1}^n |1 - y_i|^p \ge \left|1 - \frac{1}{n}\sum_{i=1}^n y_i\right|^p = 1,\]
where we used that $\sum_i y_i = 0$. 
Moreover, equality holds if and only if all coordinates of $\by$ are equal, which together with the zero-sum constraint forces $\by = \mathbf{0}^n$. 
Thus the unique minimizer is $\bz' = \mathbf{0}^n$.
Note that while $\mathbf{0}^n$ is also the minimizer for $p=2$, the algorithm avoids this trap because the initial estimator evaluates to $\|\bz-\bz'\|_2^2 + \|\bz'-\bq\|_2^2 = n+n=2n$, which matches the true distance $\|\bz-\bq\|_2^2 = \frac{n}{2}|1-1|^2 + \frac{n}{2}|1-(-1)|^2 = 2n$, meaning the corrector never flags these queries. 
For $p \in (1, 2)$, however, the true distance is $\|\bz-\bq\|_p^p = n 2^{p-1}$, which is a constant-factor multiplicative separation from the estimate of $2n$.
Consequently, $\bz'$ remains $\mathbf{0}^n$ throughout the process, so that it never progresses toward $\bz$, even while each query is continuously flagged by the corrector and added to the list.

\paragraph{Embedding into $L_2$ and regularized regression.}
We first recall the existence of isometric embeddings $f$ onto Hilbert spaces~\cite{schoenberg1938metric,schoenberg1935definition,robertson2024negative}, so that for any $p\in[0,2)$, we have for all vectors $\bu,\bv\in\mathbb{R}^n$,
\[\|\bu-\bv\|_p^p=\|f(\bu)-f(\bv)\|_2^2,\qquad \|f(\mathbf{0})\|_2=0,\]
so that for the $p$-th moment of the vector $\bz-\bq$, it suffices to estimate $\|f(\bz)-f(\bq)\|_2^2$. In the discussion below, we let $N$ denote the embedding dimension of $f(\cdot)$. 
Ideally, if we could apply the embedding on the fly to each arriving element in the data stream, then this problem would directly reduce to $F_2$ estimation.  Unfortunately, these embeddings are non-linear mappings into a higher dimensional space and it is not clear how to maintain embedded vectors in the streaming setting.

In fact, it turns out that we only require the \textit{existence} of such embeddings to design our robust $F_p$ estimation algorithm by simulating the estimator-corrector-learner paradigm \textit{implicitly} in the embedded space. As such, we now describe the algorithm as though we have access to embedded vectors, although the true algorithm will strictly operate on the original vectors.
By analogy to the $L_2$ algorithm, suppose the estimator for $\|\bz - \bq\|_p^p = \|f(\bz) - f(\bq)\|_2^2$ is $\|f(\bz) - \bv \|_2^2 + \|\bv - f(\bq)\|_2^2$, where $\bv = \Proj_{\calL} f(\bz)$ is the iterate that approximates $f(\bz)$ and $\mathcal{L} = \{f(\bq_i)\}$ is the list of previous queries on which the estimator was not accurate \footnote{As we explain next, $\bv$ can be computed via a kernel regression over these previously observed queries $f(\bq_1),f(\bq_2),\ldots,f(\bq_L)$ on which the estimator failed. Crucially, this regression can be implemented using only estimates of norms and pairwise inner products of the form $\langle f(\bq_i), f(\bq_j)\rangle$ and $\langle f(\bz), f(\bq_i)\rangle$, which can themselves be obtained using $F_p$ sketches.}. Let $|\calL| = L$. Since $\bv$ is a projection onto the span of the previous hard queries $f(\bq_i)$, we have that $\bv = \sum_{i = 1}^{L} \alpha_i \cdot f(\bq_i)$ for some coefficients $\alpha_i \in \mathbb{R}$. Importantly, note that $\bv$ does not correspond to an explicit vector in the original space, and may not admit a pre-image under the embedding $f$.

Now, we can estimate the quantity $\|f(\bz) - \bv\|_2^2$ in the streaming setting by first observing that 
\begin{align}\label{eq:expansion:first:term}
\|f(\bz) - \bv\|_2^2 = \|f(\bz)\|_2^2 - 2\sum_{i = 1}^{L} \alpha_i \cdot \langle f(\bz), f(\bq_i) \rangle + \sum_{1 \leq i,j \leq L} \alpha_i \cdot \alpha_j \cdot \langle f(\bq_i), f(\bq_j) \rangle.
\end{align}

Then, we can simply estimate all of these right-hand side terms as follows:

\begin{itemize} 
\item $\|f(\bz)\|_2^2 = \|\bz\|_p^p$ can be estimated using an $F_p$ estimation sketch for $\bz$. 
\item $\langle f(\bz), f(\bq_i) \rangle$ can be estimated by noting that 
\[\langle f(\bz), f(\bq_i) \rangle = \frac{1}{2} \left(\|f(\bz)\|_2^2 + \|f(\bq_i)\|_2^2 - \|f(\bz) - f(\bq_i) \|_2^2 \right).\]

Therefore, we can estimate $\|f(\bq_i)\|_2^2 = \|\bq_i\|_p^p$ and $\|f(\bz) - f(\bq_i)\|_2^2 = \|\bz - \bq_i\|_p^p$ using $F_p$ estimation sketches on $\bq_i$ and $\bz - \bq_i$ respectively, so we obtain an estimate of $\langle f(\bz), f(\bq_i) \rangle$ up to $\eps \cdot \left(\|\bz\|_p^p + \|\bq_i\|_p^p + \|\bz-\bq_i\|_p^p \right)$ additive error. 
We remark that the estimate of $\|\bz - \bq_i\|_p^p$ precisely corresponds to a previous output of our algorithm. 
\item By the same reasoning, terms $\langle f(\bq_i), f(\bq_j) \rangle$ can be estimated using $F_p$ estimation sketches for previous (fixed) queries $\bq_i$, $\bq_j$, and $\bq_i - \bq_j$. 
\end{itemize}

As before, the corrector will use an $F_p$ estimation sketch for $\bz-\bq$ to check whether the current estimator is accurate, and to trigger updates to the iterate $\bv$. 

\paragraph{Recursion.}
At this point, it suffices to show how to recursively estimate $\|\bv-f(\bq)\|_2^2$ in the next level of the algorithm. However, since the iterate $\bv$ does not necessarily have a pre-image under $f$, this significantly complicates the implementation of the recursion. To this end, we let $\bq = \bq_L + \bq_{L+1}$ and expand this quantity as follows:
\begin{align}\label{eq:recursion:expansion}
\|\bv - f(\bq)\|_2^2 &= \left \| \sum_{i = 1}^{L}\alpha_i \cdot f(\bq_i) - f(\bq) \right\|_2^2 \\ 
&= \sum_{1 \leq i\leq j \leq L} \left( \alpha_i \cdot \alpha_j \cdot \langle f(\bq_i), f(\bq_j) \rangle \right) - 2 \sum_{i= 1}^{L} \alpha_i \cdot \langle f(\bq_i), f(\bq) \rangle + \|f(\bq)\|_2^2\nonumber
\end{align}

We describe how to compute each of the terms above. 
First, note that since past queries $\bq_i \in \calL$ are fixed, inner products of the form $\langle f(\bq_i), f(\bq_j) \rangle$ can be computed robustly by simply invoking $F_p$ estimation sketches on $\bq_i$, $\bq_j$, and $\bq_i - \bq_j$ (as described earlier). 
Thus, we focus on estimating terms which involve the evolving portion of the stream $\bq = \bq_L + \bq_{L+1}$. 
In particular, by the properties of the embedding $f$, the inner products $\langle f(\bq_i), f(\bq) \rangle $ appearing in the second term can be expressed as

\begin{align*}
\langle f(\bq_i), f(\bq) \rangle &= \frac{1}{2}\left(\|f(\bq_i)\|_2^2 + \|f(\bq)\|_2^2- \|f(\bq_i) - f(\bq)\|_2^2 \right)  \\ 
&= \frac{1}{2}\left(\|f(\bq_i)\|_2^2 + \|\bq_L + \bq_{L+1}\|_p^p - \| \bq_i - (\bq_L + \bq_{L+1})\|_p^p \right) \\ 
& = \frac{1}{2} \left(\|f(\bq_i)\|_2^2 + \|f(\bq_L) - f(-\bq_{L+1})\|_2^2 - \|f(\bq_i-\bq_L) - f(\bq_{L+1})\|_2^2 \right).
\end{align*}

Likewise, the third term can be written as $\|f(\bq)\|_2^2 = \|\bq_L + \bq_{L+1}\|_p^p = \|f(\bq_L) - f(-\bq_{L+1})\|_2^2$. 
At this point, note that estimating each of the above norms is a problem of precisely the same form as the original problem, and it may seem as though we must then recursively estimate each of these $\O{L^2}$ norms arising from inner products in \Cref{eq:recursion:expansion}. 
In fact, it turns out that with some careful algebraic manipulation, we can reduce the problem to only $\textit{one}$ recursive call on the squared-norm $\left \|\sum_{i = 1}^L \alpha_i \cdot f(\bq_i - \bq_L) - f(\bq_{L+1})\right \|_2^2$, and all remaining terms can be robustly computed by directly using $F_p$ estimation sketches on previous queries $\bq_i - \bq_{L}$ or $\bq_i$ for $i \in [L]$. \footnote{We refer the reader to \Cref{lem:recursion:computation} for more details regarding how to estimate various terms in the expansion.}. 

\paragraph{Bounding the additive error.}

Let us examine the total error incurred by our estimates in the approach described above. 
Recall that for any $\bw, \by \in \mathbb{R}^n$ and any $\eps' > 0$, each inner product $\langle f(\bw), f(\by) \rangle$ can be estimated up to error $\eps' \cdot \left(\|\bw\|_p^p + \|\by\|_p^p + \|\bw - \by\|_p^p \right)$. 
Now, suppose that the underlying frequency vector $\bx = \bz - \bq$ satisfies $\|\bx\|_p^p \leq 10 A$ at all times for some fixed parameter $A$ \footnote{Note that for this condition, we require that any any time, the prefix of the stream has $F_p$ moment bounded by $A$. Specifically, $\|\bz\|_p^p \leq \O{A}$ and $\|\bz - \bq_i\|_p^p \leq \O{A}$ for all previous queries $\bq_i$. Also, this directly implies that $\|\bq_i\|_p^p \leq \O{A}$, so $\|\bq_i - \bq_j\|_p^p \leq \O{A}$ as well.}. 
Furthermore, suppose that the coefficients $\alpha_i$ appearing in \Cref{eq:expansion:first:term} and \Cref{eq:recursion:expansion} are bounded by $\O{1}$. 
Then, by setting $\eps' = \frac{\eps}{L^2}$, this means that the error of estimating the first term $\|f(\bz) - \bv\|_2^2$ via dot products in \Cref{eq:expansion:first:term} is roughly $L^2 \cdot \O{\frac{\eps}{L^2} \cdot A} \approx \O{\eps} \cdot A$; by the same argument, we can also ensure that the error due to the second term $\|\bv - f(\bq)\|_2^2$ is $\O{\eps} \cdot A$ as well. 

However, this analysis relied on the assumption that the projection of $f(\bz)$ onto $\Span(\calL)$, given by $\bv = \sum_{i = 1}^L \alpha_i \cdot f(\bq_i)$, has coefficients $\alpha_i$ bounded by $\O{1}$. 
In fact, it turns out that this is not necessarily true for arbitrary list vectors $\bq_1,\ldots, \bq_{L}$. To overcome this, we instead imagine that the list vectors are actually $\bu_i=f(\bq_i)+\nu\cdot\be_{N+i} \in \mathbb{R}^{N + L}$, and let $\calL' = \{\bu_i\}_{i \in [L]}$. 
By scaling, suppose that $\|\bu_i\|_2 = 1$ for all $i \in [L]$, and let $\bU$ be the matrix whose columns are $\bu_1,\ldots, \bu_{L}$. 
First, we observe that $\sigma_{\max}(\bU) \leq \|\bU\|_F \leq \sqrt{L} \cdot \sqrt{\O{1} + \nu^2} = \O{\sqrt{L}}$. 
Moreover, since the $L \times L$ identity matrix scaled by $\nu$ is appended in the dimensions after $N$, we have that $\sigma_{\min}(\bU) \geq \nu$. 
Thus, it follows that the condition number of $\bU$ is $\kappa = \frac{\sqrt{L}}{\nu} = \frac{L \cdot H^2}{\eps^2}$, and thus the projection of $f(\bz)$ (with $0$'s in additional dimensions) onto $\Span(\bu_1,\ldots, \bu_{L})$ will have coefficients satisfying $|\alpha_i| \leq \kappa = \frac{\sqrt{L}}{\nu}$. 

With the above observation in mind, we are now ready to define our modified update rule. 
Note that finding the projection of $f(\bz)$ onto the space spanned by $\bu_1, \ldots, \bu_L$ is equivalent to computing $\argmin_{\bu'\in\Span(\calL')}\|f(\bz)-\bu'\|_2^2$. 
However, since we appended the $L \times L$ identity scaled by $\nu$ in the added dimensions, this actually corresponds to solving the \textit{regularized} regression $\argmin_{\bu' \in \Span(\calL)} \left(\|f(\bz) - \bu' \|_2^2 + \nu^2 \cdot \|\mathbf{\alpha}\|_2^2\right)$, where $\alpha \in \mathbb{R}^L$ denotes the coefficient vector of $\bu'$ in the basis $\calL = \{f(\bq_i)\}_{i \in [L]}$. 
As before, all computations in this regularized regression take place in the sketch space by estimating dot products of the form $\langle f(\bz), f(\bq_i)\rangle $ or $\langle f(\bq_i), f(\bq_j)\rangle$ up to additive error $\O{\eps} \cdot A$. Finally, if $\bu' = \sum_{ i =1}^L \alpha_i \cdot \bu_i$ for $\bu_i \in \calL'$, we output the coefficients of $\bu_1 + \bu'$, which we use to represent the linear combination of our new iterate $\bv$. 

Overall, this ensures that the coefficients $\alpha_i$ in the linear combinations of \Cref{eq:expansion:first:term} and \Cref{eq:recursion:expansion} are bounded, and thus we ensure that the total error accrued in estimating all $\O{L^2}$ dot products is $\O{\eps} \cdot A$ by setting $\eps' = \frac{\eps}{\kappa \cdot L^2}$ to be the error parameter for each sketch.

\paragraph{From additive error to multiplicative error.}
We now describe how to convert a robust additive error algorithm $\calA$ to a robust multiplicative error algorithm $\calA'$ by stratifying the estimation process across multiple scales.

Recall that the algorithm above outputs estimates with error $\eps \cdot A$ when the underlying frequency vector $\bx$ satisfies $\|\bx\|_p^p \leq 10 A$. 
To obtain multiplicative error, we initialize $K = \O{\frac{1}{\eps}\log n}$ independent copies $\calA_1,\ldots, \calA_K$ of the robust additive-error algorithm, where $\calA_i$ achieves error $\eps \cdot (1+\eps)^i$ when $\|\bx\|_p^p \leq 10 \cdot (1+\eps)^i$. 
If we could maintain the property that input vectors to the \textit{robust} algorithm $\calA_i$ for scale $S_i = [(1+\eps)^i, (1+\eps)^{i+1})$ satisfies $\|\bx\|_p^p = \Theta{((1+\eps)^i)}$, then at scale $S_i$, the additive error of the estimator is $\eps \cdot \Theta((1+\eps)^i)$, which corresponds to a multiplicative $(1+\eps)$-approximation for all queries for which the $F_p$ moment lies in $\Theta((1+\eps)^i)$. 
To determine the appropriate scale for a given query $\bz-\bq$, we begin at the top level $K$. 
The estimator at this level provides an additive approximation of $\eps\cdot\Theta((1+\eps)^K)$, which suffices to determine whether the true value is smaller than $2\eps\cdot\Theta((1+\eps)^K)$. 
If this is the case, we can safely proceed to a lower level and repeat the process. 
A more fine-grained version of this iterative search allows the algorithm to iteratively zoom in on the correct level, ultimately yielding a multiplicative $(1+\eps)$-approximation across all scales of $\|\bz-\bq\|_p^p$. 
So, if each of the algorithms $\calA_i$ is accurate with high probability, $\calA'$ returns a multiplicative $(1+\eps)$ approximation. 

Finally, it remains to show that this new algorithm $\calA'$ is robust, i.e. each of the $\calA_i$'s continue to be accurate even when updates are adaptive. 
By contradiction, suppose that there exists a strategy $\calS$ for the adversary to break a particular $\calA_i$ with non-trivial probability. 
In that case, we can construct an adaptive attack on $\calA_i$ by letting the adversary simulate the execution of the algorithm for the remaining scales $\calA_1,\ldots, \calA_{i-1}, \calA_{i+1}, \ldots, \calA_K$ by itself, and then execute strategy $\calS$ to break $\calA_i$. 
However, this contradicts the robustness of $\calA_i$, so we conclude that $\calA'$ is adversarially robust on streams of length $m = \poly(n)$.

\subsubsection{Sketching, EMD, and Clustering}

\paragraph{Equivalence between robustness and sketching.}
To formalize the connection between adversarial robustness and classical sketching, we use a well-known result by \cite{AndoniKR18}, who established that any finite-dimensional normed space $\mathcal{X}=(\mathbb{R}^n, \|\cdot\|_{\mathcal{X}})$ admitting an efficient classical linear sketch must also admit a low-distortion linear embedding into $L_p$ for $p \in\left(\frac{2}{3},1\right)$ (and into $L_1$ if the norm is closed under sum-product). 
Because this embedding is a linear map, we can immediately integrate it within our turnstile streaming framework. 
Specifically, given a static embedding matrix $\bA \in \mathbb{R}^{M \times n}$, each stream update to the underlying vector $\bx$ implicitly updates the coordinates of the corresponding vector $\bA\bx$ in the embedded space. 
We can then execute our adversarially robust $L_p$ estimation algorithm directly on this virtual stream.

The major issue is that the continuous embedding from \cite{AndoniKR18} is existential and not necessarily computable. 
To make this algorithmic, we operate in an advice model and prove that the continuous embedding can be discretized. 
Fortunately, by using standard techniques, it can be shown the target dimension of the virtual stream can be upper bounded by $M = \O{n\log n}$, and that the entries of $\bA$ require only $\O{\log n}$ bits of precision. 
Since the space complexity of our robust $L_p$ approximation algorithm depends only polylogarithmically on $M$, we can run it on the transformed stream to establish a weak equivalence between classical sketching and adversarial robustness.

\paragraph{Derandomizing embeddings for EMD and clustering.}
This approach directly gives robust algorithms for important tasks such as approximating the Earth Mover Distance (EMD) over the grid $[\Delta]^d$. 
In the classical oblivious setting, EMD is estimated by embedding the spatial distributions into $L_1$ via randomly shifted quadtrees, achieving $\O{d \log \Delta}$ distortion~\cite{Indyk04,BackursIRW16,Cohen-AddadWZ23}. 
However, this randomized embedding $\bA_s$ only provides a ``for each'' guarantee, so that an adaptive adversary could observe the algorithm's outputs, deduce the random shift $\bs$, and generate updates that force worst-case distortion.

To prevent adversarial attacks against the embedding, we instead completely derandomize the construction. 
Namely, we define a deterministic embedding $\bA\bx = \frac{1}{n} \bigoplus_{s \in [\Delta]^d} \bA_s\bx$ that concatenates the scaled linear embeddings induced by \emph{all} possible $\Delta^d$ grid shifts simultaneously. 
By linearity of expectation, the $L_1$ norm of this concatenated vector exactly matches the expected value over the random shifts, providing a universal ``for all'' guarantee that holds against any adaptive adversary. 
While aggregating over all shifts $\bs\in[\Delta]^d$ blows up the embedding dimension from $\O{n}$ to $M =\O{n^2}$, where $n = \Delta^d$, our robust $F_p$ estimation algorithm again only uses space polylogarithmic in $M$. 

We naturally extend this robust EMD primitive to the $k$-median clustering problem, which can be formulated as the Earth Mover Distance between the data distribution $\mu$ and an assignment vector $\nu_{\calC}$ that routes mass to the nearest centers in a candidate set $\calC$. 
Using our deterministic $L_1$ embedding, we can dynamically maintain a robust sketch of the dataset. 
Because the embedding is deterministic and universally preserves distances, we can safely enumerate over a standard collection of candidate center sets and evaluate their robust costs on the fly, achieving the first robust algorithm for $k$-median clustering on turnstile streams.

\section{Preliminaries}

\subsection{Notation}
Throughout, we write $[n]$ to refer to the set of integers $\{1,2,\ldots,n\}$ for any positive integer $n$. 
The notation $\poly(n)$ refers to some polynomial in $n$ with fixed, but unspecified, degree. 
An event is said to happen with high probability if its likelihood is at least $1 - \frac{1}{\poly(n)}$. 
We typically use normal font for scalars and bold font for vectors or matrices.  

We typically use normal font for scalars and bold font for vectors or matrices. 
For vectors $\bx, \by \in \mathbb{R}^n$, we write $\langle \bx, \by \rangle$ for the standard Euclidean inner product. 
For $p > 0$, the $L_p$ norm (or quasi-norm when $p < 1$) of $\bx \in \mathbb{R}^n$ is defined as
\[\|\bx\|_p = \left(\sum_{i=1}^n |x_i|^p\right)^{1/p}.\]
For a vector $\bx$ and a subspace $V$, $\Proj_V(\bx)$ denotes the orthogonal projection of $\bx$ onto $V$, i.e., the vector in $V$ closest to $\bx$ in Euclidean distance. 
For a set of vectors $S$, $\Span(S)$ denotes the linear span of $S$, i.e., all linear combinations of vectors in $S$.  

We use $\bigoplus$ to denote vector concatenation (direct sum). 
For vectors $\bx^{(1)} \in \mathbb{R}^{n_1}, \ldots, \bx^{(k)} \in \mathbb{R}^{n_k}$, we define
\[
\bigoplus_{i=1}^k \bx^{(i)} = (\bx^{(1)}, \ldots, \bx^{(k)}) \in \mathbb{R}^{n_1 + \cdots + n_k}.
\]

We use $\circ$ to denote column-wise concatenation. 
For vectors $\ba_1, \ldots, \ba_k \in \mathbb{R}^n$, the matrix $\ba_1 \circ \cdots \circ \ba_k \in \mathbb{R}^{n \times k}$ is defined by taking $\ba_j$ as its $j$-th column. 
For a matrix $\bA \in \mathbb{R}^{n \times k}$, we write $\bA^\dagger$ to denote its Moore-Penrose pseudoinverse, defined as the unique matrix satisfying
\[\bA \bA^\dagger \bA = \bA, \quad \bA^\dagger \bA \bA^\dagger = \bA^\dagger, \quad (\bA \bA^\dagger)^\top = \bA \bA^\dagger, \quad (\bA^\dagger \bA)^\top = \bA^\dagger \bA.\]
When $\bA$ has full column rank, this reduces to $\bA^\dagger = (\bA^\top \bA)^{-1} \bA^\top$.

\subsection{Preliminaries for Small Moment Estimation}
We recall the existence of an embedding of $F_p$ into $F_2$.
\noindent

\begin{restatable}{theorem}{thmembed}
\label{thm:embed}
\cite{schoenberg1938metric,schoenberg1935definition,robertson2024negative}
For each $p\in(0,2)$, there exists an isometric embedding $f$ onto a Hilbert space $H$, such that
\[\|\bu-\bv\|_p^p=\|f(\bu)-f(\bv)\|_2^2,\qquad \|f(\mathbf{0})\|_2=0.\]
\end{restatable}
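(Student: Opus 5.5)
The plan is the classical Schoenberg route: show that the kernel $\psi(\bu,\bv)=\|\bu-\bv\|_p^p$ on $\mathbb{R}^n$ is conditionally negative definite, and then build the Hilbert space by a GNS/Kolmogorov-type construction. Recall what this requires. For any points $\bu_1,\ldots,\bu_k\in\mathbb{R}^n$ and reals $c_1,\ldots,c_k$ with $\sum_i c_i=0$, we need
\[\sum_{i,j} c_ic_j\,\|\bu_i-\bu_j\|_p^p\le 0.\]
Since $\|\bu-\bv\|_p^p=\sum_{\ell=1}^n|u_\ell-v_\ell|^p$ is a sum over coordinates, and sums of conditionally negative definite kernels are again conditionally negative definite, it suffices to prove the one-dimensional statement: $(s,t)\mapsto|s-t|^p$ is conditionally negative definite on $\mathbb{R}$ for $p\in(0,2)$.

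For the one-dimensional claim I would use the integral representation
\[|x|^p = C_p\int_0^\infty \frac{1-\cos(\omega x)}{\omega^{1+p}}\,d\omega,\qquad C_p>0,\]
which converges precisely for $p\in(0,2)$. Near $0$ the integrand behaves like $\omega^{1-p}$, and at infinity it is bounded by $2\omega^{-1-p}$. For each fixed $\omega$, the kernel $1-\cos(\omega(s-t))$ is conditionally negative definite. To see this, note that
\[\sum_{i,j} c_ic_j\cos(\omega(s_i-s_j)) = \Bigl|\sum_i c_i e^{\mathrm{i}\omega s_i}\Bigr|^2\ge 0,\]
while the constant term contributes $\sum_{i,j}c_ic_j=0$. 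Integrating against the positive measure $C_p\,\omega^{-1-p}d\omega$ preserves conditional negative definiteness. Alternatively, one could use that $e^{-\lambda|x|^p}$ is the characteristic function of a symmetric $p$-stable law, hence positive definite for every $\lambda>0$, and then apply Schoenberg's theorem.

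The embedding itself I would build via the Hilbert space $H=L_2(\mathbb{R}^n\times(0,\infty)^{\,}\!,\mu)$ directly, which avoids abstract completions. Define
\[f(\bu)(\ell,\omega)=\sqrt{C_p}\,\omega^{-(1+p)/2}\bigl(e^{\mathrm{i}\omega u_\ell}-1\bigr),\qquad \ell\in[n],\ \omega>0,\]
as an element of $L_2([n]\times(0,\infty);\mathbb{C})$, viewed as a real Hilbert space. Then
\[|e^{\mathrm{i}\omega u_\ell}-e^{\mathrm{i}\omega v_\ell}|^2=2-2\cos(\omega(u_\ell-v_\ell)),\]
so $\|f(\bu)-f(\bv)\|_2^2=\sum_\ell 2C_p\int_0^\infty(1-\cos(\omega(u_\ell-v_\ell)))\,\omega^{-1-p}\,d\omega$. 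After absorbing the factor $2$ into the constant, this equals $\|\bu-\bv\|_p^p$. Clearly $f(\mathbf{0})=0$. Finiteness of $\|f(\bu)\|_2$ is exactly the convergence of the integral. This explicit formula makes the conditional negative definiteness automatic, so the earlier paragraph serves only as motivation.

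The main obstacle is establishing the integral identity with a strictly positive finite constant. I would prove it by substituting $\omega\mapsto\omega/|x|$, which gives homogeneity of degree $p$. It then remains to check that $\int_0^\infty(1-\cos\omega)\,\omega^{-1-p}\,d\omega$ lies in $(0,\infty)$ for $0<p<2$, and this follows from the two asymptotic bounds above together with positivity of the integrand. The endpoint cases mentioned in the paper ($p=2$ via the identity map, and $p=0$) are not covered by this statement and need no argument here.
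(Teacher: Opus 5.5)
Your argument is correct. It takes a different route from the paper, which gives no proof of this statement and simply imports it from Schoenberg. The route implicit in those citations is the negative-type characterization the paper later records as \Cref{thm:embedding}. There, $|x|^p$ is of negative type because $e^{-t|x|^p}$ is the characteristic function of a symmetric $p$-stable law. Negative type is preserved under summing over coordinates, the abstract Hilbert-space realization then yields $\Phi$, and one translates to get $f(\mathbf{0})=\mathbf{0}$.

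You instead write down an explicit feature map into $L_2([n]\times(0,\infty);\mathbb{C})$ from the representation $|x|^p=C_p\int_0^\infty(1-\cos(\omega x))\,\omega^{-1-p}\,d\omega$. Its only analytic input is $\int_0^\infty(1-\cos t)\,t^{-1-p}\,dt\in(0,\infty)$ for $0<p<2$, which your small-$t$ and large-$t$ bounds settle, and the substitution $\omega\mapsto\omega/|x|$ gives homogeneity. Two cosmetic fixes are needed. Put $\sqrt{C_p/2}$ in the definition of $f$, since $|e^{\mathrm{i}\omega a}-e^{\mathrm{i}\omega b}|^2=2-2\cos(\omega(a-b))$. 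The index set is $[n]\times(0,\infty)$, not $\mathbb{R}^n\times(0,\infty)$.

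Your route buys a concrete, self-contained construction in which $f(\mathbf{0})=\mathbf{0}$ and $\|f(\bu)\|_2^2=\|\bu\|_p^p<\infty$ are immediate. It needs no appeal to stable laws, Bochner's theorem, or a Kolmogorov/GNS completion. Your choice of the real structure on the complex $L_2$ is exactly what makes the polarization identity in \Cref{lem:dotest} valid. Your map is into rather than onto $H$, which is all the paper uses.

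The cited route buys generality: the same theorem is reused in \Cref{sec:huber-loss} for $h(x^2)$ with $h$ Bernstein. Your method adapts there too by mixing your feature map over the L\'evy measure of $h$, since $1-e^{-tx^2}$ is an average of $1-\cos(\omega x)$ against a centered Gaussian in $\omega$.
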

Similarly, we describe an embedding for $F_0$ into $F_2$ as follows. 
Given a vector $\bx\in\mathbb{R}^n$, suppose all coordinates of $x$ are bounded within $[-m,-(m-1),\ldots,-1,0,1,\ldots,m-1,m]$. 
Now for each coordinate $i \in [n]$ and each possible value $a$ that coordinate $i$ may take, introduce a distinct orthonormal basis vector $e_{i,a}$ in a high-dimensional Euclidean space. 
Then, we define
\[g(\bx) = \frac{1}{\sqrt{2}}\sum_{i=1}^n e_{i, x_i},\]
so that each coordinate of $\bx$ is represented by a one-hot vector indicating its value. 
Now for any two vectors $\bx,\by$ whose coordinates are within this range, we then have 
\begin{align*}
\|g(\bx) - g(\by) \|_2^2 &= \frac{1}{2}\sum_{i=1}^n \| e_{i, x_i} - e_{i, y_i} \|_2^2\\
&= \frac{1}{2}\cdot 2 \cdot |\{ i : x_i \neq y_i \}|\\
&= \|\bx - \by \|_0.
\end{align*}

However, note that for $p=0$, we deterministically have $\|g(\bx)\|_2^2=\frac{n}{2}$ for all $\bx\in\mathbb{R}^n$. To account for this, we further define the embedding by

\[f(\bx) \coloneq g(\bx) - g(\mathbf{0})\]

In particular, observe that this embedding preserves the $L_0$ norm as before, and maps $\mathbf{0}$ to $\mathbf{0}$: \[\|f(\bx) - f(\by)\|_2^2 = \|\bx - \by\|_0 \textrm{ and } \|f(\mathbf{0})\|_2^2 = 0\]

Thus, the one-hot encoding yields an isometric embedding of $(\mathbb{R}^n, \|\cdot\|_0)$ into $(\mathbb{R}^D, \|\cdot\|_2^2)$, where each differing coordinate contributes a fixed amount to the squared Euclidean distance. 
Hence, we have:
\begin{restatable}{theorem}{thmembedlzero}
\label{thm:embed:lzero}
There exists an isometric embedding $f$ from $\mathbb{R}^n$ onto a Hilbert space $H$, such that
\[\|\bu-\bv\|_0=\|f(\bu)-f(\bv)\|_2^2,\qquad \|f(\mathbf{0})\|_2=0.\]
\end{restatable}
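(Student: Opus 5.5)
The plan is to make the one-hot construction given just before the statement rigorous, and then remove the dependence on a bounded range. Fix an index set $I=[n]\times\mathbb{R}$ and let $H=\ell_2(I)$ be the Hilbert space with orthonormal basis $\{e_{i,a}\}_{(i,a)\in I}$. Using the whole index set, rather than the window $\{-m,\ldots,m\}$, gives an embedding on all of $\mathbb{R}^n$ and not only on vectors with bounded integer entries. Nonseparability is harmless here: any finite set of vectors lives in a finite-dimensional subspace, and if we restrict to $\mathbb{Z}^n$, which is the case relevant for streams, $H$ is separable. Define $g(\bx)=\frac{1}{\sqrt2}\sum_{i=1}^n e_{i,x_i}$ and $f(\bx)=g(\bx)-g(\mathbf{0})$.

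The key steps, in order, are as follows.

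\begin{enumerate}
\item Since $f(\bu)-f(\bv)=g(\bu)-g(\bv)$, it suffices to compute $\|g(\bu)-g(\bv)\|_2^2$.
\item The vectors $e_{i,a}$ for different $i$ are orthogonal regardless of the values $a$. So the squared norm splits as
\[
\frac12\sum_{i=1}^n\|e_{i,u_i}-e_{i,v_i}\|_2^2 .
\]
\item Each summand is $0$ if $u_i=v_i$. If $u_i\neq v_i$, it is $\|e_{i,u_i}\|^2+\|e_{i,v_i}\|^2=2$, by orthonormality.
\item Summing gives $\|\bu-\bv\|_0$.
\item Finally, $f(\mathbf{0})=g(\mathbf{0})-g(\mathbf{0})=\mathbf{0}$, so $\|f(\mathbf{0})\|_2=0$.
\end{enumerate}

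The only point needing care is the word ``onto'' in the statement. I will read it, as in Theorem~\ref{thm:embed}, as ``into'' a Hilbert space. If a genuinely surjective statement is wanted, replace $H$ by the closed linear span of $f(\mathbb{R}^n)$, which is itself a Hilbert space, and the conclusion is unchanged.

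I do not expect any real obstacle: the whole argument is the orthogonality bookkeeping in step~2. The one pitfall is to avoid identifying basis vectors across coordinates, which is why the index pairs $(i,a)$ are kept distinct.
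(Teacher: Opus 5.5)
Your argument is correct and is the paper's construction: the same one-hot map $g(\bx)=\frac{1}{\sqrt2}\sum_i e_{i,x_i}$, recentered as $f=g-g(\mathbf{0})$, with the same orthogonality computation. Indexing the basis by $[n]\times\mathbb{R}$ rather than $[n]\times\{-m,\ldots,m\}$ is a small improvement, since it covers all of $\mathbb{R}^n$ as the statement asserts, whereas the paper only treats vectors with bounded integer entries.

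One correction to your side remark: restricting $H$ to the closed linear span of $f(\mathbb{R}^n)$ does not make $f$ surjective. In fact no map with this property can be onto a nonzero Hilbert space: for a unit vector $h$, $h/2$ would have to equal some $f(\bx)$, forcing $\|\bx\|_0=\|f(\bx)-f(\mathbf{0})\|_2^2=1/4$. So ``onto'' must simply be read as ``into.''
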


Next, we recall oblivious turnstile streaming algorithms for $F_p$ moment estimation and distinct element estimation:
\begin{theorem}
\label{thm:obliv:small:fp}
\cite{Indyk06,Li08,KaneNW10a,KaneNW10b,KaneNPW11}
Given $p\in[0,2]$ and $\eps\in(0,1)$, there exists a linear sketching turnstile streaming algorithm $\FPSmall$ that uses $\O{\frac{1}{\eps^2}\log^2 n \log \frac{1}{\delta}}$ bits of space on a universe of size $n$ and a stream of length $m=\poly(n)$ and with probability $1-\delta$, outputs a $(1+\eps)$-approximation to the $F_p$ moment. 
\end{theorem}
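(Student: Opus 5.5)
The plan is to assemble \Cref{thm:obliv:small:fp} from the cited constant-probability sketches and then amplify the success probability with a median trick. Linearity is preserved at every step, since every building block is a linear map of $\bx$, and taking a median of several linear sketches is still a linear sketch followed by post-processing.

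\textbf{Step 1: the case $p\in(0,2]$.} I will use the $p$-stable sketch of \cite{Indyk06}. Draw a matrix $\bA\in\mathbb{R}^{r\times n}$ with $r=\O{1/\eps^2}$ i.i.d.\ standard $p$-stable entries and maintain $\bA\bx$. Each coordinate $(\bA\bx)_j$ is distributed as $\|\bx\|_p\cdot Z_j$ with $Z_j$ standard $p$-stable. An estimator such as the median of $|(\bA\bx)_j|$ (suitably normalized), or the geometric-mean or log-cosine estimators of \cite{Li08,KaneNW10a}, then yields a $(1+\eps)$-approximation to $\|\bx\|_p$ with constant probability, and hence to $F_p$ after rescaling $\eps$ by a constant depending on $p$.

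To bound the space in bits, I will rely on two further facts from \cite{KaneNW10a,KaneNPW11}. First, the stable entries can be discretized to $\O{\log n}$ bits of precision: since entries of $\bx$ are at most $\poly(n)$ in magnitude, this perturbs each counter by a negligible additive amount. Second, the full independence can be replaced by $\O{\log(1/\eps)/\log\log(1/\eps)}$-wise independence across columns, or by Nisan's PRG, so the randomness itself takes only small seed space. This gives $\O{\eps^{-2}\log n}$ counters of $\O{\log n}$ bits each, that is, $\O{\eps^{-2}\log^2 n}$ bits at constant success probability.

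\textbf{Step 2: the case $p=0$.} Here I will invoke the turnstile distinct-elements algorithm of \cite{KaneNW10b}, which is linear: it subsamples coordinates at geometric rates and hashes them into buckets, maintaining linear counters modulo a random prime. It achieves a $(1+\eps)$-approximation with constant probability in $\O{\eps^{-2}\log n\cdot \log(1/\eps)+\log n}$ bits, which fits within the stated bound.

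As a fallback, I could run the $p$-stable sketch with $p'=\Theta(\eps/\log m)$. Since $1\le|x_i|\le m$ on the support, we have $|x_i|^{p'}\in[1,1+\O{\eps}]$, so $F_{p'}(\bx)$ approximates $F_0(\bx)$ within a $(1+\eps)$ factor. The concern with this route is that the constants of the $p'$-stable estimator degrade as $p'\to 0$, which is why \cite{KaneNW10b} is the preferred citation.

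\textbf{Step 3: amplification.} I will run $\O{\log(1/\delta)}$ independent copies and output the median of their estimates. By a Chernoff bound, more than half the copies are accurate with probability at least $1-\delta$, so the median is a $(1+\eps)$-approximation. This multiplies the space by $\log(1/\delta)$, giving the stated $\O{\eps^{-2}\log^2 n\log(1/\delta)}$.

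\textbf{Main obstacle.} The main difficulty is not this assembly but the bit-complexity accounting in Step~1: showing that discretized, limited-independence stable variables still yield the required concentration. I would not reprove this, and would cite it directly from \cite{KaneNW10a,KaneNPW11}.
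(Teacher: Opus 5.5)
Your proposal is correct and takes essentially the paper's approach: the paper gives no argument beyond citing these constant-probability sketches ($p$-stable/log-cosine sketches for $p\in(0,2]$ and the Kane--Nelson--Woodruff turnstile $L_0$ sketch for $p=0$), and the $\log\frac{1}{\delta}$ factor is the median-of-independent-copies amplification you describe. Two small bookkeeping points: $r=\O{1/\eps^2}$ rows gives $\O{\eps^{-2}}$ counters rather than $\O{\eps^{-2}\log n}$ (an overcount, so harmless), and any stray $\log\frac{1}{\eps}$ factor, such as the one in the cited $L_0$ bound, fits inside $\log^2 n$ only when $\eps\ge n^{-1/2}$; below that threshold, storing $\bx$ exactly is itself a linear sketch within the stated budget.
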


\begin{figure*}[!htb]
\begin{mdframed}
\textbf{Algorithm $\DotEst$ to estimate $\langle f(\bx),f(\bq)\rangle$}:
\begin{enumerate}
\item 
Input: Vectors $\bx,\bq\in\mathbb{R}^n$
\item
Let $\widetilde{X}$ be a $(1+\eps)$-approximation of $\|\bx\|_p^p$ using $\FPSmall$
\item
Let $\widetilde{Q}$ be a $(1+\eps)$-approximation of $\|\bq\|_p^p$ using $\FPSmall$
\item
Let $\widetilde{D}$ be a $(1+\eps)$-approximation of $\|\bx-\bq\|_p^p$ using $\FPSmall$
\item
Output: $\frac{1}{2}(\widetilde{X}+\widetilde{Q}-\widetilde{D})$ as an estimate for $\langle f(\bx),f(\bq)\rangle$
\end{enumerate}
\end{mdframed}
\caption{Estimator for dot product in the embedded space.}
\label{fig:dotest}
\end{figure*}
We now show the correctness of the algorithm $\DotEst$, which estimates the dot product of two vectors in the embedded space. 
\begin{lemma}
\label{lem:dotest}
With high probability, $\DotEst$ outputs an estimate to $\langle f(\bx),f(\bq)\rangle$ with additive error $\eps\cdot\|\bx\|_p^p+\eps\cdot\|\bq\|_p^p+\eps\cdot\|\bx-\bq\|_p^p$. 
The algorithm uses $\O{\frac{1}{\eps^2}\log^3 n}$ bits of space. 
\end{lemma}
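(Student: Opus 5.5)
The argument combines the polarization identity in the embedded Hilbert space with the guarantees of \Cref{thm:obliv:small:fp}. First, by \Cref{thm:embed} (or \Cref{thm:embed:lzero} when $p=0$) we have $\|f(\bx)\|_2^2=\|f(\bx)-f(\mathbf{0})\|_2^2=\|\bx\|_p^p$, using $f(\mathbf{0})=\mathbf{0}$. Likewise $\|f(\bq)\|_2^2=\|\bq\|_p^p$ and $\|f(\bx)-f(\bq)\|_2^2=\|\bx-\bq\|_p^p$. Expanding the squared distance in $H$ then gives the exact identity
\[\langle f(\bx),f(\bq)\rangle=\frac{1}{2}\left(\|\bx\|_p^p+\|\bq\|_p^p-\|\bx-\bq\|_p^p\right).\]
For $p=2$ we take $f$ to be the identity, where the identity holds trivially. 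So $\DotEst$ outputs exactly the right-hand side, with each of the three terms replaced by an $\FPSmall$ estimate.

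Next, we invoke \Cref{thm:obliv:small:fp} three times with failure probability $\delta=1/\poly(n)$. Each call is on a fixed vector $\bx$, $\bq$, or $\bx-\bq$ supplied to the sketch, so the oblivious guarantee applies. Since the sketches are linear, the sketch of $\bx-\bq$ can be formed from the sketches of $\bx$ and $\bq$ if they share randomness, though separate sketches also work. A union bound over the three events gives, with high probability, $|\widetilde{X}-\|\bx\|_p^p|\le\eps\|\bx\|_p^p$, $|\widetilde{Q}-\|\bq\|_p^p|\le\eps\|\bq\|_p^p$, and $|\widetilde{D}-\|\bx-\bq\|_p^p|\le\eps\|\bx-\bq\|_p^p$.

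By the triangle inequality, the output then differs from $\langle f(\bx),f(\bq)\rangle$ by at most $\frac{\eps}{2}(\|\bx\|_p^p+\|\bq\|_p^p+\|\bx-\bq\|_p^p)$, which is within the claimed bound. For space, each call uses $\O{\frac{1}{\eps^2}\log^2 n\log\frac1\delta}=\O{\frac{1}{\eps^2}\log^3 n}$ bits when $\delta=1/\poly(n)$, and three calls stay within this bound.

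There is no real obstacle here. The only points needing care are confirming that $f(\mathbf{0})=\mathbf{0}$ makes the norms in the embedded space equal the $F_p$ moments, and noting that the high-probability guarantee is for fixed, non-adaptive inputs. Robustness under adaptivity is handled later by the surrounding framework, not by this lemma.
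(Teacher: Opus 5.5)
Your proposal is correct and matches the paper's proof: the polarization identity $\langle f(\bx),f(\bq)\rangle=\frac{1}{2}(\|\bx\|_p^p+\|\bq\|_p^p-\|\bx-\bq\|_p^p)$ from the isometric embedding with $f(\mathbf{0})=\mathbf{0}$, three $\FPSmall$ guarantees combined by the triangle inequality, and the space bound from \Cref{thm:obliv:small:fp} with $\delta=1/\poly(n)$. Your sharper factor of $\frac{1}{2}$ in the error, which the paper simply drops, and your explicit treatment of $p=0$ and $p=2$ are harmless refinements.
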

\begin{proof}
By the correctness of $\FPSmall$, we have:
\begin{align*}
\left\lvert\widetilde{X}-\|\bx\|_p^p\right\rvert&\le\eps\cdot\|\bx\|_p^p\\
\left\lvert\widetilde{Q}-\|\bq\|_p^p\right\rvert&\le\eps\cdot\|\bq\|_p^p\\
\left\lvert\widetilde{D}-\|\bx-\bq\|_p^p\right\rvert&\le\eps\cdot\|\bx-\bq\|_p^p.
\end{align*}
By triangle inequality, it follows that
\[\left\lvert\widetilde{X}+\widetilde{Q}-\widetilde{D}-\|\bx\|_p^p-\|\bq\|_p^p+\|\bx-\bq\|_p^p\right\rvert\le\eps\cdot\|\bx\|_p^p+\eps\cdot\|\bq\|_p^p+\eps\cdot\|\bx-\bq\|_p^p.\]
By \Cref{thm:embed}, we have 
\[\|\bx-\bq\|_p^p=\|f(\bx)-f(\bq)\|_2^2=\|f(\bx)\|_2^2+\|f(\bq)\|_2^2-2\cdot\langle f(\bx),f(\bq)\rangle.\]
Hence by \Cref{thm:embed},
\begin{align*}
\langle f(\bx),f(\bq)\rangle&=\frac{1}{2}\left(\|f(\bx)\|_2^2+\|f(\bq)\|_2^2-\|f(\bx)-f(\bq)\|_2^2\right)\\
&=\frac{1}{2}\left(\|\bx\|_p^p+\|\bq\|_p^p-\|\bx-\bq\|_p^p\right).
\end{align*}
Therefore,
\[\left\lvert\frac{1}{2}(\widetilde{X}+\widetilde{Q}-\widetilde{D})-\langle f(\bx),f(\bq)\rangle\right\rvert\le\eps\cdot\|\bx\|_p^p+\eps\cdot\|\bq\|_p^p+\eps\cdot\|\bx-\bq\|_p^p.\]
The space complexity follows from requiring $(1+\eps)$-approximation to the $F_p$ moment, using the subroutine from \Cref{thm:obliv:small:fp}. 
\end{proof}

\begin{lemma}
\label{lem:coefficient:sum}
Let $\bA$ be the $n \times L$ matrix where the $i$-th column is $\bA_i = \bq_i$. 
Let $\by = \arg \min_{\bx \in \mathbb{R}^L} \|\bA \bx - \bz \|_2^2$ subject to the constraint $\sum_{i = 1}^L x_i = 1$. 
Then, $\|\bA \by \|_2 \leq \|\bz - \bq_1\|_2 + \|\bz\|_2$. 
\end{lemma}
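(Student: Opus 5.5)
The plan is to compare the constrained minimizer with the feasible point $\be_1 \in \mathbb{R}^L$ and then apply the triangle inequality. The constraint set $\{\bx : \sum_{i=1}^L x_i = 1\}$ is nonempty and contains $\be_1$, for which $\bA\be_1 = \bq_1$.

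First, I would note that the objective $\|\bA\bx - \bz\|_2^2$ is a convex quadratic restricted to an affine subspace. Its infimum is therefore attained, and $\by$ is well defined as any minimizer. If $\bA$ is rank deficient, $\by$ need not be unique, but $\bA\by$ is unique: it is the projection of $\bz$ onto the affine set $\{\bA\bx : \sum_i x_i = 1\}$. In any case, the argument below works for every minimizer.

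Second, by optimality of $\by$ and feasibility of $\be_1$,
\[
\|\bA\by - \bz\|_2 \le \|\bA\be_1 - \bz\|_2 = \|\bq_1 - \bz\|_2 .
\]

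Third, the triangle inequality gives
\[
\|\bA\by\|_2 \le \|\bA\by - \bz\|_2 + \|\bz\|_2 \le \|\bz - \bq_1\|_2 + \|\bz\|_2 ,
\]
which is exactly the claimed bound.

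There is no real obstacle. The only subtle point is existence of the minimizer. This follows because $\bx \mapsto \bA\bx$ maps the affine constraint set onto an affine (hence closed) subset of $\mathbb{R}^n$, and the distance from $\bz$ to a closed nonempty set is attained. Everything else is the one-line comparison with the feasible point $\be_1$.
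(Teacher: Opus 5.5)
Your proof is correct and is essentially the paper's argument. The paper passes through the auxiliary matrix $\bB$ with columns $\bq_i - \bq_1$ (so $\bB\bx = \bA\bx - \bq_1$ on the constraint set) and uses optimality to get $\|\bB\by - (\bz - \bq_1)\|_2 \le \|\bz - \bq_1\|_2$, which is exactly your comparison with the feasible point $\be_1$; it then applies the same triangle inequality, while your justification that a minimizer exists is a small addition the paper omits.
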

\begin{proof}
Define matrix $\bB \in \mathbb{R}^{n \times L}$ whose $i$-th column is given by $\bB_i = \bq_i - \bq_1$. 
Note that for any $\bx$ such that $\sum_{i =1}^L x_i = 1$, we have
\[\bB \bx = \sum_{i = 1}^L (\bq_i - \bq_1) \bx_i = \sum_{i = 1}^L x_i \bq_i - \left(\sum_{i = 1}^L x_i \right) \bq_1 = \bA \bx - \bq_1. \]
Consequently, it follows that $\by = \arg \min_{\bx \in \mathbb{R}^L} \|\bB \bx - (\bz - \bq_1)\|_2^2$. 
Since $\by$ is optimal, 
\[\|\bB \by - (\bz - \bq_1)\|_2 \leq \|\bz - \bq_1\|_2.\]
Thus, we have that 
\[\|\bA\by\|_2 = \|\bB \by + \bq_1\|_2 \leq \|\bB \by - (\bz - \bq_1)\|_2 + \|\bz\|_2 \leq \|\bz - \bq_1\|_2 + \| \bz\|_2,\]
as desired.
\end{proof}

\section{Robust \texorpdfstring{$F_p$}{Fp} Algorithm for \texorpdfstring{$p\in[0,2)$}{p in [0,2)}}
In this section, we present our adversarially robust algorithm for $F_p$ moment estimation on insertion-deletion streams, for $p\in[0,2)$. 
Our goal is to generalize the estimator, corrector, and learner framework developed for $F_2$ moment estimation in \cite{GribelyukLWYZ26}. 
The key challenge is that $L_p$ spaces for $p \neq 2$ are no longer Hilbert spaces, i.e., there is no inner product that induces the $L_p$ norm, so geometric reasoning based on orthogonality or the Pythagorean theorem does not apply. Instead, our analysis relies on the existence of an embedding of $L_p^p$ into $L_2^2$.

\subsection{Subroutines for Small Moment Estimation}
We next describe an algorithm for $L_p$ norm estimation, with $p\in[0,2)$. 
Utilizing \Cref{thm:embed}, there exists an embedding $f$ from $F_p$ to $F_2$, so that we have
\[\|\bx-\by\|_p^p=\|f(\bx)-f(\by)\|_2^2=\|f(\bx)\|_2^2-2\langle f(\bx),f(\by)\rangle+\|f(\by)\|_2^2.\]

Note that for all $p \in [0,2)$ we have $f(\mathbf{0})=\mathbf{0}$, so that 
\[\|f(\bx)\|_2^2=\|f(\bx)-f(\mathbf{0})\|_2^2=\|\bx-\mathbf{0}\|_p^p=\|\bx\|_p^p\]
and thus
\[\|\bx-\by\|_p^p=\|\bx\|_p^p-2\langle f(\bx),f(\by)\rangle+\|\by\|_p^p.\]
Then it follows that we can use $F_p$ estimation algorithms to estimate $\|\bx\|_p^p$ and $\|\by\|_p^p$. 
Moreover, since $\|\bx-\by\|_p^p=\|f(\bx)-f(\by)\|_2^2$, then we can further estimate $2\langle f(\bx),f(\by)\rangle$ using $F_p$ estimation algorithms, such as in $\DotEst$. 

In this subsection, we design and analyze a robust algorithm which estimates the $F_p$ moment up to $\eps \cdot A$ additive error when the underlying input vector satisfies $\|\bx\|_p^p = \O{A}$. Then, in \Cref{sec:mult:error} we show how to modify our $F_p$ estimation algorithm to obtain the $(1+\eps)$ multiplicative error guarantee.

\paragraph{Estimator, corrector, and learner.} 
As in \cite{GribelyukLWYZ26}, we have a learner that maintains an approximation $\bv$ to $f(\bz)$. 
By analogy, we can update $\bv$ by adding $\eps \cdot (\bv + f(\bq))$ each time that the estimator is incorrect on a query $\bq$; however, in this setting, this update rule may result in large coefficients in the linear combination $\bv = \sum_{i \in [L]} \alpha_i \cdot f(\bq_i)$. 
Thus, we modify the previous update rule as follows: we add the vector $\bq$ to a list $\calL_i$ and solve a kernel regression problem to find the vector $\bv$ in the span of $\calL_i$ that best approximates $\bz$. 
Crucially, we embed all vectors into $L_2$ space so that it suffices to solve an $L_2$ regression problem in the kernel space, where we can compute the closed-form solution using a linear combination of dot products of either the form $\langle f(\bz),f(\bq_i)\rangle$ or $\langle f(\bq_i),f(\bq_j)\rangle$, which can be computed using $\O{\frac{1}{\eps^2} \log^3 n}$ bits of space by \Cref{lem:dotest}.

\begin{algorithm}[!htb]
\caption{$\EstLevel(i)$ for $F_p$ moment estimation}
\label{alg:est:level:smallp}
\begin{algorithmic}[1]
\State{Let $\calP_i$ be the active block at level $i$}
\State{Let $\bz$ be the vector of the stream before $\calP_i$}
\State{Let $\bq$ be the active query in $\calP_i$}
\State{Subroutine $\FPSmall$ outputs a $\left(1+\frac{\eps^2}{100H^2}\right)$-approximation for $F_p$ moment for $\|\bz -\bq\|_p^p = \Theta(A)$, robust to $\poly\left(\frac{1}{\eps},\log n\right)$ adaptive queries}
\If{$i=1$}
\State{$P_i\gets 0$}
\State{$Q_i\gets\EmbedEst(\bq,\bz)$}
\Else
\State{Let $\calL_i$ be the current list from $\MaintainList$ for the active block at level $i$}
\State{Let $\bv$ be the current iterate from $\GetIterate(\bz,\calL_i)$}
\State{$P_i\gets\EmbedEst(\bz,\bv) + \nu^2 \|\alpha\|_2^2$} \Comment{$\alpha$ is the vector of coefficients of $f(\bq_i)$ in $\bv = \sum_{ i = 1}^L \alpha_i f(\bq_i)$.}
\State{$Q_i\gets\EstLevel(i-1) +\nu^2 \|\alpha\|_2^2$}
\Comment{Estimate $\|\bv - f(\bq)\|_2^2$ as in \Cref{lem:recursion:computation}, recurse on $\|\sum_{i = 1}^L \alpha_i \cdot f(\bq_i - \bq_L) - f(\bq_{L+1})\|_2^2$, which is passed to $\EstLevel(i-1)$.}
\State{Let $A_i$ be the output of $\FPSmall(\bz-\bq)$}
\If{$(P_i+Q_i) \in A_i \pm \frac{6i\eps}{100 H} A$} 
\State{\Return $P_i+Q_i$.}
\Else
\State{\Return $A_i$}
\EndIf
\EndIf
\end{algorithmic}
\end{algorithm}

\begin{algorithm}[!htb]
\caption{$\MaintainList(i)$ for $F_p$ moment estimation}
\label{alg:maintain:list:smallp}
\begin{algorithmic}[1]
\State{Let $\calP_i$ be the active block at level $i$}
\State{Let $\bz$ be the stream corresponding to all updates prior to $\calP_i$}
\State{Let $\bq$ be the part of the query in block $\calP_i$}
\State{$Z_i\gets\EstLevel(i)$}
\State{Let $\calA_i$ be an $F_p$ moment estimation algorithm}\Comment{Robust to $\poly\left(\frac{1}{\eps},\log n\right)$ adaptive queries}
\If{$Z_i\notin\left[\calA_i(\bz-\bq)- \frac{6(i+1)\eps A}{100H},\calA_i(\bz-\bq)+ \frac{6(i+1)\eps A}{100H}\right]$}
\State{Add $\bq$ to $\calL_{i+1}$}
\EndIf

\end{algorithmic}
\end{algorithm}

Formally, the procedure $\EmbedEst$ in \Cref{alg:embed:est} computes the squared distance $\|f(\bz)-\bv\|_2^2$ in the kernel space using estimates of the norms and pairwise dot products. 
The procedure $\EstLevel$ in \Cref{alg:est:level:smallp} then recursively estimates the $F_p$ moment at each recursive level $i$ by combining the embedded approximation from $\EmbedEst$ with the estimate from the previous level. 
Meanwhile, $\MaintainList$ in \Cref{alg:maintain:list:smallp} maintains a list $\calL_i$ at level $i$, which includes each query vector $\bq$ for which the corrector flags the output from the estimator as being inaccurate. 
When this occurs, the subroutine $\GetIterate$ in \Cref{alg:get:iterate} computes $\bv$ by finding a near-optimal approximation of $f(\bz)$ in the span of the list $\calL_i$. 

\paragraph{Recursion.}
In the top-level of the algorithm, we use the \DotEst subroutine to estimate $\|f(\bz) - \bv\|_2^2$. Following the robust $L_2$ algorithm of \cite{GribelyukLWYZ26}, we need to recursively compute the remaining quantity $\|\bv - f(\bq)\|_2^2$ in the next level, where the evolving portion of the stream $\bq = \bq_L + \bq_{L+1}$ is divided into blocks which must be sketched independently. Unfortunately, we do not have access to the sketch of $\bv$ in the embedded space, and in fact, $\bv$ may not have a pre-image in the original space. This significantly complicates our implementation of the recursion and related computations. In the following lemma, we show that all terms of $\|\bv - f(\bq)\|_2^2$ can be computed robustly, and that it suffices to recursively compute only \textit{one} squared-norm. 

\begin{lemma}\label{lem:recursion:computation}
    At each level $i$ of the recursion in \Cref{alg:est:level:smallp}, $Q_i = \EstLevel(i-1)$ executes a single recursive call to estimate the squared norm $\left \|\sum_{i = 1}^L \alpha_i f(\bq_i - \bq_{L}) - f(\bq_{L + 1})\right \|_2^2$, where $\calL_i = \{f(\bq_i)\}_{i \in [L]}$ are the previous queries on which the estimator was incorrect and the current iterate is $\bv = \sum_{i = 1}^L \alpha_i\cdot f(\bq_i)$.
\end{lemma}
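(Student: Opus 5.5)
The plan is to expand $\|\bv - f(\bq)\|_2^2$ algebraically and isolate a single ``evolving'' squared norm. All other terms will involve only fixed past queries and the fixed block prefix $\bq_L$, so they can be estimated by $\DotEst$ (\Cref{lem:dotest}) on non-adaptive inputs. The key tool is the identity $\langle f(\bx), f(\by)\rangle = \frac12(\|\bx\|_p^p + \|\by\|_p^p - \|\bx-\by\|_p^p)$ from \Cref{thm:embed} (and \Cref{thm:embed:lzero} for $p=0$), together with the translation invariance of the $L_p^p$ distance. Translation invariance says $\|f(\bx)-f(\by)\|_2^2 = \|(\bx-\bw)-(\by-\bw)\|_p^p = \|f(\bx-\bw)-f(\by-\bw)\|_2^2$ for any shift $\bw$. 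Here $\bq = \bq_L + \bq_{L+1}$, where $\bq_L$ is the completed portion and $\bq_{L+1}$ is the active sub-block.

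The first step is to shift by $\bq_L$. Write $\bw_i := \bq_i - \bq_L$ and $\by := \bq_{L+1}$, so that $\bq_i - \bq = \bw_i - \by$ and $\bq = \bq_L + \by$. Expanding $\|\bv - f(\bq)\|_2^2$ gives three groups of terms. The first is $\sum_{i,j}\alpha_i\alpha_j\langle f(\bq_i),f(\bq_j)\rangle$, which is fixed. The second is $-2\sum_i\alpha_i\langle f(\bq_i),f(\bq)\rangle$. The third is $\|f(\bq)\|_2^2 = \|\bq\|_p^p$. By the polarization identity, each cross term is
\[\langle f(\bq_i),f(\bq)\rangle=\tfrac12\bigl(\|\bq_i\|_p^p+\|\bq\|_p^p-\|\bw_i-\by\|_p^p\bigr).\]
Substituting, the coefficient of $\|\bq\|_p^p$ becomes $1-\sum_i\alpha_i$. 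The remaining evolving quantities are the distances $\|\bw_i-\by\|_p^p = \|f(\bw_i)-f(\by)\|_2^2$.

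The second step is to compare with the target norm. Expand
\[\Bigl\|\sum_i\alpha_i f(\bw_i)-f(\by)\Bigr\|_2^2=\sum_{i,j}\alpha_i\alpha_j\langle f(\bw_i),f(\bw_j)\rangle-2\sum_i\alpha_i\langle f(\bw_i),f(\by)\rangle+\|\by\|_p^p.\]
Then polarize again: $-2\langle f(\bw_i),f(\by)\rangle = \|\bw_i-\by\|_p^p - \|\bw_i\|_p^p - \|\by\|_p^p$. This shows that $\sum_i\alpha_i\|\bw_i-\by\|_p^p$ equals the recursive norm minus fixed terms $\sum_{i,j}\alpha_i\alpha_j\langle f(\bw_i),f(\bw_j)\rangle - \sum_i\alpha_i\|\bw_i\|_p^p$, plus the term $(\sum_i\alpha_i - 1)\|\by\|_p^p$.

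Combining the two steps, $\|\bv-f(\bq)\|_2^2$ equals the recursive norm, plus fixed terms in $\bq_i$, $\bw_i$, and $\bq_L$, plus $(1-\sum_i\alpha_i)\bigl(\|\bq_L+\by\|_p^p - \|\by\|_p^p\bigr)$.

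The main obstacle is this leftover evolving term, which vanishes exactly when $\sum_i\alpha_i=1$. I will therefore argue that $\GetIterate$ normalizes the coefficients so that they sum to one. This matches the affine constraint in \Cref{lem:coefficient:sum} and the ``output $\bu_1+\bu'$'' description, under which the coefficients of the difference vectors $\bq_i - \bq_1$ are free and $\bq_1$ carries the residual mass. If the coefficients are not exactly affine, the fallback is to fold the leftover into the recursion by appending $\bq_L$ as an extra list element with coefficient $1-\sum_i\alpha_i$. Since $f(\bq_L - \bq_L) = f(\mathbf{0}) = \mathbf{0}$, this does not change the recursive norm, but it restores the affine normalization. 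Either way, exactly one squared norm with an evolving argument remains, namely $\|\sum_i\alpha_i f(\bq_i-\bq_L)-f(\bq_{L+1})\|_2^2$, which is passed to $\EstLevel(i-1)$. All other terms involve only vectors fixed before $\bq_{L+1}$ began, so they are estimated by $\DotEst$ and $\FPSmall$.
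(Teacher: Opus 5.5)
Your proof is correct and is essentially the paper's argument. You expand $\|\bv-f(\bq)\|_2^2$, polarize the cross terms, and use the affine normalization $\sum_i\alpha_i=1$, which $\GetIterate$ enforces by returning $\bu_1+\bu'$, to kill the evolving $\|f(\bq)\|_2^2$ term. You then shift by $\bq_L$ and polarize once more to isolate the single recursive norm; the only difference is that you carry the $(1-\sum_i\alpha_i)$ leftover explicitly before invoking the constraint. Drop the fallback, though. Appending $\bq_L$ with coefficient $1-\sum_i\alpha_i$ replaces $\bv$ by $\bv+(1-\sum_i\alpha_i)f(\bq_L)$, so it estimates a different quantity, and the leftover $(1-\sum_i\alpha_i)(\|\bq\|_p^p-\|\bq_{L+1}\|_p^p)$ is genuinely evolving. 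Appending $\mathbf{0}$ instead would preserve $\bv$ but add $(1-\sum_i\alpha_i)f(-\bq_L)$ to the recursive vector, so the lemma as stated really does rely on $\sum_i\alpha_i=1$.
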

\begin{proof}
    
We expand $\|\bv - f(\bq)\|_2^2$ and describe how each term can be computed robustly.
\begin{align*}
\|\bv - f(\bq)\|_2^2&= \left\|\sum_{i \in [L]} \alpha_i \cdot f(\bq_i) - f(\bq) \right\|_2^2  \\ &
= \sum_{1 \leq i,j \leq L} \left(\alpha_i \cdot \alpha_j \cdot \langle f(\bq_i), f(\bq_j) \rangle \right) - 2 \sum_{i \in [L]} \alpha_i \langle f(\bq_i), f(\bq) \rangle + \|f(\bq)\|_2^2
\end{align*}
We consider each of the three types of terms above. We begin with the simplest part: note that since past queries $\bq_i \in \calL$ are fixed, the inner products of the form $\langle f(\bq_i), f(\bq_j) \rangle$ can be computed robustly by invoking the $\DotEst$ subroutine directly. 

Next, we can write the last two terms as follows:
\begin{align*}
    -2\sum_{i \in [L]} \alpha_i \langle f(\bq_i), f(\bq) \rangle + \|f(\bq)\|_2^2 &=  - \sum_{i = 1}^L \alpha_i \left(\|f(\bq_i)\|_2^2 + \|f(\bq)\|_2^2 - \|f(\bq_i) - f(\bq)\|_2^2 \right) + \|f(\bq)\|_2^2 \\
    & = \sum_{i= 1}^L \alpha_i \|f(\bq_i) - f(\bq)\|_2^2 - \sum_{i = 1}^L \alpha_i \|f(\bq_i)\|_2^2 + \left(1-\sum_{i = 1}^L \alpha_i\right) \cdot \|f(\bq)\|_2^2
\end{align*}

The term $\sum_{i = 1}^L \alpha_i \cdot \|f(\bq_i)\|_2^2$ consists only of previous fixed queries $\bq_i \in \calL$ and thus can be computed directly by using \FPSmall sketches. 
Additionally, since the coefficients of $\bz' = \sum_{i = 1}^L \alpha_i f(\bq_i)$ satisfy $\sum_{i = 1}^L \alpha_i = 1$, the third term above is identically zero. 
Therefore, it suffices to robustly compute the value of $\sum_{i = 1}^L \alpha_i \|f(\bq_i) - f(\bq)\|_2^2$. 
To this end, observe that since $f$ is an isometric embedding into $L_2^2$, we have that \[\|f(\bq_i) - f(\bq) \|_2^2 = \|\bq_i - \bq_L - \bq_{L+1}\|_p^p = \|f(\bq_i - \bq_L) -f(\bq_{L+1})\|_2^2.\]

Thus, the remaining term can be expressed as
\begin{align*}
\sum_{i = 1}^L \alpha_i \cdot \|f(\bq_i) - f(\bq)\|_2^2 &= \sum_{i = 1}^L \alpha_i \cdot  \|f(\bq_i - \bq_L) - f(\bq_{L+1})\|_2^2 \\ 
&= \sum_{i =1}^L \alpha_i \cdot \|f(\bq_i - \bq_L)\|_2^2 - 2 \left \langle \sum_{i = 1}^L \alpha_i f(\bq_i - \bq_L), f(\bq_{L+1}) \right \rangle  + \|f(\bq_{L+1})\|_2^2.
\end{align*}
At this point, we recall that our $L_2$ algorithm initializes a new block each time that the estimator at that level is inaccurate. 
By construction, we begin a new block every time that a query is added to the list, so in particular $\bq_L \in \calL$ and $\{\bq_i - \bq_L\}_{i \in [L]}$ and $\bq_{L+1}$ can be sketched using independent sketches. 
Moreover, each $\bq_i - \bq_{L}$ is a fixed vector so we can simply use \FPSmall to estimate each of the norms $\|f(\bq_i - \bq_L)\|_2^2$. 
We consider the last two terms:
\begin{align*}
&- 2 \left \langle  \sum_{i = 1}^L \alpha_i f(\bq_i - \bq_L), f(\bq_{L+1}) \right \rangle  + \|f(\bq_{L+1})\|_2^2 \\
& = \left \|\sum_{i = 1}^L \alpha_i f(\bq_i - \bq_L) - f(\bq_{L+1}) \right \|_2^2 - \|f(\bq_{L+1})\|_2^2 - \left \|\sum_{i = 1}^L \alpha_i f(\bq_i - \bq_L) \right \|_2^2  + \|f(\bq_{L+1})\|_2^2 \\
& = \left \|\sum_{i = 1}^L \alpha_i f(\bq_i - \bq_L) - f(\bq_{L + 1}) \right \|_2^2 - \left \|\sum_{i = 1}^L \alpha_i f(\bq_i - \bq_L) \right \|_2^2
\end{align*}
As before, the second term only depends on the fixed vectors $\{\bq_i - \bq_L\}_{i \in [L]}$ so it can be estimated by invoking the \DotEst subroutine. 
Finally, we can recursively estimate the first term by applying our robust $L_p$ algorithm. 
\end{proof}

\begin{remark}
The input to the recursion at each level will be of the form 
\[\left\|\sum_{i = 1}^L \alpha_i \cdot f(\bq_i - \bq_L) - f(\bq_{L+1})\right\|_2^2,\]
and in particular the estimator which is computed in the next level is given by 
\[\left\|\sum_{ i = 1}^L \alpha_i \cdot f(\bq_i - \bq_L) - \bv'\right\|_2^2 + \|\bv' - f(\bq_{L+1})\|_2^2,\]
where $\bv'$ is the iterate at the next level. 
Importantly, note that $\bv' = \sum_{i = 1}^L \beta_i \cdot f(\bq_i')$s represents the projection of $\sum_{i = 1}^L \alpha_i \cdot f(\bq_i - \bq_L)$ onto the list of previous incorrect queries $\{\bq_i'\}_{i \in [L']}$ of the next level, and can be estimated by computing $\O{L^2}$ pair-wise inner products between list vectors in the adjacent levels.
\end{remark}

In order to implement the corrector at each level of the recursion, in \Cref{lem:recursion:bounded:norm} we show that the input to the recursion at each level has bounded norm.

\begin{algorithm}[H]
\caption{$\EmbedEst(\bz,\bv)$ for $F_p$ moment estimation}
\label{alg:embed:est}
\begin{algorithmic}[1]
\State{Let $\bv=\alpha_1 f(\bq_1)+\ldots+\alpha_L f(\bq_L)$}
\State{Let $Z$ be an estimate of $\|\bz\|_p^p$ from $\FPSmall$}
\State{Let $U_1,\ldots,U_L$ be estimates of $\|\bq_1\|_p^p,\ldots,\|\bq_L\|_p^p$ from $\FPSmall$}
\State{Let $\{C_i\}$ be estimates of $\langle f(\bz),f(\bq_i)\rangle$ using $\EstDot$}
\State{Let $\{D_{i,j}\}$ be estimates of $\langle f(\bq_i),f(\bq_j)\rangle$, with $i\neq j$, using $\EstDot$}
\State{Use these quantities to return an estimate of $\|f(\bz)-\bv\|_2^2$}
\end{algorithmic}
\end{algorithm}

\begin{algorithm}[H]
\caption{$\GetIterate(\bz,\calL_i)$ for $F_p$ moment estimation}
\label{alg:get:iterate}
\begin{algorithmic}[1]
\State{Let $L$ be an upper bound on $|\calL_i|$, $\nu_0\gets\O{\frac{\eps^2}{H^2 \cdot \sqrt{L}}}$, and set $\nu \gets \nu_0 \sqrt{A}$.}
\State{Let $\bu_j=f(\bq_j)+\nu\cdot\be_{N+j} \in \mathbb{R}^{N + L}$}
\State{Let $\calL_i' = \{\bu_j-\bu_1 \ | \ f(\bq_j) \in \calL_i \}$}
\State{Let $Z$ be an estimate of $\|\bz\|_p^p$ from $\FPSmall$}
\State{Let $U_1,\ldots,U_L$ be estimates of $\|\bq_1\|_p^p,\ldots,\|\bq_L\|_p^p$ from $\FPSmall$}
\State{Let $\{C_j\}$ be estimates of $\langle f(\bz),f(\bq_j)\rangle$ using $\EstDot$}
\State{Let $\{D_{j,k}\}$ be estimates of $\langle f(\bq_j),f(\bq_k)\rangle$, with $j\neq k$, using $\EstDot$}
\State{Use these quantities to return $\argmin_{\bu'\in\Span(\calL_i')}\|f(\bz)-\bu_1-\bu'\|_2^2$}
\State{Let $\bu'=\sum_{j=1}^L\beta_j\bu_j$}
\State{Return coefficients of $\bu_1 + \bu'$ as coefficients of $f(\bq_1),\ldots, f(\bq_{L})$.}

\end{algorithmic}
\end{algorithm}
\FloatBarrier

\begin{lemma}[Bounding the norm]\label{lem:recursion:bounded:norm}
For any level $k \in [H]$, let $\bw_k$ be the input to the recursion, let $\bv$ be the iterate at level $k$ which approximates $f(\bz)$. Then $\|\bw_k\|_2^2 \leq \|\bz\|_p^p + \O{A}$.
\end{lemma}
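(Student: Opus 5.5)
The input to the recursion at level $k$ is, by \Cref{lem:recursion:computation}, the vector
\[
\bw_k=\sum_{j=1}^L \alpha_j f(\bq_j-\bq_L)-f(\bq_{L+1}),
\]
where $\bv=\sum_j\alpha_j f(\bq_j)$ is the iterate produced by $\GetIterate$. The plan is to bound $\|\bw_k\|_2^2$ by relating it to $\|\bv-f(\bq)\|_2^2$, which is controlled by $\|\bv\|_2$ and $\|f(\bq)\|_2$. We then bound $\|\bv\|_2$ using the affine-constrained regression structure of $\GetIterate$ together with \Cref{lem:coefficient:sum}.

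\textbf{Step 1 (reduce to $\|\bv-f(\bq)\|_2^2$).} By the algebra in the proof of \Cref{lem:recursion:computation}, using $\sum_j\alpha_j=1$, we have the identity
\[
\|\bw_k\|_2^2=\|\bv-f(\bq)\|_2^2-\sum_j\alpha_j\|f(\bq_j-\bq_L)\|_2^2+\sum_j\alpha_j\|f(\bq_j)\|_2^2+\Big\|\sum_j\alpha_j f(\bq_j-\bq_L)\Big\|_2^2 .
\]
So it suffices to bound each term on the right. The middle two terms equal $\sum_j\alpha_j(\|\bq_j\|_p^p-\|\bq_j-\bq_L\|_p^p)$. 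By the footnoted invariant, every list vector and every difference of list vectors has $F_p$ moment $\O{A}$. Hence these terms are at most $\|\alpha\|_1\cdot\O{A}$.

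\textbf{Step 2 (bound $\|\bv\|_2$).} The regression in $\GetIterate$ minimizes $\|f(\bz)-\sum_j\alpha_j\bu_j\|_2^2$ over $\alpha$ with $\sum_j\alpha_j=1$, in the lifted space with columns $\bu_j=f(\bq_j)+\nu\be_{N+j}$. Applying \Cref{lem:coefficient:sum} in the lifted space gives
\[
\Big\|\sum_j\alpha_j\bu_j\Big\|_2\le\|f(\bz)-\bu_1\|_2+\|f(\bz)\|_2 .
\]
Since $\|\bv\|_2\le\|\sum_j\alpha_j\bu_j\|_2$, this yields
\[
\|\bv\|_2\le\sqrt{\|\bz-\bq_1\|_p^p+\nu^2}+\sqrt{\|\bz\|_p^p} .
\]
Because $\|\bz-\bq_1\|_p^p=\O{A}$ and $\nu^2=\nu_0^2A=o(A)$, we get $\|\bv\|_2^2\le \O{\|\bz\|_p^p+A}$. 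The same argument also bounds the regularization part: $\nu^2\|\alpha\|_2^2\le\|\sum_j\alpha_j\bu_j\|_2^2=\O{A}$. This gives $\|\alpha\|_2\le\O{\sqrt{A}}/\nu=\O{1/\nu_0}$, hence $\|\alpha\|_1\le\sqrt{L}/\nu_0$.

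\textbf{Step 3 (assemble).} First, $\|\bv-f(\bq)\|_2^2\le2\|\bv\|_2^2+2\|\bq\|_p^p$, and $\|\bq\|_p^p=\O{A}$ since both $\bz$ and $\bz-\bq$ are bounded. Second, the last term $\|\sum_j\alpha_j f(\bq_j-\bq_L)\|_2^2$ equals $\|\bv-f(\bq_L)\|_2^2$ plus the same kind of difference-of-norms correction as in Step 1, by the identical expansion with $\bq_{L+1}=0$. It is therefore bounded by $\O{\|\bz\|_p^p+A}$ plus $\|\alpha\|_1\O{A}$.

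\textbf{Main obstacle.} The obstacle is the $\|\alpha\|_1$ factor multiplying $\O{A}$, which is polynomial in $L,H,1/\eps$ rather than constant. Even so, the correction terms collapse to $\O{A}$ only if the coefficient-weighted differences $\sum_j\alpha_j(\|\bq_j\|_p^p-\|\bq_j-\bq_L\|_p^p)$ are controlled more cleverly. The first thing I would try is to rewrite them as $2\langle \bv, f(\bq_L)\rangle-\|\bq_L\|_p^p$, using the same $\sum_j\alpha_j=1$ telescoping that appears in the proof of \Cref{lem:recursion:computation}. Then Cauchy--Schwarz with Step 2 bounds this by $\O{\|\bz\|_p^p+A}$ directly, without any $\|\alpha\|_1$ factor. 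With that rewriting, every term is $\O{\|\bz\|_p^p+A}$. If the constant in front of $\|\bz\|_p^p$ needs to be exactly $1$, I would use the Pythagorean optimality of the projection ($\|\bv\|_2^2\le\|f(\bz)\|_2^2+\O{\nu^2\|\alpha\|^2}$) in place of the cruder triangle inequality.
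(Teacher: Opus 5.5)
\textbf{There is a genuine gap at the crux of the lemma.} The lemma requires bounding $\|\sum_j\alpha_j f(\bq_j-\bq_L)\|_2^2$ by $\O{\|\bz\|_p^p+A}$ with no $\|\alpha\|_1$ loss, and your Step 3 does not establish this. Setting $\bq_{L+1}=0$ in the Step 1 expansion turns $\bw_k$ into exactly this vector. The expansion then has the unknown quantity on both sides, so it says nothing about it. With your version of the identity it is worse: it yields the false equation $\|\bv-f(\bq_L)\|_2^2+\sum_j\alpha_j(\|\bq_j\|_p^p-\|\bq_j-\bq_L\|_p^p)=0$, which for $L=1$ reads $\|\bq_1\|_p^p=0$.

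The missing idea, which is the heart of the paper's proof, is a congruence argument. Distances $\|\cdot\|_p^p$ are translation invariant and $f$ is isometric, so by polarization $\langle f(\bq_i-\bq_L),f(\bq_j-\bq_L)\rangle=\langle f(\bq_i)-f(\bq_L),f(\bq_j)-f(\bq_L)\rangle$ for all $i,j$. Combined with $\sum_j\alpha_j=1$, this gives the exact identity $\|\sum_j\alpha_j f(\bq_j-\bq_L)\|_2=\|\bv-f(\bq_L)\|_2$, with no correction term at all. The paper then finishes via $\|\bw_k\|_2\le\|\bv\|_2+\|f(\bq_L)\|_2+\|f(\bq_{L+1})\|_2$ and \Cref{lem:coefficient:sum}.

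\textbf{Secondary issues.} Your Step 1 ``identity'' drops a $-\|\bv\|_2^2$ term from the expansion in \Cref{lem:recursion:computation}. For $L=1$ the left side is $\|\bq_2\|_p^p$, while your right side is $\|\bq_2\|_p^p+\|\bq_1\|_p^p$. Because the dropped term is nonpositive, your version survives as an upper bound.

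In fact, the same congruence argument applied to the full configuration makes Step 1 and the rewriting in your last paragraph unnecessary. Using $\|f(\bq_j-\bq_L)-f(\bq_{L+1})\|_2^2=\|\bq_j-\bq\|_p^p=\|f(\bq_j)-f(\bq)\|_2^2$, it gives $\|\bw_k\|_2=\|\bv-f(\bq)\|_2$ exactly. Combine this with your Step 2, which applies \Cref{lem:coefficient:sum} to the lifted regularized regression more carefully than the paper does. Together with $\|f(\bq)\|_2\le\|f(\bz)\|_2+\|f(\bz)-f(\bq)\|_2$, that finishes the proof.

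Your suggested refinement $\|\bv\|_2^2\le\|f(\bz)\|_2^2+\O{\nu^2\|\alpha\|_2^2}$ is false for a sum-to-one constrained projection. For $L=1$ it forces $\bv=f(\bq_1)$, so the bound would require $\|\bq_1\|_p^p\le\|\bz\|_p^p+o(A)$, which need not hold. It is also not needed: the paper's own argument absorbs the cross terms using $\|\bz\|_p^p=\O{A}$, so $\O{\|\bz\|_p^p+A}$ is what is actually proved there too.
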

\begin{proof}
At each level $k \in [H]$, the input to the recursion at the next level is 
\[\|\bw_k\|_2^2 = \left \|\sum_{i = 1}^{L} \alpha_i f(\bq_i - \bq_L) - f(\bq_{L+1}) \right \|_2^2,\]
where $\calL_i = \{f(\bq_i)\}_{i = 1}^L$ denotes the list at level $k$. 
Furthermore, recall that $\bv = \sum_{i = 1}^L \alpha_i \cdot f(\bq_i)$ is the projection of $f(\bz - \bq_1)$ onto the subspace spanned by the set of list vectors $f(\bq_i)$ for $i \in [L]$, and by \Cref{lem:coefficient:sum} we have $\sum_{i = 1}^L \alpha_i = 1$. Then, we have

\begin{align*}
\left \|\sum_{i = 1}^L \alpha_i f(\bq_i - \bq_L) \right \|_2^2  &= \sum_{i,j = 1}^{L} \alpha_i \cdot \alpha_j \langle f(\bq_i- \bq_L), f(\bq_j - \bq_L)\rangle \\ 
&= \frac{1}{2} \sum_{i,j = 1}^{L} \alpha_i\cdot \alpha_j \left(\|f(\bq_i) - f(\bq_L)\|_2^2 + \|f(\bq_j) - f(\bq_L)\|_2^2 - \|f(\bq_i) - f(\bq_j)\|_2^2  \right) \\ 
&= \frac{1}{2} \sum_{i, j = 1}^L \left(-2 \langle f(\bq_i), f(\bq_L) \rangle - 2\langle f(\bq_j), f(\bq_L) \rangle + 2 \langle f(\bq_i), f(\bq_j) \rangle + 2 \|f(\bq_L)\|_2^2  \right) \\ 
&= \sum_{i,j = 1}^L \alpha_i \cdot \alpha_j \langle f(\bq_i), f(\bq_j) \rangle - 2 \sum_{i = 1}^L \alpha_i \cdot \langle f(\bq_i), f(\bq_L) \rangle + \|f(\bq_L)\|_2^2 \\ 
&= \left\|\sum_{i = 1}^L \alpha_i f(\bq_i) - f(\bq_L)\right\|_2^2 = \|\bv - f(\bq_L)\|_2^2.
\end{align*}

In particular, it follows that 

\[\left \| \sum_{i = 1}^L \alpha_i \cdot f(\bq_i - \bq_L) \right \|_2 = \|\bv - f(\bq_L)\|_2 \leq \|\bv\|_2 + \|f(\bq_L)\|_2.\]
 Recall that by assumption, we have $\|f(\bq_L)\|_2 = \O{\sqrt{A}}$ and $\|f(\bq_{L+1})\|_2 = \O{\sqrt{A}}$. By triangle inequality, we conclude that 

\begin{align*}
    \left \|\sum_{i = 1}^L \alpha_i f(\bq_i - \bq_L)- f(\bq_{L+1}) \right \|_2 &\leq \|\bv - f(\bq_L)\|_2 + \|f(\bq_{L+1})\|_2 \\ & \leq \|\bv\|_2 + \|f(\bq_L)\|_2 + \|f(\bq_{L+1})\|_2  \\ & \leq \|\bv\|_2 + \O{\sqrt{A}}
\end{align*} 

Thus, $\left \|\sum_{i = 1}^L \alpha_i f(\bq_i - \bq_L) - f(\bq_{L+1}) \right \|_2^2 \leq \|\bv\|_2^2 + \O{A} + \|\bv\|_2 \cdot \O{\sqrt{A}}$ . By \Cref{lem:coefficient:sum} we know that $\|\bv\|_2^2 \leq \left(\|f(\bz)\|_2 + \|f(\bz) - f(\bq_1)\|_2\right)^2 \leq \|\bz\|_p^p + \O{A}$, which implies the claim.
\end{proof}

\paragraph{Update rule and regularized regression.}
Next, we discuss our modified update rule for the iterate $\bv$ in each level of the algorithm. 
In particular, let $\calL = \{f(\bq_i)\}_{i \in [L]}$ be the list of queries on which the algorithm was previously inaccurate. 
Then, we note that the $L_2$ algorithm of \cite{GribelyukLWYZ26} can be implemented by following the alternative update rule $\bv = \arg \min_{\by \in \mathbb{R}^L}{\|\bU \by - f(\bz)\|_2^2}$, where $\bU$ is the matrix with columns $f(\bq_i)$. Then, by expressing $\bv = \sum_{i = 1}^L \alpha_i \cdot f(\bq_i)$, we can approximate the estimator $\|f(\bz) - \bv\|_2^2 + \|\bv - f(\bq)\|_2^2$ by estimating the necessary inner products of the form $\langle f(\bz), f(\bq_i) \rangle$ or $\langle f(\bq_i), f(\bq_j) \rangle$ for $i, j \in [L]$ via the \DotEst subroutine. For instance, for the first term of the estimator, we have 

\[\|f(\bz) - \bv\|_2^2 = \|f(\bz)\|_2^2 - 2 \sum_{i = 1}^L \alpha_i \cdot \langle f(\bz), f(\bq_i) \rangle + \sum_{1 \leq i,j \leq L} \alpha_i \cdot \alpha_j \langle f(\bq_i), f(\bq_j) \rangle. \]

Similarly, by \Cref{lem:recursion:computation}, the second term $\|\bv- f(\bq)\|_2^2$ can be expanded into inner products involving the previous fixed vectors $f(\bq_i)$ and $f(\bz)$, and a single squared norm on which we recurse. By \Cref{lem:dotest}, each inner product $\langle f(\bz), f(\bq_i) \rangle$ can be estimated with additive error $\eps \cdot \left(\|\bz\|_p^p + \|\bq_i\|_p^p + \|\bz - \bq_i\|_p^p \right)$ by using $F_p$ estimation sketches, and similarly, $\langle f(\bq_i), f(\bq_j) \rangle$ can be estimated with additive error $\eps \cdot \left(\|\bq_i\|_p^p + \|\bq_j\|_p^p + \|\bq_i - \bq_j\|_p^p \right)$. By assumption, we have that $\|\bz\|_p^p = \Theta(A)$, $\|\bz - \bq_i\|_p^p = \Theta(A)$ and all queries $\bq_i$ also satisfy $\|\bq_i\|_p^p \leq \O{A}$, so each such inner product can be approximated up to $\eps \cdot A$ additive error. 
Therefore, in order to upper bound the total error incurred by our estimates via \DotEst, we must ensure that the coefficients $\alpha_i$ are bounded by $\poly\left(\frac{1}{\eps}, \log n \right)$. Since this is not true in general, we must modify our update rule, as we describe next.

To this end, we instead consider the vectors $f(\bq_i)$ with \textit{added} dimensions, i.e. let $\bu_i = f(\bq_i) + \nu \cdot \be_{N + i} \in \mathbb{R}^{N+L}$ for $i \in [L]$, and let $\calL'$ be the list containing $\bu_1, \ldots, \bu_L$. Then, we observe that projecting $f(\bz)$ onto $\calL'$ exactly corresponds to the \textit{regularized} regression problem $\argmin_{\bu' \in \Span(\calL)} \left(\|f(\bz) - \bu' \|_2^2 + \nu^2 \cdot \|\mathbf{\alpha}\|_2^2\right)$, where $\alpha \in \mathbb{R}^L$ denotes the vector of coefficients of the iterate. Let $\bu' = \sum_{i=1}^L \alpha_i \cdot \bu_i$ be the solution to this regularized regression problem; then, for a hard query $\bq$, we update the current iterate $\bv$ to $\bv' = \sum_{ i = 1}^L \alpha_i \cdot f(\bq_i)$.

Finally, recall that in \Cref{lem:recursion:computation} we required that the sum of the coefficients satisfies $\sum_{ i =1}^L \alpha_i = 1$. To ensure this, we simply impose an additional linear constraint and instead project $f(\bz) - \bu_1$ onto the affine subspace $\bu_1 + \Span \{ \bu_i - \bu_1 : i = 2, \ldots,L \}$. The following lemmas allow us to bound the condition number of the vectors $\bu_1,\ldots, \bu_{L}$.

\begin{lemma}\label{lem:condition:number}
Suppose $\|\bu_j\|_2=\O{A}$ for all $j\in[L]$ and let $\bU'$ be the matrix whose columns are the vectors $\bu_2-\bu_1,\ldots, \bu_L-\bu_1$. 
Then $\sigma_{\max}(\bU')=\O{\sqrt{L\cdot A}}$ and $\sigma_{\min}(\bU')\ge\nu_0 \sqrt{A}$. 
\end{lemma}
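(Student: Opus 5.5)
The plan is to treat the two bounds separately. For $\sigma_{\max}$ I will use the Frobenius norm. For $\sigma_{\min}$ I will restrict to the appended coordinates, where $\bU'$ contains a scaled copy of a well-conditioned matrix.

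\textbf{Upper bound.} Each column of $\bU'$ is $\bu_j-\bu_1$. Its norm satisfies $\|\bu_j-\bu_1\|_2\le \|\bu_j\|_2+\|\bu_1\|_2$. The hypothesis $\|\bu_j\|_2=\O{A}$ is evidently meant on the $\sqrt{A}$ scale, since $\|f(\bq_j)\|_2^2=\|\bq_j\|_p^p=\O{A}$ and $\nu^2=\nu_0^2A\le A$. So I read it as $\|\bu_j\|_2^2=\O{A}$, which gives $\|\bu_j-\bu_1\|_2=\O{\sqrt A}$. Then
\[\sigma_{\max}(\bU')\le\|\bU'\|_F=\Big(\sum_{j=2}^L\|\bu_j-\bu_1\|_2^2\Big)^{1/2}=\O{\sqrt{L\cdot A}}.\]

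\textbf{Lower bound.} Write $\bU'=\begin{pmatrix}\bF\\ \nu\bE\end{pmatrix}$. Here $\bF$ holds the first $N$ coordinates, with columns $f(\bq_j)-f(\bq_1)$. The matrix $\bE\in\mathbb{R}^{L\times(L-1)}$ has column $j$ equal to $\be_j-\be_1$. For any unit vector $\by\in\mathbb{R}^{L-1}$,
\[\|\bU'\by\|_2^2=\|\bF\by\|_2^2+\nu^2\|\bE\by\|_2^2\ge \nu^2\|\bE\by\|_2^2 .\]
So it suffices to show $\sigma_{\min}(\bE)\ge 1$. Index $\by$ by $j=2,\dots,L$. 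Then $\bE\by$ has coordinate $y_j$ in position $j\ge2$ and coordinate $-\sum_j y_j$ in position $1$. Hence
\[\|\bE\by\|_2^2=\|\by\|_2^2+\Big(\sum_j y_j\Big)^2\ge\|\by\|_2^2.\]
Equivalently, $\bE^\top\bE=I+\mathbf{1}\mathbf{1}^\top\succeq I$. Therefore $\sigma_{\min}(\bU')\ge\nu=\nu_0\sqrt{A}$.

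\textbf{Main obstacle.} The only delicate point is the lower bound. One might worry that subtracting $\bu_1$ destroys the clean $\nu I$ block, since the $(N+1)$-th coordinate is $-\nu$ in every column. The Gram computation $I+\mathbf{1}\mathbf{1}^\top$ shows this only helps. Beyond that, the remaining care is the normalization of the hypothesis on $\|\bu_j\|_2$. I would state explicitly that the $\sqrt{A}$ scaling is what is used.
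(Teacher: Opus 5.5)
Your proof is correct and follows the paper's argument. For $\sigma_{\max}$ you use the Frobenius-norm bound, reading the hypothesis as $\|\bu_j\|_2^2=\O{A}$ exactly as the paper's proof does, and for $\sigma_{\min}$ you use the appended $\nu$-scaled block. The paper only asserts that a scaled identity is appended, whereas you correctly note that after subtracting $\bu_1$ the appended block becomes $\nu\mathbf{E}$ with $\mathbf{E}^\top\mathbf{E}=I+\mathbf{1}\mathbf{1}^\top\succeq I$, which is the justification the paper's one-line claim needs.
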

\begin{proof}
Observe that since $\|\bu_j\|_2^2=\O{A}$, then by triangle inequality, $\|\bu_j-\bu_1\|_2\le\|\bu_j\|_2+\|\bu_1\|_2=\O{\sqrt{A}}$ for all $j\neq 1$. 
  
\[\sigma_{\max}(\bU')\le\|\bU\|_F\le\sqrt{L}\cdot\sqrt{\Theta(A)+\nu_0^2 A}=\O{\sqrt{L\cdot A}}.\] 
On the other hand, since the $L\times L$ identity matrix scaled by $\nu$ is appended in the dimensions after $N$, then we have $\sigma_{\min}(\bU')\ge\nu_0 \sqrt{A}$. 
\end{proof}

\begin{lemma}\label{lem:coefficient:bound}
Let $\bA = \ba_1 \circ \ldots \circ \ba_k \in \mathbb{R}^{n \times k}$ have full column rank and let $\bv \in \mathbb{R}^n$. 
Let $\bx^\star = \arg\min_{\bx \in \Span(\bA)} \|\bv - \bx\|_2$ and $\bx^\star = \bA \by^\star$. 
If $\bv \notin \Span(\bA)$, then the coefficients of the optimal representation satisfy
\[\|\by^\star\|_2 \le \frac{\|\bv\|_2}{\sigma_{\min}(\bA)}.\]
\end{lemma}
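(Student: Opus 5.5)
The plan is to use the fact that an orthogonal projection never increases length, and then invert $\bA$ on its column space using the smallest singular value. The hypothesis $\bv\notin\Span(\bA)$ is not needed for the argument (it only rules out the trivial case $\bx^\star=\bv$, where the bound also holds); the full column rank hypothesis is what matters.

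First, identify $\bx^\star$ as the orthogonal projection of $\bv$ onto $\Span(\bA)$. Concretely, $\bx^\star = \bA\bA^\dagger \bv$ with $\bA^\dagger = (\bA^\top\bA)^{-1}\bA^\top$, which is well defined because $\bA$ has full column rank. Since $\bv - \bx^\star$ is orthogonal to $\bx^\star$, the Pythagorean identity $\|\bv\|_2^2 = \|\bx^\star\|_2^2 + \|\bv-\bx^\star\|_2^2$ gives $\|\bx^\star\|_2 \le \|\bv\|_2$.

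Second, because $\bA$ has full column rank, the representation $\bx^\star = \bA\by^\star$ is unique, namely $\by^\star = \bA^\dagger\bv$. For every $\by\in\mathbb{R}^k$ we have $\|\bA\by\|_2 \ge \sigma_{\min}(\bA)\,\|\by\|_2$, where $\sigma_{\min}(\bA)>0$ is the $k$-th singular value. This follows from the SVD $\bA = \bU\Sigma \bV^\top$ with $\bV$ orthogonal: $\|\bA\by\|_2^2 = \sum_j \sigma_j^2 (\bV^\top\by)_j^2 \ge \sigma_{\min}^2\|\by\|_2^2$. Combining the two steps,
\[\sigma_{\min}(\bA)\,\|\by^\star\|_2 \le \|\bA\by^\star\|_2 = \|\bx^\star\|_2 \le \|\bv\|_2,\]
and dividing by $\sigma_{\min}(\bA)$ gives the claim.

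The argument has no real obstacle. The only point needing care is to use the smallest \emph{nonzero} singular value, i.e.\ the $k$-th one, which is positive exactly because of the full column rank assumption. The same bound, $\|\by^\star\|_2 \le \|\bA\bA^\dagger\|\,\|\bv\|_2/\sigma_{\min}(\bA)$, can alternatively be read off from $\|\bA^\dagger\|_2 = 1/\sigma_{\min}(\bA)$.
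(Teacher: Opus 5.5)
Your proof is correct and is essentially the paper's argument: the paper writes $\by^\star = \bA^\dagger \bv$ and cites $\|\bA^\dagger\|_2 = 1/\sigma_{\min}(\bA)$ directly, while you unpack that identity into two facts, namely that orthogonal projection does not increase length and that $\|\bA\by\|_2 \ge \sigma_{\min}(\bA)\,\|\by\|_2$. You are also right that the hypothesis $\bv\notin\Span(\bA)$ is never used, in your proof or in the paper's.
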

\begin{proof}
Since $\bA$ has full column rank, the minimizer of $\|\bv - \bA \by\|_2$ is unique and given by
\[\by^\star = (\bA^\top \bA)^{-1} \bA^\top \bv = \bA^\dagger \bv,\]
where $\bA^\dagger$ denotes the Moore-Penrose pseudoinverse of $\bA$. 
The corresponding vector $\bx^\star = \bA \by^\star$ is the orthogonal projection of $\bv$ onto $\Span(\bA)$. 
Using the operator norm of the pseudoinverse and sub-multiplicativity of norms, 
\[\|\by^\star\|_2 = \|\bA^\dagger \bv\|_2 \le \|\bA^\dagger\|_2 \cdot \|\bv\|_2.\]
Since $\|\bA^\dagger\|_2 = \frac{1}{\sigma_{\min}(\bA)}$, i.e., the minimum singular value of $\bA$, then it follows that
\[\|\by^\star\|_2 \le \frac{\|\bv\|_2}{\sigma_{\min}(\bA)}.\]
\end{proof}

\paragraph{Algorithm analysis.} 
Next, we justify the correctness guarantees of our algorithm, assuming that the underlying frequency vector $\bx$ satisfies $\|\bx\|_p^p \leq 10 A$ when the sketch is invoked. In the next lemma, we show that $P_i$ in \EstLevel is an accurate approximation for $\|f(\bz) - \bv\|_2^2 + \nu^2 \|\alpha\|_2^2$.

\begin{lemma}\label{lem:est:Pi}
Let $\bv$ be a fixed iterate vector in block $C_{i,j}$ at level $i$, i.e., the value of $\bv$ in \Cref{alg:est:level:smallp}, at a fixed time in the stream, conditioned on the previous times, and suppose that $\|f(\bz)\|_2^2 = \Theta(A)$ and $\|f(\bz) - f(\bq)\|_2^2 = \Theta(A)$. Suppose that $P_i$ is  produced using sketch $\bB_i$ which is robust for $\O{\frac{1}{\eta^2} \log n}$ adaptive updates, where $\eta = \Theta(\frac{\eps}{H})$.
Then with high probability, 
\[\|f(\bz)-\bv\|_2^2 + \nu^2 \|\alpha\|_2^2 -\frac{\eps^2 A}{100H^2}\le P_i\le \|f(\bz)-\bv\|_2^2 + \frac{\eps^2 A}{100 H^2} + \nu^2 \|\alpha\|_2^2.\]
\end{lemma}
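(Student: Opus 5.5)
We expand the quantity that $P_i$ estimates and bound the error term by term. Write $\bv=\sum_{j=1}^L\alpha_j f(\bq_j)$. Then
\[\|f(\bz)-\bv\|_2^2=\|f(\bz)\|_2^2-2\sum_{j=1}^L\alpha_j\langle f(\bz),f(\bq_j)\rangle+\sum_{j,k=1}^L\alpha_j\alpha_k\langle f(\bq_j),f(\bq_k)\rangle .\]
$\EmbedEst$ replaces each piece by its estimate. The term $\|f(\bz)\|_2^2=\|\bz\|_p^p$ is replaced by $Z$ from $\FPSmall$. Each $\langle f(\bz),f(\bq_j)\rangle$ is replaced by $C_j$, and each $\langle f(\bq_j),f(\bq_k)\rangle$ by $D_{j,k}$, both obtained from $\DotEst$. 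The diagonal terms $\|f(\bq_j)\|_2^2$ are replaced by $U_j$. The regularizer $\nu^2\|\alpha\|_2^2$ is a deterministic function of the coefficients returned by $\GetIterate$, so it adds no error. The total error is therefore
\[|Z-\|\bz\|_p^p|+2\sum_j|\alpha_j|\,|C_j-\langle f(\bz),f(\bq_j)\rangle|+\sum_{j,k}|\alpha_j||\alpha_k|\,|D_{j,k}-\langle f(\bq_j),f(\bq_k)\rangle| .\]

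\textbf{Step 1: per-term error.} Apply \Cref{lem:dotest} with accuracy $\eps'$ to each dot product. By hypothesis $\|\bz\|_p^p=\Theta(A)$ and $\|\bz-\bq\|_p^p=\Theta(A)$, and the standing assumption gives $\|\bz-\bq_j\|_p^p$, $\|\bq_j\|_p^p$ and $\|\bq_j-\bq_k\|_p^p$ all $\O{A}$. Hence each estimate is within $\O{\eps' A}$ of the truth, and the total error is at most $\O{\eps' A}\cdot(1+\|\alpha\|_1)^2$.

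\textbf{Step 2: bounding the coefficients.} The coefficients come from the regularized affine regression in $\GetIterate$. \Cref{lem:condition:number} gives $\sigma_{\min}(\bU')\ge\nu_0\sqrt{A}$. By \Cref{lem:coefficient:bound} applied to the target $f(\bz)-\bu_1$, whose norm is $\O{\sqrt A}$, we get $\|\beta\|_2\le\O{\sqrt A}/(\nu_0\sqrt A)=\O{1/\nu_0}$. Since $\alpha$ is $\be_1$ plus the rearranged $\beta$, this gives $\|\alpha\|_1\le 1+\sqrt{L}\,\|\beta\|_2=\O{\sqrt L/\nu_0}=\poly(H,L,1/\eps)$. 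Setting
\[\eps'=\frac{\eps^2}{100H^2}\cdot\frac{c}{(1+\|\alpha\|_1)^2}=\frac{c\,\eps^2\nu_0^2}{H^2L}\]
for a small constant $c$ makes the total error at most $\frac{\eps^2A}{100H^2}$. The space cost of each sketch grows only by $\poly(1/\eps,\log n)$ factors, since $L=\O{\frac{1}{\eta^2}\log n}$.

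\textbf{Step 3: robustness and the probability guarantee.} This is the main obstacle. The guarantee of \Cref{lem:dotest} holds for a fixed input, but the $\alpha_j$ and $\bq_j$ depend adaptively on earlier outputs. I would argue in two parts:
\begin{itemize}
\item[(a)] The dot-product sketches on the fixed list vectors $\bq_j$ and $\bq_j-\bq_k$ are queried on fixed inputs. Their answers are computed once and reused, so no adaptivity is leaked through them.
\item[(b)] The sketches involving $\bz$, namely $\bz$ and $\bz-\bq_j$, are queried only when the list changes. This happens at most $\O{\frac{1}{\eta^2}\log n}$ times, which is exactly the number of adaptive queries $\bB_i$ is assumed robust to.
\end{itemize}
Conditioning on the history, $\bv$ is fixed at the current time. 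So by the assumed robustness, together with a union bound over the $\O{L^2}$ estimates each failing with probability $1/\poly(n)$, all estimates are simultaneously accurate with high probability. Combining this with Steps 1 and 2 gives the two-sided bound in the statement.
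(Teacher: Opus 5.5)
Your proposal is correct and follows essentially the same route as the paper: expand $\|f(\bz)-\bv\|_2^2$ into $O(L^2)$ norm and inner-product terms, estimate each with $\FPSmall$/$\DotEst$ at an accuracy scaled down by the square of the coefficient bound coming from \Cref{lem:condition:number}, sum the errors, and add the regularizer $\nu^2\|\alpha\|_2^2$ exactly. The differences are minor. You derive $\|\alpha\|_1^2=O(L/\nu_0^2)$ explicitly via \Cref{lem:coefficient:bound}, where the paper simply uses $|\alpha_j|\le\kappa$ summed over $L^2$ terms. Also, the paper does not argue robustness inside this lemma at all; it defers this to the computation-paths count in the proof of \Cref{thm:lp:small}, which charges the evaluations on the adaptively chosen list vectors $\bq_j,\bq_j-\bq_k$ as $O(L^2)$ adaptive interactions rather than treating them as free, as your Step 3(a) does.
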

\begin{proof}
Note that for $\bv = \sum_{i = 1}^L \alpha_i \cdot f(\bq_i)$, we have that 
\begin{align*}
   \|f(\bz) - \bv\|_2^2 &= \|f(\bz)\|_2^2 + \left \| \sum_{i = 1}^L \alpha_i f(\bq_i) \right \|_2^2 - 2 \sum_{i = 1}^L \alpha_i \cdot \langle f(\bz), f(\bq_i) \rangle  
\end{align*}
By \Cref{lem:condition:number} we have that the condition number of the list is $\kappa = \frac{\sqrt{L}}{\nu_0} = \frac{L\cdot H^2}{\eps^2}$, so it suffices for us to obtain additive error $\O{\frac{\eps^2}{\kappa^2 \cdot H^2 \cdot L^2}} \cdot A$ for each of the $\O{L^2}$ terms.

Observe that $\|f(\bz)\|_2^2 = \|\bz\|_p^p$ can be estimated using $\bB_i$ up to $\left(1+ \frac{\eps^2}{100H^2 L^2 \kappa^2} \right)$ multiplicative error. Thus, it suffices to estimate the remaining terms: each of the $L$ terms $\|f(\bq_i)\|_2^2 = \|\bq_i\|_p^p$ can be similarly estimated up to $\left(1+ \frac{\eps^2}{100H^2 \cdot L^2 \cdot \kappa^2} \right)$ multiplicative error using $\bB_i$, and for $i \not= j$ the remaining inner products $\langle f(\bq_i), f(\bq_j)\rangle $ are estimated up to $\frac{\eps^2}{100H^2 \cdot L^2 \cdot \kappa^2} \cdot \left( \|\bq_i\|_p^p + \|\bq_j\|_p^p + \|\bq_i - \bq_j\|_p^p \right) $ additive error by invoking the $\DotEst$ subroutine. Likewise, each of the $L$ terms $\langle f(\bz), f(\bq_i) \rangle$ is estimated up to $\frac{\eps^2}{100H^2 \cdot L^2 \cdot \kappa^2} \cdot \left(\|\bz\|_p^p + \|\bq_i\|_p^p + \|\bz - \bq_i\|_p^p \right)$ additive error via $\DotEst$. Since each of the coefficients $\alpha_i$ in the representation $\bv = \sum_{i = 1}^L \alpha_i \cdot f(\bq_i)$ is bounded by $\kappa$, the desired error bound follows by adding up all of these estimates and recalling that $\|\bq_i\|_p^p = \O{A}$ for all $i \in [L]$ and $\|\bz\|_p^p = \Theta(A)$. 
\end{proof}

Building on the previous lemma, we now argue that the output $P_i + Q_i$ of \EstLevel$(i)$ accurately approximates our desired estimator $S = \|f(\bz) - \bv\|_2^2 + \|\bv - f(\bq)\|_2^2 + 2 \nu^2 \|\alpha\|_2^2$. 

\begin{lemma}
\label{lem:estlvl}
For each $i\in[H]$, let $\calP$ be the active block in level $i$ and let $\bB_{i}$ be the sketch matrix of $\calP$. Let $S = \|f(\bz) - \bv\|_2^2 + \|\bv - f(\bq)\|_2^2 + 2 \nu^2 \|\alpha\|_2^2$ denote the desired estimator. With high probability, the output $P_i+Q_i$ of $\EstLevel(i)$ satisfies
\begin{align*}
    S- \frac{\eps^2 (6i+2) A}{100H^2} & \le P_i+Q_i\le  S+\frac{\eps^2 (6i+2) A}{100H^2}.
\end{align*}
\end{lemma}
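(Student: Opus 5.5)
We argue by induction on the level $i$, with the error budget $\frac{\eps^2(6i+2)A}{100H^2}$ growing linearly in $i$. At each level, $P_i+Q_i$ is split into the part estimated directly from fixed sketches and the part delegated to $\EstLevel(i-1)$. The error of the directly estimated part is bounded by $\O{\frac{\eps^2 A}{H^2}}$ using \Cref{lem:est:Pi} and \Cref{lem:dotest}. The error of the recursive part is bounded by the induction hypothesis.

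\textbf{Base case $i=1$.} Here $P_1=0$ and $Q_1=\EmbedEst(\bq,\bz)$ is a direct estimate of $\|f(\bz)-f(\bq)\|_2^2=\|\bz-\bq\|_p^p$ with no iterate, so $\alpha=0$ and $S=\|\bz-\bq\|_p^p$. The sketch is a $\left(1+\frac{\eps^2}{100H^2}\right)$-approximation and $\|\bz-\bq\|_p^p=\O{A}$, so the additive error is at most $\frac{\eps^2 A}{100H^2}\cdot\O{1}$. After adjusting the constant in the sketch accuracy, this is at most $\frac{8\eps^2A}{100H^2}$.

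\textbf{Inductive step, $i\ge 2$.}
\begin{enumerate}
\item By \Cref{lem:est:Pi}, $P_i$ approximates $\|f(\bz)-\bv\|_2^2+\nu^2\|\alpha\|_2^2$ up to $\frac{\eps^2A}{100H^2}$.
\item Expand $Q_i-\nu^2\|\alpha\|_2^2$ via \Cref{lem:recursion:computation} as
\[\|\bv-f(\bq)\|_2^2=\sum_{j,k}\alpha_j\alpha_k\langle f(\bq_j),f(\bq_k)\rangle-\sum_j\alpha_j\|f(\bq_j)\|_2^2+\sum_j\alpha_j\|f(\bq_j-\bq_L)\|_2^2-\Big\|\sum_j\alpha_j f(\bq_j-\bq_L)\Big\|_2^2+\|\bw\|_2^2,\]
where $\bw=\sum_j\alpha_jf(\bq_j-\bq_L)-f(\bq_{L+1})$. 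This uses $\sum_j\alpha_j=1$, which is enforced by the affine constraint in $\GetIterate$.
\item All terms except $\|\bw\|_2^2$ involve only fixed past queries. They are estimated with $\FPSmall$ or $\DotEst$ at accuracy $\frac{\eps^2}{100H^2L^2\kappa^2}$. The relevant norms $\|\bq_j\|_p^p$, $\|\bq_j-\bq_L\|_p^p$, $\|\bq_j-\bq_k\|_p^p$ are all $\O{A}$. The coefficients satisfy $|\alpha_j|\le\kappa$ by \Cref{lem:condition:number} and \Cref{lem:coefficient:bound}. There are $\O{L^2}$ terms, so the same accounting as in \Cref{lem:est:Pi} gives total error at most $\frac{5\eps^2A}{100H^2}$ after fixing constants.
\item The remaining term $\|\bw\|_2^2$ is the quantity the recursive call estimates. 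The call's guarantee at level $i-1$ is stated with respect to its own estimator $S'$, which involves the next-level iterate. The output of $\EstLevel(i-1)$ is either $P_{i-1}+Q_{i-1}$, when that passes the check, or the corrector value $A_{i-1}$. I would treat both outputs as approximations of $\|\bw\|_2^2$ itself. For the check branch, $S'$ and the output lie within the check tolerance plus the corrector error of the true value. For the corrector branch, accuracy is immediate.
\item To use the corrector at level $i-1$, its input must have norm $\O{A}$; \Cref{lem:recursion:bounded:norm} gives exactly this.
\end{enumerate}

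Summing the three error sources gives $\frac{\eps^2A}{100H^2}+\frac{5\eps^2A}{100H^2}+\frac{\eps^2(6(i-1)+2)A}{100H^2}=\frac{\eps^2(6i+2)A}{100H^2}$. Note that $S$ counts $\nu^2\|\alpha\|_2^2$ twice, once from $P_i$ and once from $Q_i$, matching the algorithm.

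\textbf{Main obstacle.} The hard step is the interface with the recursion in step 4. The inductive hypothesis concerns $S'$, not $\|\bw\|_2^2$. The gap $S'-\|\bw\|_2^2$, which includes the next level's $2\nu^2\|\beta\|_2^2$ regularization, has to be absorbed into the $6$-per-level budget. If it cannot be, I would restate the induction directly in terms of "the output of $\EstLevel(i)$ approximates the true squared norm of its input," and use the check threshold $\frac{6i\eps}{100H}A$ to pass between the two. The last resort is to bound the regularization term by $\nu_0^2A\kappa^2L$ and choose $\nu_0$ small enough that this is negligible.
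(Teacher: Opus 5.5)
Your decomposition of $P_i+Q_i$ into three pieces matches the paper's: the $P_i$ term from \Cref{lem:est:Pi}, the fixed-query terms from \Cref{lem:recursion:computation}, and the single recursive term $\|\bw\|_2^2$. Steps 1--3 are fine. The gap is in step 4 and the final sum.

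The level-$(i-1)$ statement bounds $P_{i-1}+Q_{i-1}$ against the level-$(i-1)$ estimator $S'$. However, $Q_i$ needs the \emph{output} of $\EstLevel(i-1)$ to be within $\O{\eps^2A/H^2}$ of the true value $\|\bw\|_2^2$. Your sum adds $\frac{\eps^2(6(i-1)+2)A}{100H^2}$ as if it were a bound of the second kind.

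Write $\bw_0=\sum_j\alpha_j f(\bq_j-\bq_L)$ and let $\bv'=\sum_k\beta_k f(\bq_k')$ be the next-level iterate. Then $S'-\|\bw\|_2^2=-2\langle\bw_0-\bv',\bv'-f(\bq_{L+1})\rangle+2\nu^2\|\beta\|_2^2$. This is exactly the cross term the corrector exists to detect, so it is not small in general. None of your three fallbacks closes this:
\begin{enumerate}
\item Absorbing the gap into the per-level budget fails, because the gap is that unbounded cross term.
\item Shrinking $\nu_0$ fails too. Since $\kappa=\sqrt{L}/\nu_0$, your bound $\nu_0^2A\kappa^2L$ equals $AL^2$ for every $\nu_0$, and it does not touch the cross term anyway.
\item Inducting on the output of $\EstLevel$ versus the true squared norm of its input has the right shape. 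But when the check passes, the only link between the output and $\|\bw\|_2^2$ is the tolerance $\frac{6(i-1)\eps}{100H}A$ in \Cref{alg:est:level:smallp}. That is a factor $H/\eps$ larger than the matching budget $\frac{6(i-1)\eps^2A}{100H^2}$, so you only recover the $\eps/H$-scale bound of \Cref{invar:output:acc:p}, not the $\eps^2/H^2$ statement.
\end{enumerate}
Had the tolerance been $\frac{6(i-1)\eps^2A}{100H^2}$, the third route would close. In units of $\frac{\eps^2A}{100H^2}$, the recursive output would err by at most $6i-5$, so $Q_i$ would err by at most $6i$ and $P_i+Q_i$ by at most $6i+1$.

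The paper never passes from $S'$ to $\|\bw\|_2^2$. It places the inductive hypothesis directly on the recursive call: it supposes $Q_i$ is within $\frac{6i\eps^2A}{100H^2}$ of the \emph{true} quantity $\|\bv-f(\bq)\|_2^2+\nu^2\|\alpha\|_2^2$. It cites \Cref{lem:recursion:bounded:norm} only to place the recursive input in the $\O{A}$ regime, where the sketches have that accuracy. It then adds the $\frac{\eps^2A}{100H^2}$ error of $P_i$ from \Cref{lem:est:Pi} to get $6i+1\le 6i+2$, and checks the base case through the direct sketch $\bB_1$.

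To match the paper, you must induct on the accuracy of the recursive output relative to the true squared norm of its input, at the $\eps^2/H^2$ scale. The paper takes this as a hypothesis rather than deriving it from the corrector. Your attempt to derive it through the check is exactly where the scale mismatch shows up.
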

\begin{proof}

Recall that by \Cref{lem:recursion:computation}, $Q_i$ is computed by (1) estimating $\O{L^2}$ fixed inner products using $\bB_i$ and (2) recursively estimating $\|\bw_i \|_2^2 = \|\sum_{j =1}^L \alpha_j \cdot f(\bq_j - \bq_L) - f(\bq_{L+1})\|_2^2$ in the next level. Note that by \Cref{lem:recursion:bounded:norm}, we have $\|\bw_i\|_2^2 \leq \|f(\bz)\|_2^2 + \O{A}$, so in particular $\|\bw_i\|_2^2$ can also be computed with $\frac{6i\eps^2 A}{100H^2}$ additive error.

Suppose $\EstLevel(i-1)$ outputs $Q_i$ such that $\|\bv-f(\bq)\|_2^2 + \nu^2 \|\alpha\|_2^2 - \frac{6i\eps^2 A}{100H^2}\le Q_i\le\|\bv-f(\bq)\|_2^2  + \nu^2 \|\alpha\|_2^2 + \frac{6i\eps^2 A}{100H^2}$. 
Then it follows by \Cref{lem:est:Pi} that with high probability, the output $P_i+Q_i$ of $\EstLevel(i)$ satisfies
\begin{align*}
    S - \frac{\eps^2(6i+1)A}{100H^2}&\le P_i+Q_i \le S + \frac{\eps^2(6i+1)A}{100H^2}. 
\end{align*}
Since the argument above shows the inductive hypothesis for a fixed $i \in [H]$, it suffices for us to verify that the statement holds for the base case $i =1$. 
In particular, $Q_1=\EmbedEst(\bv, \bq)$ is computed using sketch $\bB_1$. 
By the correctness of $\bB_1$, we have 
\[\|\bv-f(\bq_0+\bq_1)\|_2^2 +\nu^2 \|\alpha\|_2^2 - \frac{\eps^2 A}{100H^2}\le Q_1\le \|\bv-f(\bq_0+\bq_1)\|_2^2 + \nu^2 \|\alpha\|_2^2 + \frac{\eps^2 A}{100H^2},\]
with high probability. Thus, by a union bound over all $i \in [H]$ it follows that the desired bound holds for all $i \in [H]$ by induction.
\end{proof}

We now argue that $\EstLevel(i)$ returns an estimate within additive error $\frac{\O{\eps^2 A} \cdot i}{100H^2}$ of the $F_p$ moment $\|\bz - \bq\|_p^p$.
\begin{invariant}
\label{invar:output:acc:p}
For any level $i\in[H]$, let $\calP$ be the active block in level $i$ and let $\bB_{i}$ be the sketch matrix of $\calP$. 
Let $\bq$ be the part of the query in $\calP$ and let $\bz$ be previous part. 
Then with high probability 
\[\|\bz-\bq\|_p^p - \frac{(6i+1)\eps A}{100H}\le\EstLevel(i,\bB_i\bz)\le \|\bz-\bq\|_p^p + \frac{(6i+1)\eps A}{100 H}.\]
\end{invariant}
\begin{proof}
Suppose that the current iterate is not updated  $\bv$ for current suffix $\bq$. Then, by construction we have that 
\[\mathcal{A}_i(\bz - \bq) - \frac{6i\eps A}{100 H}\leq P_i + Q_i \leq \mathcal{A}_i(\bz - \bq) + \frac{6i\eps A}{100 H}.\]
By the correctness of $\mathcal{A}_i$, we have that 
\[\|\bz - \bq\|_p^p - \frac{\eps A}{100 H}\leq \mathcal{A}_i(\bz - \bq) \leq \|\bz - \bq\|_p^p + \frac{\eps A}{100 H}.\]
Therefore, 
\[\|\bz- \bq\|_p^p  - \frac{(6i+1)\eps A}{100 H}\leq P_i + Q_i \leq \|\bz - \bq\|_p^p + \frac{(6i+1)\eps A}{100 H},\]
so the invariant is satisfied in this case. Alternatively, suppose that the iterate $\bv$ is updated for the current suffix $\bq$. In this case, we update $\bv$ by solving the regularized regression problem (see \Cref{alg:get:iterate}) and return the estimate $A_i$, as desired.
\end{proof}

Next, we show that the estimator in block $C_{i,j}$ of level $i \in [H]$ is inaccurate at most $\O{\frac{1}{\eta^2} \log n}$ times throughout the stream, where $\eta = \frac{\eps}{100 H}$ is the total allowable error in each of the $H$ levels of the recursion. 
Recall that the estimator at level $i$ is given by $\|f(\bz) - \bv \|_2^2 + \| \bv - f(\bq)\|_2^2 + 2 \nu^2 \|\alpha\|_2^2$, where $\bz$ represents the previous frequency vector, $\bv$ is the projection of $f(\bz)$ onto the list of previous queries on which the estimator had erred, and $\bq$ contains the arriving updates. Since our estimator simulates the robust $L_2$ algorithm by embedding into $L_2^2$, progress of our algorithm will also be measured in the $L_2$ space. 

\ignore{
For any fixed level $i\in[H]$ and for any fixed time in the stream, let $\calP_i$ and $\calP_{i+1}$ be the active blocks at levels $i$ and $i+1$ respectively. 
Let $\bz$ be the frequency vector corresponding to all the updates prior to $\calP_{i+1}$. 
Conditioned on the correctness guarantees of $\EstLevel$ in \Cref{alg:est:level:smallp}, we have that 

\[\|f(\bz) - \bv\|_2^2 + \|\bv - f(\bq)\|_2^2 - \eta A \leq \EstLevel(i, \bB_{i} \bz)\leq \|f(\bz) - \bv\|_2^2 + \|\bv - f(\bq)\|_2^2 + \eta A\]

Recall that by our assumption, $\|f(\bz)\|_2^2 =  \Theta(A)$ and $\|f(\bz) - f(\bq)\|_2^2 = \Theta(A)$.
In particular, by \Cref{lem:coefficient:sum}, we must have $\|f(\bz) - \bv\|_2^2 = \Theta(A)$ and $\|\bv - f(\bq)\|_2^2 \leq \O{A}$. Therefore, it follows that

\[(1-\eta') \cdot (\|f(\bz)-\bv\|_2^2+\|\bv-f(\bq)\|_2^2)\le\EstLevel(i,\bB_i\bz)\le(1+\eta')\cdot(\|f(\bz)-\bv\|_2^2+\|\bv-f(\bq)\|_2^2),\]
for any choice of $\eta'=\Theta(\eta)$. Let $\bq'$ be the projection of $f(\bq)$ onto the pre-image of $\calL_{i+1}$. 
By Pythagorean theorem, we have 
\[(1-\eta') \cdot (\|f(\bz)-\bv\|_2^2+\|\bv-\bq'\|_2^2+\|\bq'-f(\bq)\|_2^2)\le\EstLevel(i,\bB_{i}\bz)\le(1+\eta')\cdot(\|f(\bz)-\bv\|_2^2+\|\bv-\bq'\|_2^2+\|\bq'-f(\bq)\|_2^2).\]
Suppose that we do not have 
\[(1-\eta)^2\cdot\|f(\bz)-f(\bq)\|_2^2\le\EstLevel(i,\bB_{i}\bz)\le(1+\eta)^2\cdot\|f(\bz)-f(\bq)\|_2^2.\]
Now, we consider the cases:

\begin{enumerate}
    \item $\|f(\bz)-\bv\|_2^2+\|\bv-\bq'\|_2^2+\|\bq'-f(\bq)\|_2^2\le 16\|f(\bz)-f(\bq)\|_2^2$ or
    \item $\|f(\bz)-\bv\|_2^2+\|\bv-\bq'\|_2^2+\|\bq'-f(\bq)\|_2^2>16\|f(\bz)-f(\bq)\|_2^2$.
\end{enumerate}

First, suppose $\|f(\bz)-\bv\|_2^2+\|\bv-\bq'\|_2^2+\|\bq'-f(\bq)\|_2^2\le 16\|f(\bz)-f(\bq)\|_2^2$. 
Then, 
\[\left\lvert\|\bz-\bv\|_2^2+\|\bv-\bq'\|_2^2+\|\bq'-\bq\|_2^2-\|\bz-\bq\|_2^2\right\rvert>2\eta\cdot\|\bz-\bq\|_2^2-\eta'\cdot(\|\bz-\bv\|_2^2+\|\bv-\bq'\|_2^2+\|\bq'-\bq\|_2^2).\]
By Pythagorean theorem, we have
\begin{align*}
\|\bz-\bq\|_2^2&=\|(\bz-\bv)+(\bv-\bq')+(\bq'-\bq)\|_2^2\\
&=\|\bz-\bv+\bq'-\bq\|_2^2+\|\bv-\bq'\|_2^2\\
&=\|\bz-\bv\|_2^2+\|\bq'-\bq\|_2^2+2\langle\bz-\bv,\bq'-\bq\rangle+\|\bv-\bq'\|_2^2.
\end{align*}
Hence if we have a bad estimator, then 
\[\lvert\langle\bz-\bv,\bq'-\bq\rangle\rvert>\eta\cdot\|\bz-\bq\|_2^2-\frac{\eta'}{2}\cdot(\|\bz-\bv\|_2^2+\|\bv+\bq'\|_2^2+\|\bq'-\bq\|_2^2).\]
Since $\|\bz-\bv\|_2^2+\|\bv-\bq'\|_2^2+\|\bq'-\bq\|_2^2\le 16(\|\bz-\bq\|_2^2)$ by assumption, for $\eta'\le\frac{\eta}{16}$ we have
\[\lvert\langle\bz-\bv,\bq'-\bq\rangle\rvert>\frac{\eta}{2}\cdot\|\bz-\bq\|_2^2.\]
Now, let $\bz'$ be the projection of $\bz$ to the span of the new subspace augmented with $\bq$. 
By Pythagorean theorem,
\begin{align*}
\|\bz-\bz'\|_2^2&=\|\bz-\bv\|_2^2-\frac{1}{\|\bq-\bq'\|_2^2}\cdot\langle\bz-\bv,\bq-\bq'\rangle^2\\
&\le\|\bz-\bv\|_2^2-\frac{\eta^2}{4}\cdot\frac{\|\bz-\bq\|_2^4}{\|\bq-\bq'\|_2^2}\\
&\le\|\bz-\bv\|_2^2-\frac{\eta^2}{4}\cdot\frac{\|\bz-\bv\|_2^2\cdot\|\bq-\bq'\|_2^2}{\|\bq-\bq'\|_2^2}\\
&=\left(1-\frac{\eta^2}{4}\right)\cdot\|\bz-\bv\|_2^2. 
\end{align*}
In particular, note that $\bq-\bq'\neq\mathbf{0}^n$, since $\bq$ cannot lie in the span of $\calL_{i+1}$ while still corresponding to a bad estimate. 

In the other case, we have $\|\bz-\bv\|_2^2+\|\bv-\bq'\|_2^2+\|\bq'-\bq\|_2^2>16(\|\bz-\bq\|_2^2)$. 
Then
\[\left\lvert\|\bz-\bv\|_2^2+\|\bv-\bq'\|_2^2+\|\bq'-\bq\|_2^2-\|\bz-\bq\|_2^2\right\rvert>\frac{15}{16}\cdot(\|\bz-\bv\|_2^2+\|\bv-\bq'\|_2^2+\|\bq'-\bq\|_2^2).\]
Recall that
\[\|\bz-\bq\|_2^2=\|\bz-\bv\|_2^2+\|\bq'-\bq\|_2^2+2\langle\bz-\bv,\bq'-\bq\rangle+\|\bv-\bq'\|_2^2.\]
So, if the estimator is not accurate, then we must have 
\[\lvert\langle\bz-\bv,\bq'-\bq\rangle\rvert>\frac{1}{4}\cdot(\|\bz-\bv\|_2^2+\|\bv-\bq'\|_2^2+\|\bq'-\bq\|_2^2).\]
Next, let $\bz'$ be the projection of $\bz$ to the span of the new subspace augmented with $\bq$. 
By Pythagorean theorem, we have that
\begin{align*}
\|\bz-\bz'\|_2^2&=\|\bz-\bv\|_2^2-\frac{1}{\|\bq-\bq'\|_2^2}\cdot\langle\bz-\bv,\bq-\bq'\rangle^2\\
&\le\|\bz-\bv\|_2^2-\frac{1}{16}\cdot\frac{(\|\bz-\bv\|_2^2+\|\bv-\bq'\|_2^2+\|\bq'-\bq\|_2^2)^2}{\|\bq-\bq'\|_2^2}\\
&\le\|\bz-\bv\|_2^2-\frac{1}{16}\cdot\frac{\|\bz-\bv\|_2^2\cdot\|\bq-\bq'\|_2^2}{\|\bq-\bq'\|_2^2}\\
&=\frac{15}{16}\cdot\|\bz-\bv\|_2^2. 
\end{align*}
Thus, in both cases, we have shown that $\|\bz-\bz'\|_2^2\le(1-\O{\eta^2})\cdot\|\bz-\bv\|_2^2$. }

\begin{lemma}[Bounded list size]
\label{lem:list:size:ltwo}
Let $L_i$ be the number of times that the iterate $\bv$ in a particular block at level $i\in[H]$ is updated and let $\eta = \Theta\left(\frac{\eps}{H}\right)$ be the error in level $i$. 
With high probability, $L_i\le\O{\frac{1}{\eta^2}\log n}$ for all $i\in[H]$ and for all times in the stream, for $m\le\poly(n)$.
\end{lemma}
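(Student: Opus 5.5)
We will bound the number of iterate updates by a potential argument carried out in the augmented Hilbert space $\mathbb{R}^{N+L}$. Fix a level $i$ and a block, and let $\bz$ be the fixed prefix vector. The potential is the regularized residual
\[\Phi=\min_{\alpha:\,\sum_j\alpha_j=1}\Big(\|f(\bz)-\textstyle\sum_j\alpha_j f(\bq_j)\|_2^2+\nu^2\|\alpha\|_2^2\Big).\]
This equals the squared distance from $f(\bz)\oplus\mathbf{0}$ to the affine span of the current augmented list vectors $\bu_1,\ldots,\bu_L$. Since $\|f(\bz)\|_2^2=\|\bz\|_p^p\le\poly(n)$, the initial value of $\Phi$ (after the first list element) is at most $\poly(n)$. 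The coordinates $\nu\be_{N+j}$ supply a floor: any affine combination has $\|\alpha\|_2^2\ge 1/L$, so $\Phi\ge\nu^2/L=\nu_0^2A/L$. Because $\nu_0,L$ are $\poly(1/\eps,\log n)$ and $A\ge1$ for integer vectors, $\log(\Phi_{\max}/\Phi_{\min})=\O{\log n}$. It therefore suffices to show that every update multiplies $\Phi$ by at most $(1-c\eta^2)$.

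For the contraction step, let $\bv$ be the current (regularized) projection and $\bq$ the flagged query. We condition on the high-probability events of \Cref{lem:estlvl} and \Cref{invar:output:acc:p} and on the correctness of the corrector $\calA_i$. A flag then means that the true estimator
\[S=\|f(\bz)-\bv\|_2^2+\|\bv-f(\bq)\|_2^2+2\nu^2\|\alpha\|_2^2\]
differs from $\|\bz-\bq\|_p^p=\|f(\bz)-f(\bq)\|_2^2$ by at least $\Omega(\eta A)$. In the augmented space, write $\hat\bz=f(\bz)\oplus\mathbf{0}$, $\hat\bv=\sum_j\alpha_j\bu_j$, and $\hat\bq=\bu_{L+1}=f(\bq)\oplus\nu\be_{N+L+1}$. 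Then $S$ is, up to $\O{\nu^2}$, equal to $\|\hat\bz-\hat\bv\|^2+\|\hat\bv-\hat\bq\|^2$. Expanding $\|\hat\bz-\hat\bq\|^2$ gives
\[\|\hat\bz-\hat\bq\|^2=\|\hat\bz-\hat\bv\|^2+\|\hat\bv-\hat\bq\|^2+2\langle\hat\bz-\hat\bv,\hat\bv-\hat\bq\rangle.\]
The discrepancy therefore forces $|\langle\hat\bz-\hat\bv,\hat\bv-\hat\bq\rangle|\ge\Omega(\eta A)$, with the $\nu^2$ terms absorbed since $\nu_0\ll\eps/H$.

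We then mirror the $L_2$ argument. Decompose $\hat\bq-\hat\bv$ into its component in the current affine direction space and an orthogonal component $\bw$. Since $\hat\bz-\hat\bv$ is orthogonal to that space, the inner product above equals $\langle\hat\bz-\hat\bv,\bw\rangle$, and in particular $\bw\ne0$. Projecting onto the enlarged affine span then gives
\[\Phi'\le\Phi-\frac{\langle\hat\bz-\hat\bv,\bw\rangle^2}{\|\bw\|^2}.\]
By \Cref{lem:recursion:bounded:norm} and the standing assumptions $\|\bz\|_p^p,\|\bq\|_p^p,\|\bz-\bq\|_p^p=\O{A}$, we have $\|\bw\|^2\le\|\hat\bv-\hat\bq\|^2=\O{A}$ and $\Phi\le\O{A}$. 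Hence $\Phi'\le\Phi-\Omega(\eta^2A)\le(1-c\eta^2)\Phi$.

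Two points need care. The first is making sure the progress is measured against the actual potential $\Phi$ and not an unnormalized distance; this is the role of the bounds $\Phi=\O{A}$ and $\Phi\ge\nu_0^2A/L$. The second is the high-probability qualification. The corrector and the $\FPSmall$ sketches are robust only for $\poly(1/\eps,\log n)$ adaptive queries, so the argument is circular. We would break this by induction on the number of flags: as long as at most $\O{\eta^{-2}\log n}$ updates have occurred, the sketches are correct (by a union bound), so the contraction holds. The contraction then caps the number of updates at $\O{\eta^{-2}\log n}$ before the sketch budget is exhausted. A final union bound over all $H$ levels, all blocks, and $m=\poly(n)$ times gives the stated bound.

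The main obstacle is the contraction step: checking that the regularization and the affine constraint $\sum_j\alpha_j=1$ do not break the Pythagorean identity. This requires arguing that the algorithm's estimator, which uses the $\nu^2\|\alpha\|_2^2$ terms, matches the augmented-space distances up to $o(\eta A)$ error.
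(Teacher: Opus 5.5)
Your proposal is correct and is essentially the paper's argument. It uses the same potential (squared distance from the zero-padded $f(\bz)$ to the span of the augmented vectors $\bu_j$), the same Pythagorean step giving a $1-\Omega(\eta^2)$ contraction per flagged query, and the same logarithmic count. It also improves on the paper in two ways: using the additive flag threshold directly avoids the paper's two-case split, and your floor $\Phi\ge\nu^2\|\alpha\|_2^2\ge\nu^2/L$ makes explicit a lower bound on the potential that the paper's count needs but leaves implicit.

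The obstacle you single out is not a real one: the $\nu^2$ terms cancel exactly, so $S-\|f(\bz)-f(\bq)\|_2^2=-2\langle\hat{\bz}-\hat{\bv},\hat{\bv}-\hat{\bq}\rangle$. The one point the paper raises that you skip is that $\GetIterate$ builds $\bv$ from sketched dot products. Orthogonality therefore holds only for the exact minimizer $\hat{\bv}^{\star}$, and you must also show $|\langle\hat{\bv}-\hat{\bv}^{\star},\hat{\bv}-\hat{\bq}\rangle|=o(\eta A)$.
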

\begin{proof}

For any fixed level $i\in[H]$ and for any fixed time in the stream, let $\calP_i$ and $\calP_{i+1}$ be the active blocks at levels $i$ and $i+1$ respectively. 
Let $\bz$ be the frequency vector corresponding to all the updates prior to $\calP_{i+1}$, and let $\by $ be the vector $f(\bz)$ with $L$ zeros in the additional dimensions, so that $\by \in \mathbb{R}^{N + L}$. Similarly, let $\bu$ be the vector $\bu = f(\bq)+\nu\cdot\be_{N+|\calL_i|+1}$, and $\bw$ is precisely the projection of $\by$ onto the span of $\bu_i- \bu_1$ for $i \in [|\calL_i|]$, as defined in \Cref{alg:get:iterate}. Conditioned on the correctness guarantees of $\EstLevel$ in \Cref{alg:est:level:smallp}, we have that 
\[\|f(\bz) - \bv\|_2^2 + \|\bv - f(\bq)\|_2^2 + 2 \nu^2 \|\alpha\|_2^2 - \eta A \leq \EstLevel(i, \bB_{i} \bz)\leq \|f(\bz) - \bv\|_2^2 + \|\bv - f(\bq)\|_2^2 + 2 \nu^2 \|\alpha\|_2^2 + \eta A,\]
where $\bv$ is our true iterate. 
Recall that by our assumption, $\|f(\bz)\|_2^2 = \|\by\|_2^2 =  \Theta(A)$ and $\|f(\bz) - f(\bq)\|_2^2 = \Theta(A)$, so we also have $\|\by - \bu\|_2^2 = \Theta(A)$.
In particular, by our choice of $\nu = \frac{\eps^2 A}{H^2 \cdot \sqrt{L}}$ and by \Cref{lem:coefficient:bound}, we must have $\|\by- \bw\|_2^2 = \Theta(A)$ and $\|\bw - \bu\|_2^2 \leq \O{A}$. 
Therefore, it follows that
\[(1-\eta') \cdot (\|\by-\bw\|_2^2+\|\bw-\bu\|_2^2)\le\EstLevel(i,\bB_i\bz) \le(1+\eta')\cdot(\|\by-\bw\|_2^2+\|\bw-\bu\|_2^2),\]
for any choice of $\eta'=\Theta(\eta)$. Let $\bu'$ be the projection of $\bu$ onto the span of $\bu_2- \bu_1,\ldots, \bu_{L} - \bu_{1}$. 
By Pythagorean theorem, we have 
\[(1-\eta') \cdot (\|\by-\bw\|_2^2+\|\bw-\bu'\|_2^2+\|\bu'-\bu\|_2^2)\le\EstLevel(i,\bB_{i}\bz) \le(1+\eta')\cdot(\|\by-\bw\|_2^2+\|\bw-\bu'\|_2^2+\|\bu'-\bu\|_2^2).\]
Suppose that we do not have
\[(1-\eta)^2\cdot\|f(\bz)-f(\bq)\|_2^2\le\EstLevel(i,\bB_{i}\bz)\le(1+\eta)^2\cdot\|f(\bz)-f(\bq)\|_2^2.\]
Then, by the choice of $\nu$ above, it directly follows that we also do not have 
\[(1-\eta)^2\cdot\|\by-\bu\|_2^2\le\EstLevel(i,\bB_{i}\bz)\le(1+\eta)^2\cdot\|\by-\bu\|_2^2.\]
Now, we consider the cases:
\begin{enumerate}
    \item $\|\by-\bw\|_2^2+\|\bw-\bu'\|_2^2+\|\bu'-\bu\|_2^2\le 16\|\by-\bu\|_2^2$ or
    \item $\|\by-\bw\|_2^2+\|\bw-\bu'\|_2^2+\|\bu'-\bu\|_2^2>16\|\by-\bu\|_2^2$.
\end{enumerate}
First, suppose $\|\by-\bw\|_2^2+\|\bw-\bu'\|_2^2+\|\bu'-\bu\|_2^2\le 16\|\by-\bu\|_2^2$. 
Then, 
\begin{align*}
    \left\lvert\|\by-\bw\|_2^2+\|\bw-\bu'\|_2^2+\|\bu'-\bu\|_2^2-\|\by-\bu\|_2^2\right\rvert &>2\eta\cdot\|\by-\bu\|_2^2-\eta'\cdot(\|\by-\bw\|_2^2 \\ &+\|\bw-\bu'\|_2^2+\|\bu'-\bu\|_2^2).
\end{align*}
By Pythagorean theorem, we have
\begin{align*}
\|\by-\bu\|_2^2&=\|(\by-\bw)+(\bw-\bu')+(\bu'-\bu)\|_2^2\\
&=\|\by-\bw+\bu'-\bu\|_2^2+\|\bw-\bu'\|_2^2\\
&=\|\by-\bw\|_2^2+\|\bu'-\bu\|_2^2+2\langle \by-\bw,\bu'-\bu\rangle+\|\bw-\bu'\|_2^2.
\end{align*}
Hence if we have a bad estimator, then 
\[\lvert\langle \by-\bw,\bu'-\bu\rangle\rvert>\eta\cdot\|\by-\bu\|_2^2-\frac{\eta'}{2}\cdot(\|\by-\bw\|_2^2+\|\bw-\bu'\|_2^2+\|\bu'-\bu\|_2^2).\]
Since $\|\by-\bw\|_2^2+\|\bw-\bu'\|_2^2+\|\bu'-\bu\|_2^2\le 16(\|\by-\bu\|_2^2)$ by assumption, for $\eta'\le\frac{\eta}{16}$ we have
\[\lvert\langle \by-\bw,\bu'-\bu \rangle\rvert>\frac{\eta}{2}\cdot\|\by-\bu\|_2^2.\]
Now, let $\by'$ be the projection of $\by$ to the span of the new subspace augmented with $\bu$. 
By Pythagorean theorem,
\begin{align*}
\|\by-\by'\|_2^2&=\|\by-\bw\|_2^2-\frac{1}{\|\bu-\bu'\|_2^2}\cdot\langle \by-\bw,\bu-\bu'\rangle^2\\
&\le\|\by-\bw\|_2^2-\frac{\eta^2}{4}\cdot\frac{\|\by-\bu\|_2^4}{\|\bu-\bu'\|_2^2}\\
&\le\|\by-\bw\|_2^2-\frac{\eta^2}{16}\cdot\frac{\|\by-\bw\|_2^2\cdot\|\bu-\bu'\|_2^2}{\|\bu-\bu'\|_2^2}\\
&=\left(1-\frac{\eta^2}{16}\right)\cdot\|\by-\bw\|_2^2. 
\end{align*}
In particular, note that $\bu-\bu'\neq\mathbf{0}^n$, since $\bu$ cannot lie in the span of $\calL_{i+1}$ while still corresponding to a bad estimate. 

In the other case, we have $\|\by-\bw\|_2^2+\|\bw-\bu'\|_2^2+\|\bu'-\bu\|_2^2>16(\|\by-\bu\|_2^2)$. 
Then
\[\left\lvert\|\by-\bw\|_2^2+\|\bw-\bu'\|_2^2+\|\bu'-\bu\|_2^2-\|\by-\bu\|_2^2\right\rvert>\frac{15}{16}\cdot(\|\by-\bw\|_2^2+\|\bw-\bu'\|_2^2+\|\bu'-\bu\|_2^2).\]
Recall that
\[\|\by-\bu\|_2^2=\|\by-\bw\|_2^2+\|\bu'-\bu\|_2^2+2\langle \by-\bw,\bu'-\bu\rangle+\|\bw-\bu'\|_2^2.\]
So, if the estimator is not accurate, then we must have 
\[\lvert\langle \by-\bw,\bu'-\bu\rangle\rvert>\frac{1}{4}\cdot(\|\by-\bw\|_2^2+\|\bw-\bu'\|_2^2+\|\bu'-\bu\|_2^2).\]
Next, let $\by'$ be the projection of $\by$ to the span of the new subspace augmented with $\bu$. 
By Pythagorean theorem, we have that
\begin{align*}
\|\by-\by'\|_2^2&=\|\by-\bw\|_2^2-\frac{1}{\|\bu-\bu'\|_2^2}\cdot\langle \by-\bw,\bu-\bu'\rangle^2\\
&\le\|\by-\bw\|_2^2-\frac{1}{16}\cdot\frac{(\|\by-\bw\|_2^2+\|\bw-\bu'\|_2^2+\|\bu'-\bu\|_2^2)^2}{\|\bu-\bu'\|_2^2}\\
&\le\|\by-\bw\|_2^2-\frac{1}{16}\cdot\frac{\|\by-\bw\|_2^2\cdot\|\bu-\bu'\|_2^2}{\|\bu-\bu'\|_2^2}\\
&=\frac{15}{16}\cdot\|\by-\bw\|_2^2. 
\end{align*}

Note that the argument above assumes that we compute the new projection $\by'$ exactly. However, 
\Cref{alg:est:level:smallp} computes distances to $\by'$ up to additive error $\Theta\left(\frac{\eps^2 A}{H^2}\right)$. Since the total progress is $\O{\eta^2 A}$ for $\eta = \frac{\eps}{H}$ and the error incurred in $\by'$ is $\Theta\left(\frac{\eps^2 A}{H^2}\right)$, we have shown that $\|\by-\by'\|_2^2\le(1-\O{\eta^2})\cdot\|\by-\bw\|_2^2$. 

By assumption, we have $\|\by\|_2^2 = \|f(\bz)\|_2^2= \Theta(A) \leq \poly(n)$ initially, so the total number of vectors added to the list is at most $\O{\frac{1}{\eta^2}\log n}$, as desired.
\end{proof}

\subsection{Achieving Multiplicative Error}
\label{sec:mult:error}
In this section, we show how to convert an algorithm that estimates a function $F(\cdot)$ up to additive $\eps \cdot (1+\eps)^i$ error to an algorithm that achieves a multiplicative $(1+\eps)$-approximation guarantee. 
This allows us to complete our adversarially robust $F_p$ moment estimation algorithm on insertion-deletion streams.  
Let $\calA_1, \ldots, \calA_{K}$ denote $K = \O{\frac{1}{\eps}\log n}$ independent copies of a robust additive-error algorithm, where $\calA_i$ achieves additive $\eps \cdot (1+\eps)^i$ error when the frequency vector $\bx$ satisfies $F(\bx) \leq 10 \cdot (1+\eps)^i$.

\begin{figure*}[!htb]
\begin{mdframed}\label{alg:mult:error}
\textbf{Algorithm}:
\begin{enumerate}
\item For each stream element $u_t = (a_t, \Delta_t)$, update $\calA_1,\ldots, \calA_K$ with $u_t$. Let $\bx$ denote the frequency vector for the stream, after this update.
\item For $j = K,\ldots, 1:$
\begin{enumerate}
\item 
Compute $S_j \gets \calA_j(\bx)$. 
\item 
If $S_j \geq (1 + \frac{\eps}{4})(1+\eps)^{j-1}$, return $S_j$.
\item 
Otherwise, move to algorithm $\calA_{j-1}$.
\end{enumerate}
\end{enumerate}
\end{mdframed}
\end{figure*}

In the next lemma, we show that this algorithm  returns a $(1\pm \eps)$ approximation to $F(\bx)$ and is robust to $m = \poly(n)$ adaptive updates.

\begin{lemma}\label{lem:mult:error}
Given any $\eps \in (0,1)$, there exists an adversarially robust insertion-deletion streaming algorithm on a stream of length $m = \poly(n)$ that with high probability, outputs a multiplicative $(1\pm \O{\eps})$ approximation to $F(\bx)$. 
\end{lemma}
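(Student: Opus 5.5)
The plan has two parts. First I check that the scale-search in the algorithm above returns a multiplicative $(1\pm\O{\eps})$-approximation, assuming every $\calA_j$ meets its additive guarantee. Then I show, by a simulation argument, that this assumption holds simultaneously for all $j$ and all times, even against an adaptive adversary.

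\textbf{Accuracy.} Condition on the event $\mathcal{E}$ that for every $j\in[K]$ and every time $t$, whenever $F(\bx^{(t)})\le 10(1+\eps)^j$, the output of $\calA_j$ satisfies $|S_j-F(\bx^{(t)})|\le\eps'(1+\eps)^j$. I rescale the additive parameter to $\eps'=\eps/100$ so the thresholds have slack; this costs only a constant factor in the space of each $\calA_j$. Fix a time and let $F=F(\bx)$.

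The top level is $K=\O{\frac1\eps\log n}$ with $(1+\eps)^K\ge \poly(n)\ge F$, so the precondition $F\le 10(1+\eps)^j$ holds at $j=K$. I then argue by downward induction. Suppose we reach level $j$ without having returned, so every larger level $j'$ had $S_{j'}<(1+\frac\eps4)(1+\eps)^{j'-1}$. At $j'=j+1$, accuracy gives $F\le(1+\frac\eps4+\eps')(1+\eps)^{j}\le 10(1+\eps)^j$. Hence the precondition for $\calA_j$ holds whenever it is queried.

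Now suppose the algorithm returns $S_j$ at level $j$. Then $S_j\ge(1+\frac\eps4)(1+\eps)^{j-1}$, so $F\ge S_j-\eps'(1+\eps)^j=\Omega((1+\eps)^{j})$. The additive error $\eps'(1+\eps)^j$ is therefore $\O{\eps}\cdot F$, which gives the multiplicative bound.

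If no level returns, then $F\le \O{(1+\eps)^0}=\O{1}$. Here I would use integrality: $F$ is either $0$ or at least $1$, since $\bx\in\mathbb{Z}^n$, and for $p\in[0,2]$ we have $|x_i|^p\ge1$ whenever $x_i\ne0$. So $\calA_1$'s additive error of $\O{\eps}$ is again $\O{\eps}F$. To handle this cleanly I would let the lowest level output $S_1$, and output $0$ when $S_1<1/2$.

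\textbf{Robustness.} Each $\calA_j$ is adversarially robust on its own: by the preceding lemmas, under its precondition it is correct with high probability against any adaptive adversary. I prove $\Pr[\mathcal{E}]\ge 1-K\cdot\frac1{\poly(n)}$ by a union bound over $j$, and this union bound is the main subtlety. The transcript seen by the adversary depends on all the $\calA_{j'}$ jointly, so adaptivity against $\calA_j$ is not obviously covered by $\calA_j$'s own guarantee.

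The fix is a simulation reduction, using that the copies have independent randomness. Given any adversary $\calS$ against the combined algorithm, I build an adversary $\calS_j$ against $\calA_j$ alone. $\calS_j$ samples the randomness of every $\calA_{j'}$ with $j'\ne j$ itself and runs those copies internally on the same stream. It receives $\calA_j$'s outputs from the interaction, recomputes the combined output exactly as the algorithm would, and feeds it to $\calS$ to pick the next update.

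The transcript distribution under $\calS_j$ is identical to that under $\calS$. So the probability that $\calA_j$ fails at some time where its precondition holds is at most $\calA_j$'s robust failure probability, $1/\poly(n)$ over all $m=\poly(n)$ times. Union bounding over the $K=\O{\frac1\eps\log n}$ copies and the $m$ times keeps the failure probability at $1/\poly(n)$.

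One point needs care: $\calA_j$'s guarantee is conditional on its precondition, while the adversary may push $F$ above $10(1+\eps)^j$. I would phrase each $\calA_j$'s guarantee as holding at all times when the precondition holds, and invoke it only at such times. This is exactly what the accuracy argument requires.
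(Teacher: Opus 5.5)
Your proposal is correct and follows essentially the same route as the paper. Accuracy comes from the top-down scale search: returning at level $j$ certifies $F(\bx)=\Omega((1+\eps)^j)$, so the additive error $\O{\eps}(1+\eps)^j$ becomes multiplicative. Robustness comes from a simulation argument in which an adversary against a single $\calA_j$ internally runs the other independent copies. Your extra steps fill in details the paper's proof leaves implicit:
\begin{itemize}
\item rescaling the additive parameter to $\eps'=\eps/100$ so the return threshold has slack;
\item a downward induction that derives the precondition $F(\bx)\le 10(1+\eps)^j$ from $S_{j+1}$;
\item an integrality-based treatment of the case where no level returns;
\item an explicit union bound over the $K$ copies.
\end{itemize}
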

\begin{proof}
We first argue that the algorithm above is robust. Note that at each scale $j \in [K]$, the additive-error algorithm $\calA_j$ is guaranteed to return an estimate $S_j$ such that $|F(\bx) - S_j| \leq \eps \cdot (1+\eps)^j$, with high probability when the input $\bx$ satisfies $F(\bx) \leq 10 \cdot (1+\eps)^j$. Since the algorithm starts with $j = K$ and iterates backwards, this condition is satisfied automatically, so the algorithm $\calA_j$ will satisfy this additive guarantee. Furthermore, if $S_j$ is returned by $\calA_j$, it must be that $S_j \geq (1 + \frac{\eps}{4})(1+\eps)^{j-1}$ and $S_{j+1} < (1+\frac{\eps}{4})(1+\eps)^{j}$, so
\[
F(\bx) \ge (1 + \frac{\eps}{4})(1 + \eps)^{j - 1} - \O{\eps} \cdot (1 + \eps)^j
\]
and
\[
F(\bx) <(1 + \frac{\eps}{4}) (1 + \eps)^j + \O{\eps} \cdot (1 + \eps)^{j + 1} \;.
\]
From the above, we see that $S_j \in (1 \pm \O{\eps}) \cdot F(\bx)$. Thus, if each $\calA_i$ has additive error $\eps \cdot (1+\eps)^j$ when $F(\bx) \leq 10 \cdot (1+\eps)^j$, then the output is a $(1+\O{\eps})$-multiplicative approximation. 
So, it suffices to show that each $\calA_i$ has small additive error $\eps \cdot (1+\eps)^i$ for inputs $F(\bx) \leq 10 \cdot (1+\eps)^j$ with probability $1-\frac{1}{n^2}$, even when the stream is chosen adaptively. To see this, we argue by contradiction: suppose there exists a  adaptive strategy $\calS$ of length $\poly(n)$ for the adversary, such that for some $j \in [K]$, $\calA_j$ has error probability $\geq \frac{1}{n^2}$. We design an adaptive strategy to break the additive-error algorithm $\calA_j$ as follows: the adversary can simulate the execution of the algorithm for copies $\{\calA_1',\ldots, \calA_{j-1}', \calA_j, \calA_{j+1}',\ldots \calA_L'\}$, where $\calA_j$ is the embedded target copy. Then, the adversary can simply execute the adaptive strategy $\calS$ on $\{\calA_1',\ldots, \calA_j,\ldots, \calA_L' \}$, and by assumption $\calS$ will ensure that $\calA_j$ fails to output a $(1\pm\eps)$ approximation with probability at least $\frac{1}{n^2}$. However, $\calA_j$ was assumed to be robust to $m = \poly(n)$ adaptive inputs, so this results in a contradiction.
\end{proof}

We now show correctness for our adversarially robust $F_p$ estimation algorithm for $p \in [0,2)$, which achieves a $(1 + \eps)$-approximation to $F_p$ using polylogarithmic space.

\thmlpsmall*
\begin{proof}
We begin by noting that a randomized adversary can be modeled as a distribution over sequences of possible inputs. 
Since the probability that such an adversary succeeds is at most the maximum success probability over any fixed input sequence, it suffices without loss of generality to analyze the case of a deterministic adversary.  

We now upper bound the number of computation paths by upper bounding the number of interactions with the corrector. Recall that at each step, \MaintainList queries each $F_p$ sketch $\bB_i$ whether it should update the current iterate $\bv$ with query $\bq$, which would effectively add the embedding of the query $\bq$ to a list $\calL_{i}$ and solve the regularized regression problem in the new subspace. As a result, the transcript from $\bB_i$ can be viewed as a sequence over the alphabet $\{\bot, \top\}$, where the output $\bot$ from the sketching matrix indicates that no action should be taken (i.e. the estimator is sufficiently accurate for $\bq$) and $\top$ indicates that the iterate should be updated by adding the embedding of $\bq$ to the list $\calL_{i}$. If $\top$ occurs at most $L$ times in a single block of the level below, and there are $B$ blocks in the level below, it follows that the total number of times that the corrector will be used to flag an inaccurate query is $B \cdot L$ overall. 

Note that if the iterate is updated, this causes the linear sketch at level $i-1$ to maintain $\bB_{i-1} \bv'$, where $\bv'$ is the iterate obtained by solving the regularized regression problem. Additionally, the linear sketch $\bB_{i-1}$ releases the estimate $A_{i}$, but no other information about $\bB_i$ is revealed. 
Additionally, at the start of each of the $B$ blocks, the algorithm outputs an estimate of the $F_p$ moment, encoded using $C_1\log n$ bits for some constant $C_1 > 0$. Thus, the total number of possible output streams from the sketch matrix is bounded by  
\[\binom{m}{L\cdot B}\cdot (C_1\log n)^{B\cdot L + B}.\]
Since the adversary is deterministic, each possible input stream is determined by one of these output streams.  
By setting the total failure probability to  
\[\delta \le \frac{1}{(nm)^3}\cdot\left(\binom{m}{L\cdot B}\cdot (C_1\log n)^{B \cdot L + B}\right)^{-1},\]
a union bound over all possible adversarial input streams guarantees that the sketch matrix at level $i$ is correct with probability at least $1 - \frac{1}{(nm)^3}$.  
For $m = \poly(n)$, it follows that
\[\log\frac{1}{\delta} = \O{(B \cdot L + B)\log n}.\]

Next, we count the number of adaptive interactions with the estimator, which consists of $P_i$ and $Q_i$. To start, we use a bounded computation paths argument to track the adaptive interactions due to $P_i$. First, we note that the output of each sketch is rounded to $\O{\log n}$ bits. Furthermore, the number of adaptive interactions with $\bB_i$ is precisely the number distinct vectors which are sketched by $B_i$. Recall that the iterate $\bv$ is updated at most $L$ times due to some hard query $\bq_i$, and to estimate $P_i = \|f(\bz) - \bv\|_2^2$ we must compute $\O{L^2}$ inner products. Thus, it follows that the number of computation paths is $\binom{m}{L^2} \cdot \left(2^{C_2 \log n}\right)^{L^2}$ for some constant $C_2 > 0$. As before, it suffices to set $\log \frac{1}{\delta} = \O{(B \cdot L^2 + B) \log n}$. In particular,observe that each adaptive update to $P_i$ corresponds to a different value for the iterate $\bv$, and other queries to $P_i$ are to the same value of the iterate $\bv$ and are thus correct conditioned on the correctness of the estimate for $\bv$. For robustness of the estimate $Q_i$, we observe that by \Cref{lem:recursion:computation}, $Q_i$ consists of a term $\|\sum_{i = 1}^L \alpha_i \cdot f(\bq_i - \bq_{L}) - f(\bq_{L+1})\|_2^2$ which is the input to the recursion, and $\O{L^2}$ inner products between $f(\bz)$, $f(\bq_i)$, and $f(\bq_i - \bq_{L})$. Since the addition of each of the $f(\bq_i)$ to the list corresponds to a single update to the iterate $\bv$, and the $\bq_i$ are fixed, it follows that the inner products discussed in \Cref{lem:recursion:computation} can be computed robustly given that $\bB_i$ is robust to $\O{L^2}$ adaptive interactions.

This establishes robustness, after which correctness follows directly from \Cref{invar:output:acc:p}. By setting the error parameter $\beta = \frac{\eps^2}{\kappa^2 \cdot L^2 \cdot H^2 }$ as in \Cref{lem:est:Pi}, we ensure that the total allowable error for each level of the algorithm is $\frac{\eps A}{100 H} = \eta \cdot \Theta(A)$.

At this point, recall that the algorithm $\calA$ achieves additive $\eps \cdot A$ error when the frequency vector $\bx = \bz - \bq$ has $F_p$ moment $\|\bx\|_p^p\leq 10A$, for a fixed parameter $A \leq \poly(n)$, and is robust to $m = \poly(n)$ adaptive updates. By applying the framework in \Cref{alg:mult:error}, we obtain an algorithm $\calA'$ which returns a multiplicative $(1\pm \eps)$-approximation to $\|\bx\|_p^p$ at all times in the stream, with high probability. 

\paragraph{Space complexity.}
We now analyze the total space of the algorithm. 
For a fixed tree and level $i \in [H]$, we maintain $B$ sketch matrices, one for each active block at level $i+1$. Recall $\eta = \frac{\eps}{100H}$ is the total error for each level. Each sketching algorithm must achieve accuracy $\left(1+\beta\right)$ and failure probability $\delta$, where  
\[\log\left(\frac{1}{\delta}\right) = \O{(B \cdot L^2 +B)\log n}.\]
By \Cref{lem:list:size:ltwo}, we have $L \le \frac{C_3}{\eta^2}\log n$ for some constant $C_3 > 0$, with high probability, conditioned on the correctness of the subroutines.  
Since we require $B^H \ge m$ and $B \ge L$, it suffices to set  
\[B = \O{\frac{1}{\eta^2}\log n},\qquad H = \O{\log n},\]
so that $\log \frac{1}{\delta} = \O{\frac{1}{\eta^6} \log^4 n}$. 

By \Cref{thm:obliv:small:fp}, the space required to store the sketch $\bB_i$ is $\O{\frac{1}{\beta^2} \log^2 n \log \frac{1}{\delta}}$. 
There are $B = \O{\frac{1}{\eta^2} \log n}$ sketching matrices in each level across $H$ levels of recursion; moreover, the multiplicative-error framework in \Cref{alg:mult:error} requires storing $\O{\log n}$ copies of this additive-error algorithm. 
Therefore, the final algorithm uses $\poly\left(\frac{1}{\eps},\log n\right)$ bits of memory.
\end{proof}

\ignore{
\begin{remark}[Improved space bound via differential privacy]
    As an alternative to the bounded computation paths argument, one can apply the differential privacy-based robustification framework of \cite{HassidimKMMS20} to protect sketches $\bB_i$ from adaptive interactions with the adversary in the argument above. By Theorem 3.4 in their work, each sketch $\bB_i$ uses

    \[\O{\frac{1}{\eta^2} \log n \log \left(\frac{1}{\delta}\right) \cdot \sqrt{(B \cdot L + B) \cdot \log \left(\frac{1}{\delta}\right)} \cdot \log \left(\frac{m}{\eta \delta} \right)}\]

    bits of space, where $m \leq \poly n$ and $\log \left(\frac{1}{\delta}\right) = \O{\frac{1}{\eta^4}\log^3 n}$, and $\eta = \frac{\eps}{H}$. Substituting into the bound above, we obtain \elena{fill.}
\end{remark}}

\section{Embedding Applications}

\subsection{Embedding into \texorpdfstring{$L_p$}{Lp}}
In this section, we give a number of additional applications of our adversarially robust $L_p$ estimation algorithm for $p\in[0,2)$. 
Unlike the applications in \Cref{sec:huber-loss} which embed into $L_2^2$, the applications in this section embed into $L_p$ for $p\in[0,2)$. 
To that end, recall that a result by \cite{AndoniKR18} guarantees that any norm $\calX = (\mathbb{R}^n, \|\cdot\|_{\calX})$ satisfying triangle inequality and sub-multiplicativity of norms admits a classical sketch and embeds linearly into a measure space $L_p(\mu)$ that solves the following problem:

\begin{definition}[Gap-norm problem]
Given a parameter $D>1$, a vector $\bx\in\mathbb{R}^n$ and a norm $\|\cdot\|_{\calX}$, distinguish between the input cases $\|\bx\|_{\calX}\le 1$, versus $\|\bx\|_{\calX}\ge D$. 
\end{definition}

\begin{theorem}
\label{thm:akr}
\cite{AndoniKR18}
Suppose $p\in\left(\frac{2}{3},1\right)$ and $\calX$ is a norm that admits an oblivious linear sketch for the $D$-gap norm problem using $s$ space, then $\calX$ admits a linear embedding $T: \calX\to L_p(\mu)$ with distortion $\O{sD}$. 
\end{theorem}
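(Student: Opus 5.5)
Since \Cref{thm:akr} is a structural result imported from \cite{AndoniKR18}, I would reconstruct its proof in the same spirit rather than from first principles. The plan has three stages. First, turn the linear sketch into a \emph{uniform embedding} of $\calX$ into Hilbert space. Second, upgrade the uniform embedding to a linear embedding into $L_0$. Third, pass from $L_0$ to $L_p$, $p\in(\frac{2}{3},1)$, by a Nikishin-type change of density, tracking the distortion throughout to get $\O{sD}$.

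For the first stage, fix the sketching distribution $\bA$ with decoder $\mathrm{Dec}$ using $s$ bits. I would define $\rho(\bx,\by)=\PPr{\mathrm{Dec}(\bA\bx)\ne\mathrm{Dec}(\bA\by)}$ restricted to an $s$-bit output. Alternatively, and more cleanly, take the Hamming distance between the $s$-bit encodings of $\bA\bx$ and $\bA\by$, averaged over $\bA$. Since Hamming distance on $\{0,1\}^s$ is of negative type, this average is also of negative type. Linearity of the sketch gives $\bA\bx-\bA\by=\bA(\bx-\by)$, so $\rho$ depends only on $\bx-\by$. I would then show that this yields a translation-invariant kernel with two bounds: $\rho(\bx,\by)$ is small, say at most $\frac{1}{3}$ after amplification, when $\|\bx-\by\|_{\calX}\le 1$, and $\rho(\bx,\by)\gtrsim 1/s$ when $\|\bx-\by\|_{\calX}\ge D$. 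By Schoenberg's theorem (\Cref{thm:embed}-style), $\sqrt{\rho}$ is then a Hilbertian metric. Rescaling the norm gives a map $F:\calX\to\ell_2$ whose compression and expansion moduli are controlled at scales $1$ and $D$, with a gap of order $s$.

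For the second stage, I would invoke Aharoni--Maurey--Mityagin together with Kalton's theorem: a normed space admitting a uniform (coarse) embedding into Hilbert space linearly embeds into $L_0(\mu)$. In effect, one averages $F$ over translations to produce a positive-definite function $\exp(-\rho(\cdot))$ that is invariant and continuous. Its Bochner representation gives a measure whose characteristic function encodes $\|\cdot\|_{\calX}$ up to distortion. Note that finite dimensionality of $\calX$ makes the averaging, i.e., the invariant-mean step, unproblematic. The third stage applies Nikishin's factorization: for $p<1$, an embedding into $L_0$ with quantitative control factors through $L_p(\mu')$ after a change of density, at a cost depending on $p$ bounded away from $1$.

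I expect the main obstacle to be the quantitative bookkeeping rather than the qualitative embedding. Each of these classical steps is usually stated with unspecified constants, so I would need to verify that the distortion stays $\O{sD}$, i.e., linear in both the gap and the sketch size. Concretely, I would have to check that the $1/s$ lower bound on $\rho$ at distance $D$ propagates through the Bochner and Nikishin steps with only a constant loss for $p\in(\frac{2}{3},1)$. The lower endpoint $\frac{2}{3}$ is presumably exactly where the Nikishin constant stays bounded with this dependence. If a direct argument failed, I would fall back on citing the quantitative version of these steps in \cite{AndoniKR18}.
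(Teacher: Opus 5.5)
The paper gives no proof of this statement. It is quoted from \cite{AndoniKR18}, where the bound is $\O{sD/\eps}$ for $p=1-\eps$ with $\eps<\frac13$. So your reconstruction has to stand on its own, and it breaks at Stage~1.

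\textbf{Stage 1 gives no separation.} The kernel $\rho$ you build from the sketch has no reason to be small on close pairs. The sketching guarantee constrains only the decoder's output, not the sketches themselves, and the decoder may accept pairs whose sketches differ. For example, with a $1$-stable (Cauchy) sketch of $\ell_1$, two points at distance $1$ already have different sketches with constant probability. How many encoded bits then differ is an artifact of the encoding (rounding, carries), not of $\|\bx-\by\|_{\calX}$.

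The only bound you can actually extract goes the other way. On the event $\mathrm{sk}(\bx)=\mathrm{sk}(\by)$ the decoder cannot tell $(\bx,\by)$ from $(\bx,\bx)$. Hence a far pair has $\PPr{\mathrm{sk}(\bx)\neq\mathrm{sk}(\by)}\ge\frac13$, i.e., normalized Hamming distance $\gtrsim 1/s$. Even taking your claimed numbers at face value, ``$\le\frac13$ on close pairs, $\gtrsim\frac1s$ on far pairs'' is no separation once $s\ge 3$. So no ``gap of order $s$'' follows.

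Two further problems in Stage 1:
\begin{enumerate}
\item The Hamming distance between discretized encodings of $\bA\bx$ and $\bA\by$ is not a function of $\bA(\bx-\by)$, since the encoding is nonlinear. Translation invariance therefore does not come from linearity.
\item If $\mathrm{Dec}$ is the gap decoder, your other candidate $\PPr{\mathrm{Dec}(\bA\bx)\neq\mathrm{Dec}(\bA\by)}$ compares separate decisions about $\|\bx\|_{\calX}$ and $\|\by\|_{\calX}$. It is therefore near $0$ on far pairs such as $\by=-\bx$ with $\|\bx\|_{\calX}\ge D$.
\end{enumerate}

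\textbf{The missing mechanism.} What is missing is an argument that genuinely uses the decoder, and this is also where the factor $s$ comes from. In \cite{AndoniKR18} the sketch-to-Hilbert step is proved by contradiction:
\begin{enumerate}
\item Suppose $\calX$ had no $\ell_2$ ``threshold map'' separating distances $\le 1$ from distances $\ge D'$. Convex duality over negative-type kernels then yields a Poincar\'e-type inequality.
\item An information-complexity direct-sum argument turns that inequality into an $\Omega(k)$-bit lower bound for the OR of $k$ independent gap instances.
\item Because $\calX$ is a norm, summing the $k$ pairs maps this OR into a single gap instance of $\calX$. The distance is $\le k$ if all pairs are close and $\ge D'-k$ if exactly one is far.
\item The given $s$-bit sketch (rescaled to $k$ versus $Dk$) solves this OR whenever $D'\ge(D+1)k$. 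This forces $k=\O{s}$, so a threshold map exists already at $D'=\O{sD}$.
\end{enumerate}
Only after this does an Aharoni--Maurey--Mityagin-type argument produce a linear map into $L_q$, $q=1-\eps$, with distortion $\O{D'/\eps}$. That argument uses translation averaging, Bochner's theorem, and $q$-th moments computed via $|a|^q=c_q\int_0^\infty(1-\cos ta)\,t^{-1-q}\,dt$.

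\textbf{Stages 2--3.} Invoked qualitatively, these carry no information here. Every $n$-dimensional normed space is $\sqrt n$-isomorphic to $\ell_2^n$ (John), and hence embeds linearly into $L_0$ and into every $L_p$. The entire content of the theorem is the quantitative $\O{sD}$, which you explicitly defer to \cite{AndoniKR18}.

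\textbf{Dependence on $p$.} The loss in the last step is a factor $1/(1-p)$, which blows up as $p\to 1$; the $\O{sD}$ in the statement treats $p$ as a constant. The endpoint $\frac23$ is just the convenience range $\eps<\frac13$ of \cite{AndoniKR18}, not a threshold for Nikishin-type constants.
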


However, the proof of \Cref{thm:akr} is existential and not necessarily algorithmically computable in general. 
Thus to efficiently evaluate this embedding, we utilize an advice model where the streaming algorithm is granted read-only access to a static advice string, i.e., the embedding matrix $\bA$, that depends solely on the metric space $\calX$ and dimension $n$.
In particular, we assume that stream updates arrive as discrete increments in the standard computational basis $\be_1, \ldots, \be_n$ of $\mathbb{R}^n$ and that the metric space is well-conditioned with respect to this basis; otherwise, the norm of a simple unit stream update could exceed the representable word size. 
Formally:
\begin{assumption}[Basis condition number]
\label{assumption:basis:condition}
Let $R = \max_{j=1}^n \|\bu_j\|_{\calX}$ denote the maximum norm of any standard computational basis vector $\be_1,\ldots,\be_n$, and let $r = \min_{x \neq 0} \frac{\|\bx\|_{\calX}}{\|\bx\|_p}$ denote the minimum scaling factor relative to the standard $L_p$ norm. 
We assume that the geometric condition number defined by $\kappa = \frac{R}{r}$ satisfies $\kappa \le \poly(n)$. 
\end{assumption}

Given that the distortion of the embedding of \Cref{thm:akr} is $\O{sD}$, we cannot subsequently embed $L_p$ into $L_2^2$, as this could cause a blow-up of $(sD)^{\O{H}}$ across $H$ different recursive levels. 
Instead, we observe that the embedding of \Cref{thm:akr} is crucially linear, so we can directly map the input stream to a virtual stream over $\mathbb{R}^M$, where $M$ is the dimension of the embedding. 
Our robust $L_p$ estimation algorithm is then executed over the virtual stream. 
However, this requires some care about the magnitude of $M$, as well as the magnitude of the resulting entries, which we now observe. 

\begin{lemma}
\label{lem:embedding:properties}
Let $\calX = (\mathbb{R}^n, \|\cdot\|_{\calX})$ be a finite-dimensional normed space. 
Let $\bA\in\mathbb{R}^{M\times n}$ be the static linear embedding into $L_p^M$ with distortion $D' = \O{sD}$. 
Then the dimension is upper bounded by $M=\O{n\log n}$ and there exists a basis for $\calX$ such that every entry of $\bA$ can be represented exactly using $\O{\log n+\log\frac{sD}{\eps}}$ bits.
\end{lemma}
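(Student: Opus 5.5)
We start from the linear embedding $T:\calX\to L_p(\mu)$ of \Cref{thm:akr}, with distortion $D'=\O{sD}$ and $p\in(\frac{2}{3},1)$, and make it finite-dimensional, then discretize it. The first step reduces dimension. The image $T(\mathbb{R}^n)$ is an $n$-dimensional subspace of $L_p(\mu)$. We invoke the classical results on embedding $n$-dimensional subspaces of $L_p$ into $\ell_p^M$ (Talagrand / Schechtman--Zvavitch type sampling via Lewis weights, valid for $0<p<1$ as well). These give $M=\O{n\log n}$ (up to $\poly(1/\eps)$ factors, which we absorb by fixing a constant distortion $1+\eps$) and a linear map $S:T(\mathbb{R}^n)\to\ell_p^M$ with distortion $1+\eps$. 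Composing gives $\bA_0=S\circ T\in\mathbb{R}^{M\times n}$ with distortion $(1+\eps)D'=\O{sD}$.

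The second step chooses the basis and normalizes. Under \Cref{assumption:basis:condition}, we rescale $\bA_0$ so that $\|\bA_0\bx\|_p\ge\|\bx\|_{\calX}$ for all $\bx$. The upper bound $\|\bA_0\bx\|_p\le D'\|\bx\|_{\calX}$ then bounds each column: $\|\bA_0\be_j\|_p\le D'R$. Every entry of $\bA_0$ therefore has magnitude at most $D'R$. If we want a nicer basis, we pick an Auerbach basis of $\calX$ (or John-position basis), so that coordinates of any $\bx$ are controlled by $\|\bx\|_{\calX}$. This step is where the basis freedom in the statement is used.

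The third step rounds every entry of $\bA_0$ to an integer multiple of a grid width $\gamma$, giving $\bA$. For any $\bx$, the error satisfies $\|(\bA-\bA_0)\bx\|_p^p\le\sum_{i,j}\gamma^p|x_j|^p\le M\gamma^p\|\bx\|_p^p\le M\gamma^p\|\bx\|_{\calX}^p/r^p$. This uses the $p$-triangle inequality for $p<1$. We choose $\gamma^p=\eps^p r^p/M$, up to normalizing $r$ against the lower bound $\|\bA_0\bx\|_p\ge\|\bx\|_{\calX}$. Then the perturbation is at most $\eps\|\bx\|_{\calX}\le\eps\|\bA_0\bx\|_p$, so by the $p$-triangle inequality the distortion changes only by a $(1\pm\O{\eps})$ factor. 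Entries lie in $[-D'R,D'R]$ on a grid of width $\gamma=\eps r/M^{1/p}$. So the number of bits is $\log(D'RM^{1/p}/(\eps r))=\O{\log n+\log\frac{sD}{\eps}}$, using $\kappa=R/r\le\poly(n)$, $M=\O{n\log n}$ and $1/p<3/2$.

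I expect the main obstacle to be the first step: justifying $M=\O{n\log n}$ for the quasi-normed range $p<1$ and for an arbitrary measure $\mu$. My plan there is to first approximate $T$ by a simple-function (finite atomic) measure with arbitrary fine precision, using density of simple functions in $L_p$ and compactness of the unit sphere of $\calX$. I would then apply the Lewis-weight sampling theorem for $p\in(0,1)$, which yields $\O{n\log n\cdot\poly(1/\eps)}$ rows, and treat $\eps$ as a constant for the dimension bound.
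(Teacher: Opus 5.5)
Your proposal is correct. The dimension step is the same as the paper's: compose the embedding of \Cref{thm:akr} with the Schechtman--Talagrand--Zvavitch subspace discretization at constant distortion. Your extra plan of first passing to a finitely supported measure fills in what the paper only cites.

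You differ from the paper in how you control entry sizes and rounding error. The paper does not use \Cref{assumption:basis:condition} in this lemma. It passes to an Auerbach basis, where $\|\be_j\|_{\calX}=1$ gives the entry bound $D'$ directly. There, $\max_j|x_j|\le\|\bx\|_{\calX}$ gives $\sum_j|x_j|\le n\|\bx\|_{\calX}$, which is why its grid width is $\eps/(M^{1/p}n)$. You instead stay in the standard basis and let $R$ and $r$ play these two roles: $R$ gives the entry bound $D'R$, and $\|\bx\|_p\le\|\bx\|_{\calX}/r$ controls the rounding error. The bit count then stays at $\O{\log n+\log\frac{sD}{\eps}}$ because $\kappa=R/r\le\poly(n)$.

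The paper's route needs no assumption and matches the ``there exists a basis'' wording exactly. Yours gives the bit bound in the coordinates in which stream updates actually arrive, which is what the algorithm uses, but it depends on the conditioning assumption. You also make explicit a step the paper leaves implicit: using the normalization $\|\bA_0\bx\|_p\ge\|\bx\|_{\calX}$ to turn the additive perturbation $\eps^p\|\bx\|_{\calX}^p$ into a relative one.

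One small slip: the $p$-triangle inequality gives $\|\bA\bx\|_p^p\in(1\pm\eps^p)\|\bA_0\bx\|_p^p$. That is a $1\pm\O{\eps^p}$ change, which is weaker than the $1\pm\O{\eps}$ you state, since $p<1$. Choosing $\gamma^p=\eps r^p/M$ instead gives $1\pm\O{\eps}$ and costs only a constant factor in the bit count. In any case the target distortion is only $\O{sD}$, so a constant $\eps$ already suffices.
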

\begin{proof}
We first show an upper bound on the target dimension $M$. 
By \Cref{thm:akr}, there exists a linear map $T: \calX \to L_p(\mu)$ for some $p \in (2/3, 1)$ with distortion $\O{sD}$. 
By the subspace discretization theorem of \cite{schechtman1987more,talagrand1990embedding,schechtman2001embedding}, any $n$-dimensional subspace of $L_p$ linearly embeds into $L_p^M$ with distortion $1+\eps$ using $M = \O{\frac{n}{\eps^2}\log n}$ coordinates, e.g., using Lewis Weight sampling. 
Hence, we have $M=\O{n\log n}$ by fixing the distortion to be a constant, i.e., $\gamma=\Theta(1)$. 

To upper bound the magnitude of the entries, recall that by Auerbach's Lemma, there exists a basis $\be_1, \ldots, \be_n$ for $\calX$ such that $\|\be_j\|_{\calX} = 1$ and for any $\bx = \sum x_j \be_j$, we can upper bound the coefficients by $\max_j |x_j| \le \|\bx\|_{\calX}$. 
Let $\bA^* \in \mathbb{R}^{M \times n}$ be the exact continuous embedding matrix under this basis. 
Then the maximum magnitude of any entry is upper bounded by its $L_p$ column norm: 
\[ \max_i |A^*_{i,j}| = \left(\max_i |A^*_{i,j}|^p\right)^{1/p} \le \left(\sum_{i=1}^M |A^*_{i,j}|^p\right)^{1/p} = \|\bA^*\be_j\|_p \le D' \|\be_j\|_{\calX} = D'=\O{sD}. \]

To lower bound the magnitude of the nonzero entries, consider a discrete matrix $\bA$ formed by rounding every entry of $\bA^*$ to the nearest multiple of $\nu = \frac{\eps}{M^{1/p} n}$. 
For any stream vector $\bx \in \mathbb{R}^n$, the additive error in the $L_p^p$ quasi-norm induced by this perturbed matrix $\bA$ is at most:
\[\|\bA\bx - \bA^*\bx\|_p^p = \sum_{i=1}^M \left|\sum_{j=1}^n (A_{i,j} - A^*_{i,j}) x_j\right|^p \le \sum_{i=1}^M \left(\sum_{j=1}^n \nu |x_j|\right)^p \le \sum_{i=1}^M (\nu n \max_j |x_j|)^p \le M \nu^p n^p \|\bx\|_{\calX}^p = \eps^p \|\bx\|_{\calX}^p.\]
Because $L_p$ is a quasi-norm for $p\le1$, we have the subadditivity inequality $\|\bu+\bv\|_p^p \le \|\bu\|_p^p + \|\bv\|_p^p$. 
Thus, the additive error translates to the metric bounds as:
\[ \|\bA\bx\|_p^p \le \|\bA^*\bx\|_p^p + \|\bA\bx - \bA^*\bx\|_p^p \le \|\bA^*\bx\|_p^p + \eps^p\|\bx\|_{\calX}^p \]
and symmetrically, $\|\bA\bx\|_p^p \ge \|\bA^*\bx\|_p^p - \eps^p\|\bx\|_{\calX}^p$.
This implies that discretizing to the grid of granularity $\nu$ changes the $L_p^p$ distortion bound by a negligible additive factor of $\eps^p$. 
Moreover, since the entries reside in $[-D', D']$ on a grid of step $\nu$, the total number of representable values is $\frac{2D'}{\nu}=\O{\frac{D'M^{1/p}n}{\eps}}$. 
Hence, every entry $A_{i,j}$ can be represented using $\O{\log n+\log\frac{sD}{\eps}}$ bits, which gives the claimed bit complexity. 
\end{proof}

Thus, we have the following corollary to \Cref{thm:akr}:
\begin{corollary}
\label{cor:sketch:to:alg}
Suppose $\calX$ is a norm that admits an oblivious linear sketch for distinguishing distance $1$ vs $D$ using $s$ space. 
Then there exists an adversarially robust streaming algorithm for $\calX$ that, given oracle access to the corresponding embedding matrix, uses $S\cdot\polylog(n)$ space and achieves a multiplicative $\O{sD}$ approximation.  
\end{corollary}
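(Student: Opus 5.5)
The plan is to compose the linear embedding of \Cref{thm:akr} with the robust $F_p$ estimator of \Cref{thm:lp:small}, running the latter on a \emph{virtual} turnstile stream. By \Cref{thm:akr}, for a fixed $p\in(\frac{2}{3},1)$ the sketch for the $D$-gap problem gives a linear map $T:\calX\to L_p(\mu)$ with distortion $\O{sD}$. By \Cref{lem:embedding:properties}, we may replace $T$ by a matrix $\bA\in\mathbb{R}^{M\times n}$ with $M=\O{n\log n}$ whose entries use $\O{\log n+\log(sD)}$ bits. Discretization changes the distortion only by a constant factor once the rounding granularity $\eps$ is a small constant, using that $\|\bA^*\bx\|_p\ge \|\bx\|_{\calX}/D'$ up to normalization. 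The matrix $\bA$ is the advice string supplied by the oracle.

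The algorithm works as follows. On each update $(a_t,\Delta_t)$ to $\bx$, it reads column $a_t$ of $\bA$ from the oracle and feeds the $M$ virtual updates $(i,\Delta_t A_{i,a_t})$ for $i\in[M]$ into the algorithm of \Cref{thm:lp:small}, instantiated on universe $[M]$ with a constant accuracy parameter. Since $\bA$ is fixed and independent of the algorithm's randomness, the map from the adversary's updates to virtual updates is a deterministic post-processing. Any adaptive adversary against the composed algorithm therefore induces an adaptive adversary against the $F_p$ algorithm on the virtual stream, and robustness transfers directly. The output is $\widetilde{F}^{1/p}$ times the normalization constant of the embedding. This is within a constant factor of $\|\bA\bx\|_p$, hence within $\O{sD}$ of $\|\bx\|_{\calX}$.

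The main technical obstacle is making the virtual stream fit the hypotheses of \Cref{thm:lp:small}, which assumes integer frequencies bounded by $\poly(M)$ and stream length $\poly(M)$. I would scale $\bA$ by $1/\nu$ so that its entries are integers. The resulting entries and frequencies are bounded by $\poly(n,sD)$, using \Cref{assumption:basis:condition} and $m=\poly(n)$, and we may assume $sD\le\poly(n)$, since otherwise the trivial bound suffices. The virtual stream has length $mM=\poly(n)$. Each real update expands into $M$ virtual ones, but this affects only time, not space. The dependence of the space bound on $\log$ of the universe size and magnitude is polylogarithmic, giving $\polylog(n)$ space for constant $\eps$. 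The remaining factor $S$ in the bound comes from storing words of $\O{\log n+\log(sD)}$ bits. The space bound then follows from \Cref{thm:lp:small} with $\log M=\O{\log n}$.
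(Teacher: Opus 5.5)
Your proposal is correct and is essentially the paper's argument. You discretize the linear embedding of \Cref{thm:akr} via \Cref{lem:embedding:properties}, run the robust $F_p$ algorithm of \Cref{thm:lp:small} on the virtual turnstile stream $\bA\bx$, and use that its space depends only polylogarithmically on $M$ and on the entry magnitudes. The extra details you supply are left implicit in the paper and are consistent with it: integer rescaling by $1/\nu$, the $mM=\poly(n)$ virtual stream length, and transferring robustness because the adversary can simulate the fixed map $\bA$. One small point: once $sD\le\poly(n)$ your accounting actually gives $\polylog(n)$ space, so there is no need to attribute the factor $S$ to word size.
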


Putting together \Cref{cor:sketch:to:alg} with known equivalences between turnstile algorithms and linear sketches, we have the following:
\thmequivalence*
\begin{proof}
Note that the forward direction follows immediately by \Cref{cor:sketch:to:alg}, while the reverse direction holds for polynomially-long streams from known results, e.g., \cite{LiNW14,AiHLW16,HosseiniLY19, JiangLY26}. 
\end{proof}

Finally, we give a number of specific applications of \Cref{cor:sketch:to:alg} where the linear transformation $\bA$ is explicit.

\paragraph{Robust graph total variation.}
In graph signal processing, the graph total variation is a standard tool for quantifying the smoothness of a signal $\bx \in \mathbb{R}^V$ defined on a graph $G=(V,E)$. 
Fixing an arbitrary orientation of the edges, the corresponding incidence matrix $\bB \in \{-1,0,1\}^{|E| \times |V|}$ maps vertex signals to edge-wise differences. 
The graph total variation norm can be written as $\|\bx\|_{\text{TV}} = \|\bB\bx\|_1$, i.e., the $L_1$ norm of edge differences. 
The matrix $\bB$ thus provides an isometric linear embedding from $(\mathbb{R}^V, \|\cdot\|_{\text{TV}})$ into $(\mathbb{R}^E, \|\cdot\|_1)$. 
Then applying \Cref{cor:sketch:to:alg}, we have:
\begin{theorem}[Robust Graph Total Variation]
\label{thm:graph:tv}
There exists an adversarially robust turnstile streaming algorithm that computes a $(1+\eps)$-approximation to the graph total variation seminorm $\|\bx\|_{\text{TV}(G)} = \sum_{(u,v)\in E} |x_u - x_v|$, using $\poly\left(\frac{1}{\eps},\log|V|\right)$ bits of space. 
\end{theorem}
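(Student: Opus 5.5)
The plan is to reduce to robust $F_1$ estimation on a virtual stream obtained by the linear map $\bB$, in the spirit of \Cref{cor:sketch:to:alg}. Here no advice or discretization is needed, because $\bB$ is explicit and has entries in $\{-1,0,1\}$.

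First, I would fix an arbitrary orientation of $E$ and define the virtual vector $\by = \bB\bx\in\mathbb{Z}^{E}$. Each stream update $(v,\Delta)$ to $\bx$ translates deterministically into at most $\deg(v)\le |V|$ virtual updates: for every edge $e=(u,v)$ oriented out of or into $v$, add $\pm\Delta$ to $y_e$. The entries of $\by$ stay integral and bounded by $2\|\bx\|_\infty\le\poly(|V|)$. Since $|E|\le |V|^2$, the virtual stream has length $m\cdot|V|=\poly(|V|)$ over a universe of size $|V|^2$. The algorithm itself never stores $\bB$; it recomputes the incident edges of $v$ from the (fixed, publicly known) graph $G$, which is part of the problem specification. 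If one insists on counting storage of $G$, I would treat $G$ as read-only input, exactly as the advice model does for $\bA$.

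Second, I would run the algorithm of \Cref{thm:lp:small} with $p=1$ on the virtual stream over universe size $|E|$. That algorithm outputs, after each real update (i.e., after the batch of virtual updates), a $(1+\eps)$-approximation to $\|\by\|_1=\|\bB\bx\|_1=\|\bx\|_{\text{TV}(G)}$. Its space is $\poly(1/\eps,\log|E|)=\poly(1/\eps,\log|V|)$.

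The point to argue carefully is robustness. An adaptive adversary against our algorithm induces, via the fixed deterministic map $\bx\mapsto\bB\bx$, an adaptive adversary against the robust $F_1$ algorithm. This induced adversary issues batches of updates chosen as a function of the previous outputs, and outputs are only revealed at batch boundaries, which is a weaker interaction than the model allows. Formally, I would take any strategy $\calS$ breaking our algorithm and simulate it as a strategy on the virtual stream: between outputs it simply issues the $\deg(v)$ translated updates, and the intermediate outputs it sees are ignored. Hence the failure probability is bounded by that of \Cref{thm:lp:small}, and a union bound over all $m$ times gives high-probability correctness. There is also a minor issue that the statement of \Cref{thm:lp:small} needs the stream length polynomial in the universe size, which holds here since both are $\poly(|V|)$. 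I expect no real obstacle beyond verifying these bookkeeping conditions, because $\bB$ is an exact isometry, so no distortion enters.
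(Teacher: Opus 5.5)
Your proposal is correct and is essentially the paper's argument. The incidence matrix $\bB$ is an explicit isometric linear embedding of $(\mathbb{R}^V,\|\cdot\|_{\text{TV}})$ into $\ell_1^E$, and one runs the robust $F_1$ algorithm on the virtual stream $\bB\bx$ with $G$ available as read-only advice, which is what the paper does by invoking \Cref{cor:sketch:to:alg}; your direct appeal to \Cref{thm:lp:small} with $p=1$ (that corollary as stated only guarantees an $\mathcal{O}(sD)$ factor) and your simulation argument for the batched virtual updates fill in details the paper leaves implicit.
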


\paragraph{Robust trend filtering ($k$-th order total variation).}
Consider a temporal signal $\bx \in \mathbb{R}^n$, which can be viewed as a sequence of observations over time, e.g., measurements or prices. 
A natural way to quantify the variability of such a signal is through its discrete differences, where the first-order differences capture successive changes $x_{i+1} - x_i$ and higher-order differences capture more refined notions of variation, such as curvature for $k=2$ and higher-order fluctuations. 
Formally, the $k$-th order difference operator is represented by a matrix $\bD^{(k)}$, obtained by iteratively applying a total of $k$ times the first-order operator $\bD^{(1)} \in \mathbb{R}^{(n-1)\times n}$, which has $1$ on the main diagonal and $-1$ on the superdiagonal, i.e., the diagonal that is directly above and to the right of the main diagonal. 
In this way, the $k$-th order total variation norm is defined as $\|\bx\|_{TV^{(k)}} = \|\bD^{(k)}\bx\|_1$, so that it is the sum of the absolute values of these higher-order differences. 
Intuitively, this quantity measures how irregular the signal is at scale $k$. 
For example, $k=1$ penalizes sharp jumps, while $k=2$ penalizes changes in slope, encouraging piecewise linear structure. 
This notion forms the basis of trend filtering, where one seeks signals that balance fidelity to data with low higher-order variation.
Note that since $\bD^{(k)}$ can be explicitly computed, the $k$-th order total variation norm embeds explicitly and isometrically into $L_1$. 
Thus by applying \Cref{cor:sketch:to:alg}, we have:
\begin{theorem}[Robust Trend Filtering]
\label{thm:trend:filter}
There exists an adversarially robust turnstile streaming algorithm that computes a $(1+\eps)$-approximation to the $k$-th order total variation norm $\|\bx\|_{TV^{(k)}} = \|\bD^{(k)}\bx\|_1$, using $\poly\left(\frac{1}{\eps},\log n\right)$ bits of space.
\end{theorem}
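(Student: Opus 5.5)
The approach is to reduce the statement directly to robust $F_1$ estimation, \Cref{thm:lp:small} with $p=1$, applied to a virtual stream. For $k \ge 1$ the operator $\bD^{(k)} = (\bD^{(1)})^k$ maps $\mathbb{R}^n$ to $\mathbb{R}^{n-k}$. Its entries are the signed binomial coefficients $(-1)^j\binom{k}{j}$ arranged in a banded Toeplitz pattern, so it is explicit and needs no advice string. It is also linear, and $\|\bx\|_{TV^{(k)}} = \|\bD^{(k)}\bx\|_1$ holds exactly, i.e., it is an isometric embedding into $L_1^{n-k}$. Since the distortion is exactly $1$, the $(1+\eps)$-approximation of the robust $F_1$ algorithm transfers with no loss. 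This is the mechanism of \Cref{cor:sketch:to:alg}, but the AKR embedding of \Cref{thm:akr} and the discretization of \Cref{lem:embedding:properties} are not needed.

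\textbf{Step 1: translating updates.} Each update $(a_t,\Delta_t)$ to $\bx$ becomes the update $\Delta_t \cdot \bD^{(k)}\be_{a_t}$ to $\by = \bD^{(k)}\bx$. This consists of at most $k+1$ single-coordinate updates $(a_t - j,\ (-1)^j\binom{k}{j}\Delta_t)$ for the indices $j$ that are in range. The algorithm computes these on the fly and feeds them to a single instance of the robust $F_1$ algorithm over universe $[n-k]$. After processing all of them, it outputs that instance's estimate.

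\textbf{Step 2: robustness.} The map from the original stream to the virtual stream is deterministic and public. Any adaptive adversary against our algorithm therefore induces an adaptive adversary against the $F_1$ instance. The latter sees exactly the same outputs, since we report its output only after each batch of $k+1$ updates, and an adversary may always choose not to look at intermediate outputs. Hence the guarantee of \Cref{thm:lp:small} at every time step gives $(1\pm\eps)\|\by\|_1 = (1\pm\eps)\|\bx\|_{TV^{(k)}}$ at every time of the original stream.

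\textbf{Step 3: parameters and space.} This is the only point needing care. The virtual stream has length $m' \le (k+1)m$. Its entries are bounded by $2^k \cdot \poly(n)$, since $|y_i| \le \sum_j \binom{k}{j}|x_{i+j}| \le 2^k\|\bx\|_\infty$. \Cref{thm:lp:small} requires $m' = \poly(n)$ and entries bounded by $\poly(n)$. Both hold if $k = O(\log n)$, and in particular for constant $k$, which is the intended regime (for larger $k$ the bound picks up a $\poly(k)$ factor through the $\log$ of the magnitude bound). The space is then $\poly(1/\eps, \log n)$. Note that $\|\by\|_1$ may be $0$ even when $\bx \ne 0$, for instance when $\bx$ is a polynomial of degree less than $k$. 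Here $\|\cdot\|_{TV^{(k)}}$ is only a seminorm, but this is harmless because the multiplicative guarantee concerns $\|\by\|_1$ itself, and the zero case is handled by the lowest level of the multiplicative-error framework of \Cref{lem:mult:error}, as in \Cref{thm:lp:small}.
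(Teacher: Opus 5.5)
Your proposal is correct and takes essentially the paper's approach: $\bD^{(k)}$ is an explicit linear map that embeds $\|\cdot\|_{TV^{(k)}}$ isometrically into $L_1$, so running the robust $F_1$ algorithm of \Cref{thm:lp:small} on the virtual stream $\bD^{(k)}\bx$ gives the result, which is the mechanism behind \Cref{cor:sketch:to:alg}. Your added bookkeeping goes beyond what the paper writes out. This includes the update translation, the simulation argument for robustness, and the observation that entries grow like $2^k$, so the stated $\poly(1/\eps,\log n)$ bound implicitly needs $k = O(\log n)$ or otherwise picks up a $\poly(k)$ factor.
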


\paragraph{Robust Earth Mover distance (EMD).}
Earth Mover distance (EMD) is a common metric for comparing two distributions over a discrete grid $[\Delta]^d$, so that $n = \Delta^d$. 
Intuitively, EMD quantifies the minimal ``work'' needed to transform one distribution into another, where work is the amount of mass moved times the distance the mass is moved. 
We interpret $\|\bx\|_{\EMD}$ as the Earth Mover Distance between the signed measure $\bx$ and the zero vector $\mathbf{0}$, i.e., $\|\bx\|_{\EMD} := \EMD(\bx, \mathbf{0})$. 
Equivalently, for two distributions $\mu,\nu$, we have $\EMD(\mu,\nu) = \|\mu - \nu\|_{\EMD}$. 
Perhaps the most well-used embeddings for EMD are based on randomly shifted quadtrees and give $\O{d\log\Delta}$ distortion, but provide a \emph{``for each''} guarantee. 
That is, for a fixed signal $\bx$, the embedding preserves distances in expectation over its random choices. 
However, this randomized embedding is insufficient for our purposes in the adversarially robust streaming setting, as an adaptive adversary can exploit the particular random shift to create worst-case distortions.

Instead, consider the following construction of a fully deterministic embedding. 
Let $\bA_{\bs} : \mathbb{R}^n \to \mathbb{R}^K$ denote the linear embedding induced by a single quadtree shift $\bs$, with $K = \O{n}$. 
We define the deterministic, concatenated embedding
\[\bA\bx = \frac{1}{n} \bigoplus_{s \in [\Delta]^d} \bA_{\bs}\bx,\]
which aggregates the contributions from all possible shifts. 
By linearity of expectation, the $L_1$ norm of $\bA\bx$ exactly matches the expected value over all shifts:
\[\|\bA\bx\|_1 = \frac{1}{n} \sum_{\bs \in [\Delta]^d} \|\bA_{\bs}\bx\|_1 = \EEx{\bs}{\|\bA_{\bs}\bx\|_1}.\]
This gives a deterministic, universal \emph{``for all''} guarantee, so that for all $\bx\in\mathbb{R}^n$, we have
\[\|\bx\|_{\EMD} \le \|\bA\bx\|_1 \le \O{d \log \Delta}\|\bx\|_{\EMD}.\]
Although the embedding dimension grows from $K=\O{n}$ to $M=nK=\O{n^2}$, the resulting $L_1$ estimation algorithm only uses space polylogarithmic in the dimension. 
Hence by applying \Cref{cor:sketch:to:alg}, we have:
\thmemd*

\paragraph{Robust $k$-median clustering.}
Next, we consider the $k$-median clustering problem over the grid $[\Delta]^d$, where a dataset is represented as a multiset of points, or equivalently, a mass vector $\mu \in \mathbb{R}^{[\Delta]^d}$. 
The $k$-median objective seeks a set of $k$ centers $\calC\subseteq [\Delta]^d$ minimizing the total $L_2$ distance from each point to its nearest center. 
Equivalently, for an assignment vector $\nu_{\calC}$ that routes each unit of mass in $\mu$ to its closest center in $\calC$, the $k$-median cost can be expressed as an Earth Mover Distance:
\[\Cost_k(\mu) = \min_{\calC: |\calC|=k} \EMD(\mu, \nu_{\calC}).\]
Thus, to approximate this quantity in the streaming setting, previous works often utilize the standard embedding of $\EMD$ into $L_1$ via randomly shifted quadtrees. 
Fix a shift $s \in \mathbb{Z}^d$ and suppose without loss of generality that $\Delta=2^\ell$ for some integer $\ell>0$.  
Then for each scale $t \in \{0,1,\ldots,\ell\}$, define a grid $\mathcal{G}_{s,t}$ with side length $2^t$. 
For a mass vector $\mu$, let $G_{s,t}\mu$ denote the vector recording the total mass in each cell of $\mathcal{G}_{s,t}$. 
Then the embedding is defined by concatenation across scales:
\[G_s \mu = (G_{s,0}\mu)\circ (2 \cdot G_{s,1}\mu)\circ \cdots \circ (2^t \cdot G_{s,t}\mu)\circ \cdots \circ (2^\ell \cdot G_{s,\ell}\mu).\]
The key intuition is that the contribution of a unit of mass to the $L_1$ norm $\|G_s(\mu - \nu)\|_1$ is determined by the finest scale at which the mass assigned in $\mu$ and $\nu$ fall into different cells. 
Because coarser grids are weighted more heavily, this construction ensures that the $L_1$ norm faithfully captures transportation cost. 
In particular, this embedding provides an upper bound on $\EMD(\mu,\nu)$ up to a factor of $\O{d \log \Delta}$.

As in the EMD setting, to obtain adversarial robustness we derandomize the construction by concatenating over all shifts, yielding a deterministic embedding $\bA$ with a \emph{``For All''} guarantee. 
Applying \Cref{cor:sketch:to:alg} to the embedded vector allows us to approximate $\|\bA(\mu - \nu_{\calC})\|_1$ for all candidate center sets $\calC$ simultaneously in sublinear space. 

Finally, we recall that there exists a collection $\calC$ of candidate sets of centers and assigned masses such that $|\calC|\le(\Delta m)^{\O{kd}}$, one of which can be used to find a constant-factor approximation to the optimal $k$-median clustering cost~\cite{BackursIRW16,Cohen-AddadWZ23}. 
For our purposes, it suffices to enumerate over this candidate set, though we remark that there are more efficient top-down search processes to find a good candidate set~\cite{BackursIRW16,Cohen-AddadWZ23}. 
Therefore, we have:
\thmkmedian*

\subsection{Applications to Entropy Estimation}
\label{sec:entropy}
In this section, we leverage our adversarially robust $F_p$ estimation algorithms to obtain a black-box procedure for estimating entropy. 
Recall that given a frequency vector $\bv\in\mathbb{R}^n$, the Shannon entropy is defined as $H(\bv)=-\sum_{i=1}^n v_i\log v_i$. 
We first recall the following relationship between the Shannon entropy and its exponentiation:
\begin{observation}
\label{obs:entropy:addmult}
An $\eps$-additive approximation to the Shannon entropy $H(v)$ is a $(1+\eps)$-multiplicative approximation to $h(v):=2^{H(v)}$ and vice versa. 
\end{observation}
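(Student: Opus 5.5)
The plan is to unwind the definitions, since the statement is essentially the elementary conversion between additive error on a logarithm and multiplicative error on its exponential. Write $h(\bv)=2^{H(\bv)}$, let $\widetilde H$ be an estimate of $H(\bv)$, and set $\widetilde h=2^{\widetilde H}$. Then $\widetilde h/h(\bv)=2^{\widetilde H-H(\bv)}$. The whole argument comes down to comparing $|\widetilde H-H(\bv)|$ with $|\log_2(\widetilde h/h(\bv))|$, which are the same quantity. So I will prove the two directions separately, both up to constant rescaling of $\eps$, which is how the observation should be read.

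\emph{Additive to multiplicative.} Suppose $|\widetilde H-H(\bv)|\le\eps$. Then
\[
2^{-\eps}\le \widetilde h/h(\bv)\le 2^{\eps}.
\]
For $\eps\in(0,1)$, convexity of $x\mapsto 2^x$ on $[0,1]$ gives $2^{\eps}\le 1+\eps$, and also $2^{-\eps}\ge 1-\eps\ln 2\ge 1-\eps$. Therefore $\widetilde h\in(1\pm\eps)\,h(\bv)$.

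\emph{Multiplicative to additive.} Suppose instead $\widetilde h\in(1\pm\eps)\,h(\bv)$, and define $\widetilde H=\log_2\widetilde h$. Then
\[
|\widetilde H-H(\bv)|\le \max\bigl(\log_2(1+\eps),\,-\log_2(1-\eps)\bigr).
\]
The first term is at most $\eps/\ln 2\le 2\eps$. The second is at most $\frac{\eps}{(1-\eps)\ln 2}=\O{\eps}$ once $\eps\le 1/2$. Replacing $\eps$ by a suitable constant multiple $c\eps$ then yields exactly $\eps$ additive error.

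The only delicate point is this constant slack. For $\eps$ near $1$, the bound $-\log_2(1-\eps)$ blows up, so I will restrict to $\eps\le 1/2$. I will also state that approximation factors are interpreted up to constant rescaling of $\eps$, consistent with how the paper later uses the observation (for example, in combining it with the robust $F_p$ estimators).
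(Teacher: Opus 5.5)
Your proposal is correct. The paper states this observation without proof, treating it as immediate from $2^{\widetilde H}/2^{H(v)}=2^{\widetilde H-H(v)}$. Your two-direction unwinding, using $1-\eps\le 2^{-\eps}$, $2^{\eps}\le 1+\eps$ and $\log_2(1+\eps)\le \eps/\ln 2$, is exactly the intended justification.

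Keep your caveat about the constant slack, because it is genuinely needed. Since $\log_2(1+\eps)\ge\eps$ on $(0,1]$, the ``vice versa'' direction holds only after rescaling $\eps$ by a constant. That is also the direction the paper actually uses for entropy, and the rescaling is harmless there.
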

In light of this observation, it suffices to focus on computing a $(1+\eps)$-multiplicative approximation to $h(v)=2^{H(v)}$. 
To that end, we recall to following structural property that facilitates entropy approximation. 
\begin{lemma}[Section 3.3 in~\cite{HarveyNO08}]
\label{lem:entropy:reduction}
Let $k=\log\frac{1}{\eps}+\log\log m$ and $\eps'=\frac{\eps}{12(k+1)^3\log m}$. 
There exists an efficiently computable collection $\{y_0,\ldots,y_k\}$ with each $y_i\in(0,2)$, together with a deterministic procedure that, given $(1+\eps')$-approximations to $F_{y_i}(v)$ for all $i$, outputs a $(1+\eps)$-approximation to $h(v)=2^{H(v)}$. 
\end{lemma}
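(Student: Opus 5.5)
Since this is the reduction of \cite{HarveyNO08}, the plan is to recover $H(v)$ as a derivative of a smooth function of the moments and to approximate that derivative by polynomial interpolation at points $y_i$ near $1$. Let $p_i=|v_i|/F_1(v)$ and define
\[g(x)=\ln\Big(\sum_i p_i^{1+x}\Big)=\ln F_{1+x}(v)-(1+x)\ln F_1(v).\]
Then $g(0)=0$ and $g'(0)=\sum_i p_i\ln p_i=-H(v)\ln 2$, so $h(v)=2^{H(v)}=e^{-g'(0)}$. It therefore suffices to estimate $g'(0)$ to additive error $O(\eps)$. Exponentiating then turns this additive error into the $(1+\eps)$-multiplicative guarantee, which is the content of \Cref{obs:entropy:addmult}.

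The key steps, in order, are as follows.
\begin{enumerate}
\item Choose $k+1$ Chebyshev nodes $x_0,\ldots,x_k$ in a small interval $[-\ell,\ell]$ around $0$, with $\ell=\Theta\left(\frac{1}{(k+1)\log m}\right)$ or similar, and set $y_i=1+x_i\in(0,2)$.
\item Convert the $(1+\eps')$-approximations of $F_{y_i}(v)$, together with an approximation of $F_1$ (taking $1$ among the nodes, or running one extra estimator), into estimates $\tilde g(x_i)$ with additive error $O(\eps')$. This uses only that $\ln(1+\eps')=O(\eps')$ and $|1+x_i|\le 2$.
\item Let $\tilde P$ be the degree-$k$ interpolating polynomial of the values $\tilde g(x_i)$, computed deterministically, and output $e^{-\tilde P'(0)}$.
\end{enumerate}

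The error splits into two parts.
\begin{itemize}
\item \emph{Approximation error} $|P'(0)-g'(0)|$, where $P$ interpolates the exact values $g(x_i)$. I would bound this with the Taylor/interpolation remainder, which needs bounds on $g^{(j)}$ on $[-\ell,\ell]$. Since $p_i\ge 1/m$, each $\ln p_i$ lies in $[-\ln m,0]$, and the derivatives of $\sum_i p_i^{1+x}$ are sums of $p_i^{1+x}(\ln p_i)^j$, bounded by $(\ln m)^j$ times the sum itself. This gives $|g^{(j)}|\le j!\,(O(\log m))^j$ via the cumulant-generating-function structure of $g$. With $\ell\log m$ a small constant, the remainder decays like $2^{-k}$, and the choice $k=\log\frac1\eps+\log\log m$ makes it at most $\eps/2$.
\item \emph{Noise amplification} $|\tilde P'(0)-P'(0)|$. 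By Markov brothers' inequality applied to the interpolant of the error values, which has sup-norm at most the Lebesgue constant $O(\log k)$ times $O(\eps')$, the derivative is at most $O(k^2/\ell)\cdot O(\log k)\cdot\eps'$. This is $O\big((k+1)^3\log m\cdot\eps'\big)\le\eps/2$ for the stated $\eps'=\frac{\eps}{12(k+1)^3\log m}$.
\end{itemize}

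The main obstacle I expect is the derivative bound on $g$, needed for the remainder term with explicit constants, and tuning $\ell$ so that both error terms balance to the stated $\eps'$. For the derivative bound I would first try writing $g(x)=\ln\mathbb{E}_{i\sim p}[e^{x\ln p_i}]$, so that $g$ is a cumulant generating function of a random variable supported in $[-\ln m,0]$, and bound its Taylor coefficients by analyticity in a complex disk of radius $\Theta(1/\log m)$.
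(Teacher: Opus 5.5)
Your route is sound, but it is not the construction the paper imports from \cite{HarveyNO08}. There, the Chebyshev--Lobatto points $\cos(i\pi/k)$ are mapped by $f$ onto the one-sided interval $[1-\ell,\,1-\ell/(2k^2+1)]$ with $\ell=\frac{1}{2(k+1)\log m}$, so no node sits at $y=1$. One interpolates an entropy-type function of $y$ built from the $F_{y_i}$ (R\'enyi/Tsallis, whose value at $y=1$ is $H$), which means dividing by $y_i-1$ and paying roughly $k^2/\ell$ at the nearest node. One then \emph{extrapolates} to $y=1$, which corresponds to $z=1+1/k^2$, just outside $[-1,1]$; there the extremal growth of $T_k$ keeps the amplification $O(1)$.

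You instead interpolate the cumulant generating function $g$ itself on a symmetric interval containing $y=1$ as a node, and read off a \emph{derivative} at an interior point. Take $k$ even so that $1$ is a node; then $\tilde g(0)=0$ exactly and the lemma's inputs supply $F_1$. Your analyticity bound is valid: for $|x|\le\frac{1}{2\ln m}$ one has $|\mathbb{E}[e^{xZ}]-1|\le e^{1/2}-1$, so $g$ is bounded and analytic there, and Cauchy estimates give $|g^{(j)}|\le j!\,(O(\log m))^j$ on $[-\ell,\ell]$. The one-sided scheme needs only moments with $y_i<1$ and evaluates a value rather than a derivative. Yours avoids dividing by $y_i-1$ and avoids extrapolation, and it treats both error terms with the same Lebesgue-constant-plus-polynomial-inequality argument.

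Two of your quantitative claims need repair.

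\textbf{Noise term.} With $\ell=\Theta(\frac{1}{(k+1)\log m})$, the bound $\frac{k^2}{\ell}\cdot O(\log k)\cdot\eps'$ equals $\Theta\big((k+1)^3\log m\cdot\log k\big)\cdot\eps'$. For the stated $\eps'$ this is $\Theta(\eps\log k)$, not $\eps/2$, because you dropped the Lebesgue constant. Fix it by using Bernstein's inequality at the midpoint, $|Q'(0)|\le\frac{k}{\ell}\|Q\|_{[-\ell,\ell]}$, instead of Markov's. This saves a factor $k$, and with $\ell=\frac{1}{2(k+1)\log m}$ the noise becomes $\frac{k\Lambda_k}{2(k+1)^2}\eps=O(\frac{\log k}{k})\eps$.

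\textbf{Approximation term.} A remainder decaying only like $2^{-k}=\eps/\log m$ is not enough. With $0$ a node, the derivative error is $g[x_0,\ldots,x_k,0]\,\omega'(0)$, and $|\omega'(0)|\approx(k+1)(\ell/2)^k$ multiplies divided differences of size $(O(\log m))^{k+1}$. This leaves a prefactor of order $(k+1)\log m$ in front of $(O(\ell\log m))^k$, hence $\Theta(k\eps)$ at decay rate $2^{-k}$. What rescues you is your step-1 choice $\ell\log m=\Theta(1/k)$, which makes the remainder $k^{-\Omega(k)}$; a sufficiently small constant value of $\ell\log m$, giving $4^{-k}$ decay, would also do. Your bullet's phrase ``$\ell\log m$ a small constant, decays like $2^{-k}$'' conflicts with that choice and, taken literally, does not close. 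With both adjustments, each error term is at most $\eps/2$ for all $k$ beyond a small constant.
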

The points $\{y_0,\ldots,y_k\}$ from \Cref{lem:entropy:reduction} can be constructed explicitly as described in~\cite{HarveyNO08}. 
Let $\ell=\frac{1}{2(k+1)\log m}$ and define 
\[f(z)=\frac{(k^2\ell)z-\ell(k^2+1)}{2k^2+1}.\]
Then each $y_i$ is given by $y_i=1+f\left(\cos\left(\frac{i\pi}{k}\right)\right)$, which allows the entire set to be computed in linear time. 
Finally, a $(1+\eps)$-multiplicative estimate of $h(v)=2^{H(v)}$ is obtained from $2^{P(0)}$, where $P(x)$ is the degree-$k$ polynomial interpolating the points $\{(y_i, F_{y_i}(v))\}_{i=0}^k$. 
Crucially, since $y_i\in(0,2)$ for each $i$, we can apply our $F_p$ estimation algorithm to get an additive $\eps$-approximation to the Shannon entropy:
\thmentropy*

\subsection{Embedding into \texorpdfstring{$L_2^2$}{L22}}
\label{sec:huber-loss}
In this section, we show that our algorithm for $L_p$ estimation, $p \in [0,2)$ can be extended to a broader class of functions. 
In particular, consider the task of estimating $\norm{\bx}_g := \sum_i g(x_i)$ for some specific function $g$. 
We show that if $g$ has the form $g(t) = f(t^2)$ for some Bernstein function $f$ (\Cref{def:bernstein-function}), then we can design an adversarially robust streaming algorithm which computes a $(1 \pm \eps)$-approximation to $\norm{\bx}_g$ at each time in the adaptive stream, using $\poly\left(\frac{1}{\eps},\log n\right)$ bits of space. 
We note that this definition captures many interesting choices of the function $g$, which are relevant for various applications.

\begin{itemize}
\item 
(Pseudo–Huber Loss): $g_{\tau}(x) =  \tau(\sqrt{1 + (x / \tau)^2} - 1)$ 
\item 
(Cauchy/Lorentzian Loss): $g_{\tau}(x) = \log (1 + x^2 / \tau)$
\item 
(Generalized Charbonnier): $g_{\tau} = (1 + x^2 / \tau)^\beta - 1$ for $0 < \beta \le 1$.
\item 
(Welsch/Leclerc Loss): $g_{\tau}(x) = 1 - e^{-x^2 / \tau}$ 
\item 
(Geman–McClure loss):  $ \displaystyle g_{\tau} =\frac{x^2}{x^2 + \tau}$
\end{itemize}
 
We first recall the following classical result in~\cite{schoenberg1938metric}.
\begin{definition}[Negative Type (NT) Function]
\label{def:nt}
A symmetric continuous function $\psi : \mathbb{R}^d \to \mathbb{R}$ is NT if $\psi(0) = 0$, $\psi(x) \geq 0$, and for any $n \in \mathbb{N}$, $\{x_i\} \subset \mathbb{R}^d$ and $\{\alpha_i\} \subset \mathbb{R}$ such that $\sum \alpha_i = 0$,
\[\sum_{i,j} \alpha_i\alpha_j\psi(x_i - x_j) \leq 0.\]
\end{definition}

\begin{theorem}[\cite{schoenberg1938metric}]
\label{thm:embedding}
Let $\psi: \mathbb{R}^d \to \mathbb{R}$ be a continuous function satisfying the conditions of \Cref{def:nt}. Then the following are equivalent:
\begin{enumerate}
\item 
$\psi$ is of negative type.
\item 
There exists a Hilbert space $H$ and a mapping $\Phi$ such that $\psi(x - y) = \|\Phi(x) - \Phi(y)\|_H^2$.
\item 
$f_t(x) = e^{-t\psi(x)}$ is Positive Definite for all $t > 0$.
\end{enumerate}
\end{theorem}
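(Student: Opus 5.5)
This is Schoenberg's theorem. I would prove the cycle (1)$\Rightarrow$(3)$\Rightarrow$(2)$\Rightarrow$(1), leaving the hard analytic part in (1)$\Rightarrow$(3).

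\emph{(2)$\Rightarrow$(1).} This step is a direct expansion. Take points $x_1,\dots,x_m$ and weights $\alpha_i$ with $\sum_i\alpha_i=0$. Write $\|\Phi(x_i)-\Phi(x_j)\|^2=\|\Phi(x_i)\|^2+\|\Phi(x_j)\|^2-2\langle\Phi(x_i),\Phi(x_j)\rangle$. The two diagonal sums vanish against $\sum_j\alpha_j=0$. This leaves $\sum_{i,j}\alpha_i\alpha_j\psi(x_i-x_j)=-2\|\sum_i\alpha_i\Phi(x_i)\|^2\le 0$. Symmetry, $\psi(0)=0$ and $\psi\ge0$ are immediate from the representation.

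\emph{(1)$\Rightarrow$(2).} I would carry out this step via a GNS-type construction relative to the base point $0$. Define the kernel $K(x,y)=\tfrac12\left(\psi(x)+\psi(y)-\psi(x-y)\right)$. To show $K$ is positive semidefinite, take arbitrary $c_1,\dots,c_m$ at points $x_1,\dots,x_m$. Append the point $x_0=0$ with weight $c_0=-\sum_i c_i$, so the weights sum to zero. Expanding the negative-type inequality for this augmented family, and using $\psi(0)=0$ and symmetry, gives exactly $\sum_{i,j}c_ic_jK(x_i,x_j)\ge0$. The Moore--Aronszajn (RKHS) theorem then yields a Hilbert space $H$ and a map $\Phi$ with $\langle\Phi(x),\Phi(y)\rangle=K(x,y)$. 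It follows that $\|\Phi(x)-\Phi(y)\|^2=K(x,x)+K(y,y)-2K(x,y)=\psi(x-y)$. Taking $y=0$ in the definition of $K$ shows $\Phi(0)=0$.

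\emph{(2)$\Rightarrow$(3).} Given $\Phi$, write $e^{-t\psi(x_i-x_j)}=e^{-t\|\Phi(x_i)\|^2}e^{-t\|\Phi(x_j)\|^2}e^{2t\langle\Phi(x_i),\Phi(x_j)\rangle}$. The outer factors form a diagonal congruence, which preserves positive semidefiniteness. The middle factor is $\sum_k\frac{(2t)^k}{k!}G^{\circ k}$, where $G$ is the Gram matrix. Each Hadamard power $G^{\circ k}$ is positive semidefinite by the Schur product theorem, and limits of nonnegative combinations of positive semidefinite matrices remain positive semidefinite. Hence $e^{-t\psi}$ is positive definite.

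\emph{(3)$\Rightarrow$(1).} Take weights with $\sum\alpha_i=0$. Positive definiteness gives $\sum_{i,j}\alpha_i\alpha_j e^{-t\psi(x_i-x_j)}\ge0$. Since $\sum_{i,j}\alpha_i\alpha_j=0$, this is equivalent to
\[
\sum_{i,j}\alpha_i\alpha_j\frac{1-e^{-t\psi(x_i-x_j)}}{t}\le0 .
\]
Letting $t\to0^+$ gives $\sum_{i,j}\alpha_i\alpha_j\psi(x_i-x_j)\le0$. To close the cycle I also need (1)$\Rightarrow$(3), which I route as (1)$\Rightarrow$(2)$\Rightarrow$(3); this completes the equivalences.

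The only genuinely nontrivial step is the passage from negative type to a Hilbert embedding. Its crux is the base-point trick that converts the conditionally negative definite $\psi$ into the positive semidefinite kernel $K$. I expect the main care to go into checking that expansion's signs. If $H$ must be separable or concrete, I would take the completion of finitely supported functions modulo null vectors. Continuity of $\psi$ is used only to make $\Phi$ continuous, which the statement does not require.
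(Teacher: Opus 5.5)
Your proof is correct; each implication you actually carry out checks:
\begin{itemize}
\item for (1)$\Rightarrow$(2), the appended-point expansion shows $K(x,y)=\tfrac12\left(\psi(x)+\psi(y)-\psi(x-y)\right)$ is positive semidefinite, with symmetry of $\psi$ used exactly where you say, and Moore--Aronszajn then gives the map $\Phi$ with $\Phi(0)=0$;
\item for (2)$\Rightarrow$(3), the three-factor factorization together with Schur products and the exponential series does the job;
\item for (3)$\Rightarrow$(1), dividing by $t$ and letting $t\to0^+$ suffices.
\end{itemize}

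The paper does not prove this statement at all. It quotes Schoenberg and uses the result only as a black box in the Bernstein-function section, so there is no competing route to compare; yours is the standard textbook argument.

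Two cosmetic fixes:
\begin{itemize}
\item Your opening announces the cycle (1)$\Rightarrow$(3)$\Rightarrow$(2)$\Rightarrow$(1), with a ``hard analytic part'' in (1)$\Rightarrow$(3). What you actually prove is (1)$\Leftrightarrow$(2), (2)$\Rightarrow$(3) and (3)$\Rightarrow$(1). That already closes all the equivalences with no analytic input, so rewrite the roadmap (and the conflicting closing remark about which step is hard) to match.
\item If $0$ already occurs among the $x_i$, the appended point $x_0=0$ duplicates it. This is harmless, since repeated points can be merged by adding their weights, but it is worth a sentence.
\end{itemize}

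Finally, reading ``positive definite'' in the semidefinite (Bochner) sense is the right choice. Under the strict reading, (2)$\Rightarrow$(3) already fails for $\psi\equiv0$, whose kernel $e^{-t\psi}$ is the all-ones matrix.
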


\begin{definition}[Bernstein function, e.g., \cite{schilling2012bernstein}]
\label{def:bernstein-function}
A function $f:(0,\infty)\to[0,\infty)$ is called a \emph{Bernstein function} if $f\in C^\infty(0,\infty)$ and for all $n\in\mathbb{N}_0$ and all $x>0$,
\[(-1)^n f^{(n+1)}(x) \ge 0.\]
where $f^{(n)}$ denotes the $n$-th derivative of the function $f$. 
\end{definition}

\begin{theorem}[Theorem 3.1, \cite{KUHN201913}]
\label{thm:bernstein-negative}
Let $f: [0, \infty) \to [0, \infty)$ and $f(0) = 0$. Then the following two statements are equivalent:
\begin{enumerate}
\item 
$f$ is a Bernstein function.
\item 
For $\xi \in \mathbb{R}^k$, $f(|\xi|^2)$ is a continuous negative type function for all $k \ge 1$. 
\end{enumerate}
\end{theorem}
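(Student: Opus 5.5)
The plan is to go through the Lévy--Khintchine representation of Bernstein functions in one direction, and through Schoenberg's characterization of radial positive definite functions together with \Cref{thm:embedding} in the other.

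For $(1)\Rightarrow(2)$, I would start from the standard integral representation of a Bernstein function: since $f(0)=0$,
\[f(s)=bs+\int_{(0,\infty)}(1-e^{-ts})\,\mu(dt),\]
with $b\ge 0$ and $\mu$ a measure satisfying $\int\min(1,t)\,\mu(dt)<\infty$. The proof then checks each building block.
\begin{enumerate}
\item $\psi(\xi)=|\xi|^2$ is of negative type. This follows because it equals $\|\Phi(x)-\Phi(y)\|^2$ with $\Phi$ the identity, or directly: if $\sum\alpha_i=0$ then $\sum_{i,j}\alpha_i\alpha_j|x_i-x_j|^2=-2\left|\sum\alpha_ix_i\right|^2\le0$.
\item For each $t>0$, $1-e^{-t|\xi|^2}$ is of negative type. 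The Gaussian kernel is positive definite, so if $\sum\alpha_i=0$ then
\[\sum_{i,j}\alpha_i\alpha_j\left(1-e^{-t|x_i-x_j|^2}\right)=-\sum_{i,j}\alpha_i\alpha_je^{-t|x_i-x_j|^2}\le0.\]
\end{enumerate}
Functions of negative type form a convex cone that is closed under pointwise limits. Hence the nonnegative combination $b|\xi|^2+\int(1-e^{-t|\xi|^2})\,\mu(dt)$ is again of negative type. Symmetry, $\psi(0)=0$, nonnegativity and continuity are immediate, using dominated convergence for the integral, and none of this depends on $k$.

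For $(2)\Rightarrow(1)$, I would apply \Cref{thm:embedding}. For every $t>0$ and every $k$, the function $\xi\mapsto e^{-tf(|\xi|^2)}$ is positive definite on $\mathbb{R}^k$. This function is radial and positive definite in every dimension. By Schoenberg's theorem on such functions (the limit $k\to\infty$ of Bochner's theorem, using that the uniform measure on the sphere of radius $\sqrt{k}$ projects to Gaussians), it follows that $g_t(s):=e^{-tf(s)}$ is a mixture of exponentials $\int e^{-us}\,\nu_t(du)$. In other words, $g_t$ is completely monotone on $[0,\infty)$ for every $t>0$.

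Next, set $h_t(s)=\frac{1-g_t(s)}{t}$. Then $h_t\ge0$ and $h_t'=-g_t'/t$. Because $g_t$ is completely monotone, $-g_t'$ is completely monotone, so each $h_t$ is a Bernstein function. As $t\to0$ we have $h_t(s)\to f(s)$ pointwise. Bernstein functions are closed under pointwise limits: by Bernstein's theorem, each $h_t$ corresponds to a triple $(a_t,b_t,\mu_t)$, and convergence of the functions gives vague convergence of these measures, so the limit has the same form. Therefore $f$ is Bernstein.

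I expect the main obstacle to be this second direction. It needs Schoenberg's characterization of radial functions that are positive definite in all dimensions, and it needs the closure of the Bernstein class under pointwise limits, including a careful handling of mass escaping to $0$ or $\infty$ in the Lévy measure. I would cite both facts from \cite{schilling2012bernstein} rather than reprove them. In particular, the argument genuinely uses the hypothesis ``for all $k\ge1$'': positive definiteness in a single fixed dimension is not enough to conclude complete monotonicity.
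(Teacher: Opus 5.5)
The paper does not prove this statement. It imports it as Theorem~3.1 of \cite{KUHN201913} and uses it only as a black box: coordinate-wise (so with $k=1$), and only in the direction $(1)\Rightarrow(2)$, to build the embedding of \Cref{thm:embedding_L2}. There is therefore no in-paper route to compare with. Judged on its own, your proof is correct and follows the standard route.

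In $(1)\Rightarrow(2)$ you do not need closure of negative type functions under limits. Since $0\le 1-e^{-ts}\le\max(1,s)\min(1,t)$, each of the finitely many integrals is finite. You can therefore integrate the pointwise-in-$t$ inequality $\sum_{i,j}\alpha_i\alpha_j(1-e^{-t|x_i-x_j|^2})\le 0$ directly against $\mu$. The same bound supplies the domination you need for continuity.

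In $(2)\Rightarrow(1)$, the step from ``$e^{-tf}$ completely monotone for all $t>0$'' to ``$f$ Bernstein'' goes through $f=\lim_{t\to0}(1-e^{-tf})/t$ and closure of Bernstein functions under pointwise limits. This is the standard proof of the characterization ``$f\ge0$ is Bernstein iff $e^{-tf}$ is completely monotone for every $t>0$'' in \cite{schilling2012bernstein}. Citing it, together with Schoenberg's theorem on radial functions positive definite in every dimension \cite{schoenberg1938metric}, is legitimate. Your remark that all $k$ are needed is also right: $f(s)=1-\cos\sqrt{s}$ makes $f(\xi^2)=1-\cos\xi$ of negative type on $\mathbb{R}$, yet $f$ is not monotone.

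One point you should make explicit concerns the convention at $0$. With the paper's \Cref{def:bernstein-function}, Bernstein functions live on $(0,\infty)$. Your step ``since $f(0)=0$ there is no constant term'' then requires reading $f(0)=0$ as $f(0+)=0$, i.e.\ $f$ continuous at $0$. Under the literal reading, $(1)\Rightarrow(2)$ fails: $f=\mathbf{1}_{(0,\infty)}$ is Bernstein on $(0,\infty)$ with $f(0)=0$, but $f(|\xi|^2)$ is not continuous. In the converse direction this continuity is supplied by (2), so nothing else in your argument changes.
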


By applying~\Cref{thm:embedding} and~\Cref{thm:bernstein-negative} coordinate-wise and then taking the sum over all coordinates, we have the following theorem. 
Without loss of generality, we can assume that $\Phi(0) = 0$; otherwise, we can simply replace the embedding $\Phi(\bx)$ by $\Phi(\bx) - \Phi(0)$ to satisfy this condition.

\begin{theorem}  
\label{thm:embedding_L2}
Suppose $f: \mathbb{R}^n \to \mathbb{R}$ satisfies $f(x) = \sum_{i=1}^n g(x_i)$, where $g(x) = h(x^2)$ for some Bernstein function $h$ and $h(0) = 0$.
There exists an isometric embedding $\Phi$ from $\mathbb{R}^n$ into a Hilbert space $H$, such that
\[f(\bu-\bv)=\|\Phi(\bu)-\Phi(\bv)\|_2^2,\qquad \|\Phi(0)\|_2^2=0.\]
\end{theorem}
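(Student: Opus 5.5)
We construct the embedding one coordinate at a time and then take a direct sum. Fix the Bernstein function $h$ with $h(0)=0$. By \Cref{thm:bernstein-negative} applied with $k=1$, the function $\psi(\xi)=h(|\xi|^2)=g(\xi)$ on $\mathbb{R}$ is a continuous negative type function. It satisfies $\psi(0)=h(0)=0$, $\psi\ge 0$ (since $h$ maps into $[0,\infty)$), and it is symmetric. Hence the hypotheses of \Cref{def:nt} hold, and the implication (1)$\Rightarrow$(2) of \Cref{thm:embedding} gives a Hilbert space $H_0$ and a map $\phi:\mathbb{R}\to H_0$ with
\[
g(s-t)=\|\phi(s)-\phi(t)\|_{H_0}^2 \quad \text{for all } s,t\in\mathbb{R}.
\]
We then normalize by replacing $\phi$ with $\phi-\phi(0)$. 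This does not change any differences and ensures $\phi(0)=0$.

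Next we define $H=\bigoplus_{i=1}^n H_0$, the orthogonal direct sum (Hilbert sum) of $n$ copies of $H_0$, and set $\Phi(\bx)=\bigoplus_{i=1}^n \phi(x_i)$. Since the summands are orthogonal, the squared norm splits coordinatewise:
\[
\|\Phi(\bu)-\Phi(\bv)\|_H^2=\sum_{i=1}^n\|\phi(u_i)-\phi(v_i)\|_{H_0}^2=\sum_{i=1}^n g(u_i-v_i)=f(\bu-\bv).
\]
We also get $\Phi(\mathbf{0})=\bigoplus_i\phi(0)=0$, so $\|\Phi(\mathbf{0})\|^2=0$. Together these give exactly the statement, with $\|\cdot\|_2$ read as the Hilbert norm of $H$.

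The only step that needs care is checking that \Cref{thm:bernstein-negative} and \Cref{thm:embedding} fit together. The Bernstein function is defined on $(0,\infty)$, so we need $h$ to extend continuously to $0$ with value $0$. The hypothesis $h(0)=0$ supplies this, and it is what makes $\psi(0)=0$. We also need the negative-type conclusion of \Cref{thm:bernstein-negative} to match the definition in \Cref{def:nt}, including continuity and nonnegativity.

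Beyond that, the direct-sum construction is routine. The dimension of $H_0$, and therefore of $H$, may be infinite, but the statement only asks for a Hilbert space. If a finite-dimensional version were needed for a fixed finite point set, we would instead restrict to the span of the images of those points.
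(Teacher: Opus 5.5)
Your proposal is correct and is essentially the paper's argument. The paper applies \Cref{thm:bernstein-negative} and \Cref{thm:embedding} coordinate-wise, sums over coordinates (your orthogonal direct sum, which you write out explicitly), and normalizes via $\Phi \mapsto \Phi - \Phi(\mathbf{0})$, which is equivalent to your per-coordinate shift $\phi \mapsto \phi - \phi(0)$.
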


To utilize our algorithm for $L_p$ estimation for $p \in[0, 2)$, we also need a (non-robust) algorithm for estimating $\norm{\bx}_g$. 
In the work of~\cite{GribelyukLWYZ26}, the authors give the following lemma, which is based on the zero-one law of the form of the function $g$ in~\cite{BCWY16}.

\begin{lemma}[\cite{GribelyukLWYZ26, BCWY16}]
\label{lem:nra}
Given a function $g: \mathbb{Z}_{n \ge 0} \to \mathbb{R}$ where $g(x) = f(x^2 )$ for some Bernstein function $f$, there exists a one-pass (non-robust) turnstile streaming algorithm that $(1 \pm \eps)$-estimates the value of $\norm{\bx}_g = \sum_{i = 1}^n g(x_i)$ with high probability in space $\poly\left(\frac{1}{\eps},\log n\right)$.
\end{lemma}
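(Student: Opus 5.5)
We prove \Cref{lem:nra} by reducing it to the zero-one law of \cite{BCWY16} for approximating $\sum_i g(x_i)$ with $g$ symmetric and $g(0)=0$. That law states that such a sum admits a $(1\pm\eps)$-approximation in $\poly(\frac{1}{\eps},\log n)$ space on turnstile streams exactly when $g$ is \emph{slow-jumping}, \emph{slow-dropping}, and \emph{predictable}. The proof therefore reduces to checking these three conditions for $g(x)=f(x^2)$ with $f$ Bernstein. The standard machinery then applies directly: heavy-hitter detection via \textsc{CountSketch}, level sets obtained by subsampling, and a recursive-sketch estimator for the $g$-sum.

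\textbf{Step 1: basic shape of $f$.} From the definition, $f\ge 0$, $f'\ge 0$ and $f''\le 0$, so $f$ is nonnegative, nondecreasing and concave on $[0,\infty)$ with $f(0)=0$. Concavity gives subadditivity, $f(a+b)\le f(a)+f(b)$, and also $f(\lambda t)\le \lambda f(t)$ for $\lambda\ge 1$. Consequently $t\mapsto f(t)/t$ is nonincreasing.

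\textbf{Step 2: verify the three conditions.} For \emph{slow-jumping} we need $g(y)\le (y/x)^{2+o(1)} x^{o(1)} g(x)$ for $y\ge x$. This follows from $g(y)=f(y^2)\le (y/x)^2 f(x^2)=(y/x)^2 g(x)$, by Step 1. For \emph{slow-dropping} we need $g(y)\ge g(x)/y^{o(1)}$ for $y\ge x$, and monotonicity of $f$ gives $g(y)\ge g(x)$ outright. The main obstacle is \emph{predictability}: whenever $|y|\le x$ and $|g(x+y)-g(x)|\ge \gamma g(x)$ for small $\gamma$, we must show $g(y)\ge g(x)\cdot x^{-o(1)}$, i.e.\ a small perturbation cannot shift $g$ noticeably unless $g(y)$ is itself non-negligible. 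To establish this, I first use concavity and monotonicity of $f$ to bound
\[
|f((x+y)^2)-f(x^2)|\le f(|(x+y)^2-x^2|)\le f(3x|y|),
\]
where the first inequality is subadditivity applied in both directions. Next I compare $f(3x|y|)$ with $f(y^2)$ using $f(\lambda t)\le\lambda f(t)$, which gives $f(3x|y|)\le \frac{3x}{|y|}f(y^2)$. This yields $g(y)\ge \frac{\gamma|y|}{3x}g(x)$, which is already enough when $|y|\ge x^{1-o(1)}$. When $|y|$ is polynomially smaller than $x$, I would instead use concavity around $x^2$: the increment satisfies $|f((x+y)^2)-f(x^2)|\le f'(x^2)\cdot 3x|y|\le f(x^2)\cdot 3|y|/x$, since $f'(t)\le f(t)/t$ for a concave $f$ with $f(0)=0$. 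Thus a relative change of $\gamma$ forces $|y|\ge \gamma x/3$, and we are back in the first regime. This two-regime argument is the crux, and I expect it to carry the proof. If the precise formulation in \cite{BCWY16} instead requires the ``tractability'' variant (no $x^{o(1)}$ loss), the same inequalities give it with explicit constants depending on $\gamma$.

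\textbf{Step 3: conclude.} Since all entries are integers bounded by $\poly(n)$, the $x^{o(1)}$ factors are absorbed into $\poly(\log n)$. The \cite{BCWY16} algorithm then gives a one-pass linear sketch that $(1\pm\eps)$-approximates $\|\bx\|_g$ with high probability in $\poly(\frac1\eps,\log n)$ space, after amplifying the success probability with a median over $O(\log n)$ copies. This is exactly the statement of \Cref{lem:nra}, as also recorded in \cite{GribelyukLWYZ26}.
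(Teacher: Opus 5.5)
Your proposal takes the same route as the paper, which gives no argument beyond invoking the zero-one law of \cite{BCWY16} (as recorded in \cite{GribelyukLWYZ26}). Your checks of slow-jumping, slow-dropping and predictability for $g(x)=f(x^2)$, using monotonicity, concavity and $f'(t)\le f(t)/t$, are exactly the content that citation leaves implicit; in particular your bound that a perturbation with $|y|\le x^{1-\gamma}$ changes $g(x)$ by only a $1\pm O(x^{-\gamma})$ factor makes predictability essentially vacuous.

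Two small repairs are needed. First, the tangent bound $f((x+y)^2)-f(x^2)\le f'(x^2)\cdot 3x|y|$ holds only for $y\ge 0$. For $y<0$, use $f'((x+y)^2)\le f(x^2)/(x+y)^2\le 4f(x^2)/x^2$ when $|y|\le x/2$, which changes only constants. Second, the statement does not assume $f(0)=0$, so you should first subtract the known constant $n\,f(0^+)$. This leaves a nonnegative $g$-sum with $g(0)=0$, whose $(1\pm\eps)$-approximation gives one for the original sum.
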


Finally, by combining \Cref{thm:embedding_L2} and \Cref{lem:nra}, we obtain adversarially robust turnstile streaming algorithms that compute a $(1+\eps)$-approximation for any $g$-sum $\|\bx\|_g$, where $g(x) = h(x^2)$ for some Bernstein function $h$ and $h(0) = 0$.

\begin{theorem}
Given a function $g: \mathbb{Z}_{n \ge 0} \to \mathbb{R}$ where $g(x) = f(x^2 )$ for some Bernstein function $f$ and $f(0) = 0$, there exists an adversarially robust insertion-deletion streaming algorithm on a stream of length $m$ that with high probability, outputs a $(1 \pm \eps)$-approximation to the value of $\norm{\bx}_g = \sum_{i = 1}^n g(x_i)$. 
For $m = \poly(n)$, the algorithm uses $\poly\left(\frac{1}{\eps},\log n\right)$ bits of space.
\end{theorem}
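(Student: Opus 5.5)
The plan is to rerun the proof of \Cref{thm:lp:small} verbatim. Its only uses of $F_p$ structure were two primitives: an isometric embedding $f$ into a Hilbert space with $f(\mathbf{0})=\mathbf{0}$, and an oblivious linear-sketch estimator $\FPSmall$ for the quantity being embedded. \Cref{thm:embedding_L2} and \Cref{lem:nra} supply the analogues for the $g$-sum. So I would replace $\|\cdot\|_p^p$ by $\|\cdot\|_g$ everywhere and check that each lemma in the analysis still goes through.

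\textbf{Step 1: embedding.} Apply \Cref{thm:bernstein-negative} to the scalar function $x\mapsto h(x^2)$ with $k=1$. This shows it is a continuous negative-type function. \Cref{thm:embedding} then gives a Hilbert map $\Phi_1$ with $g(x-y)=\|\Phi_1(x)-\Phi_1(y)\|^2$, and we translate so that $\Phi_1(0)=0$. Taking the direct sum over the $n$ coordinates gives $\Phi$ with $\|\bu-\bv\|_g=\|\Phi(\bu)-\Phi(\bv)\|_2^2$ and $\|\Phi(\bx)\|_2^2=\|\bx\|_g$. This is \Cref{thm:embedding_L2}.

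\textbf{Step 2: re-verify the lemmas.} With the polarization identity, \DotEst (\Cref{lem:dotest}) now uses the sketch of \Cref{lem:nra}. Its additive error is $\eps(\|\bx\|_g+\|\bq\|_g+\|\bx-\bq\|_g)$. \Cref{lem:recursion:computation}, \Cref{lem:recursion:bounded:norm}, \Cref{lem:condition:number}, \Cref{lem:est:Pi}, \Cref{lem:estlvl}, and \Cref{lem:list:size:ltwo} are purely Hilbert-space arguments about $f(\bz)$, $f(\bq_i)$ and the regularized regression. They use only the isometry and $f(\mathbf{0})=\mathbf{0}$, so they transfer unchanged once the norms $\|f(\cdot)\|_2^2$ are read as $g$-sums. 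The space and robustness counting via bounded computation paths needs a sketch with failure probability $\delta$ at cost $\poly(1/\beta,\log n)\cdot\log(1/\delta)$, which follows from \Cref{lem:nra} by median amplification. The invariant \Cref{invar:output:acc:p} then gives a robust additive-$\eps A$ estimator whenever $\|\bx\|_g\le 10A$.

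\textbf{Step 3: additive to multiplicative.} Apply \Cref{lem:mult:error} with $F=\|\cdot\|_g$, using $K=\O{\frac1\eps\log n}$ scales. This requires $\|\bx\|_g\le\poly(n)$ and a polynomial lower bound on the nonzero values. I expect this to be the main obstacle: unlike $F_p$, the function $g$ may be bounded (Welsch, Geman--McClure) or tiny near $0$, so the list-size potential argument in \Cref{lem:list:size:ltwo}, which uses $\|f(\bz)\|_2^2\le\poly(n)$ and a $\poly(n)$ dynamic range, needs justification. I would handle this first by noting that $x$ is an integer and $h$ is Bernstein, hence monotone and concave with $h(0)=0$. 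For $g\not\equiv 0$ and fixed $g$, this gives $g(1)\le g(x)\le \O{x^2}g(1)$ for all nonzero integers $x$. Hence every nonzero value of $\|\bx\|_g$ lies in $[g(1),\poly(n)g(1)]$, the range needed by both the potential argument and the scale enumeration. Combining the three steps yields the theorem with $\poly(\frac1\eps,\log n)$ bits.
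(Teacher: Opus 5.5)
Your proposal is correct and takes the paper's route: the paper proves this theorem simply by plugging the embedding of \Cref{thm:embedding_L2} and the non-robust sketch of \Cref{lem:nra} into the $F_p$ machinery of \Cref{thm:lp:small}, which is exactly your Steps 1--3. Your extra check, that monotonicity and concavity of the Bernstein function give $g(1)\le g(x)\le x^2 g(1)$ for nonzero integers $x$ and hence a $\poly(n)$ dynamic range for the scale enumeration in \Cref{lem:mult:error}, is a detail the paper leaves implicit rather than a departure from its argument.
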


\section*{Acknowledgements}
Elena Gribelyuk and Huacheng Yu are supported in part by an NSF CAREER award CCF-2339942. 
Honghao Lin was supported in part by a Simons Investigator Award and a CMU Paul and James Wang Sercomm Presidential Graduate Fellowship. 
David P. Woodruff is supported in part by Office of Naval Research award number N000142112647 and a Simons Investigator Award. 
Samson Zhou is supported in part by NSF CCF-2335411 and gratefully acknowledges funding provided by the Oak Ridge Associated Universities (ORAU) Ralph E. Powe Junior Faculty Enhancement Award.

\def\shortbib{0}
\bibliographystyle{alpha}
\bibliography{references}

\end{document}